\newtheorem{theorem}{Theorem}
\newtheorem{lemma}{Lemma}
\newtheorem{cor}{Corollary}
\theoremstyle{definition}
\newtheorem{defin}{Definition}
\newcommand{\hi}{\mathcal{H}} 
\newcommand{\tsh}{\mathcal{T}_s(\hi)} 
\newcommand{\bsh}{\mathcal{B}_s(\hi)} 
\newcommand{\sh}{\mathcal{S}(\hi)} 
\newcommand{\eh}{\mathcal{E}(\hi)} 
\newcommand{\os}{(\Omega,\Sigma)} 
\newcommand{\mos}{\mathcal{M}_{\mathbb R}\os} 
\newcommand{\fos}{\mathcal{F}_{\mathbb R}\os} 
\newcommand{\sos}{\mathcal{S}\os} 
\newcommand{\eos}{\mathcal{E}\os} 
\newcommand{\fii}{\varphi}
\newcommand{\n}{\| \, . \, \|}
\newcommand{\sigtop}[1]{\sigma({#1})} 
\newcommand{\dual}[2]{\langle {#1},{#2}\rangle} 
\newcommand{\tr}[1]{\mathrm{tr} \, {#1}} 
\newcommand{\no}[1]{\left\|{#1}\right\|} 
\newcommand{\kb}[2]{|#1\,\rangle\langle\,#2|} 
\newcommand{\delo}{\delta_{\omega}} 
\begin{document}

\title{From the Attempt of Certain Classical Reformulations of Quantum Mechanics to Quasi-Probability Representations}

\author{Werner Stulpe\thanks{Electronic mail: stulpe@fh-aachen.de}\\
{\small Aachen University of Applied Sciences, J\"ulich Campus, D-52428, Germany}}

\date{}
\maketitle

\begin{abstract}
\noindent The concept of an injective affine embedding of the quantum states into a set of classical states,
i.e., into the set of the probability measures on some measurable space, as well as its relation to
statistically complete observables is revisited, and its limitation in view of a classical reformulation
of the statistical scheme of quantum mechanics is discussed. In particular, on the basis of a theorem concerning
a non-denseness property of a set of coexistent effects, it is shown that an injective classical embedding
of the quantum states cannot be supplemented by an at least approximate classical description of the quantum
mechanical effects. As an alternative approach, the concept of quasi-probability representations
of quantum mechanics is considered.\\
\mbox{}\\
Key words: Statistically complete observables, classical representations, coexistent effects, weak-* denseness,
quasi-probability representations.\\
Running title: Quantum probability in terms of measures and functions.
\end{abstract}

\section{Introduction}

The problem of describing quantum states by (quasi-) probability densities or (quasi-) probability measures
was originated by Wigner's famous paper \cite{wig32} and has been discussed and investigated
by many other authors. One approach of particular interest is that of injective affine representations of
the density operators by probability densities on phase space \cite{ali77a;77b,pru84,sch82,stu92} or, more generally,
by probability measures on some measurable space \cite{bug93,bus04,sin92,stu94;98,stu97}. This procedure
is related to the so-called informationally complete positive operator-valued measures
(informationally complete POVMs) or, in the author's terminology, to the statistically complete observables
\cite{ali77a;77b,bus95,bus89,pru84,sin92,stu97}. Another very interesting approach is given by
a surjective affine mapping from the probability measures on the projective Hilbert space onto the set
of all density operators, thus describing the density operators by equivalence classes of probability measures
on the projective Hilbert space \cite{bel95,bug91;93,bug93,bus04,stu08,stu01}.

The approach related to informationally complete POVMs is currently still a subject of several articles
(e.g., \cite{fil10,fuc02}), however, initiated by quantum information, in most cases
a finite-dimensional Hilbert space is presupposed. In this paper, we revisit that approach
to classical reformulations of the probabilistic frame of quantum mechanics where the main focus
is on infinite-dimensional Hilbert spaces. In the context of such a classical reformulation
the quantum states and effects are represented by probability measures and functions on some measurable space,
respectively. As a central point, we present a theorem on a non-denseness property of a set of coexistent effects;
by means of that property, we show an essential limitation of the considered classical reformulations of
quantum mechanics.

We briefly describe the mathematical framework we are working within and fix our notation. Let a complex separable
Hilbert space $ \hi \neq \{0\} $ be given. We denote the real vector space of the self-adjoint trace-class operators
acting in $ \hi $ by $ \tsh $ and the convex set of the positive trace-class operators of trace $1$ by $ \sh $;
the operators of $ \sh $ are the density operators and describe the quantum states. The pair $ (\tsh,\sh) $
is a base-normed Banach space with closed positive cone, the base norm being the trace norm. We denote
the real vector space of all bounded self-adjoint operators acting in $ \hi $ by $ \bsh $ and the unit operator by
$I$. The pair $ (\bsh,I) $ where $ \bsh $ is equipped with its order relation, is an order-unit normed Banach space
with closed positive cone, the order-unit norm being the usual operator norm. The elements of the order-unit interval
$ \eh := [0,I] $ describe the quantum mechanical effects, i.e., the quantum measurements with only two possible
outcomes, say, `yes' and `no'. As is well known, $ \bsh $ can be considered as the dual space $ (\tsh)' $ where
the duality is given by the trace functional
\[
(V,A) \mapsto \dual{V}{A} := \tr{VA},
\]
$ V \in \tsh $, $ A \in \bsh $. The restriction of this bilinear functional to $ \sh \times \eh $ is
the quantum probability functional; $ \tr{WA} $ is the probability for the outcome `yes' of the effect
$ A \in \eh $ in the state $ W \in \sh $. Thus, $ \dual{\tsh}{\bsh} $ is a dual pair of vector spaces
(in fact a statistical duality \cite{sin92,wer83}) which encompasses the probabilistic structure
of usual quantum mechanics \cite{bus95,dav76,dav70,hol01,lud83}.

We recall that the extreme points of the convex set $ \sh $, i.e., the pure quantum states, are the one-dimensional
orthogonal projections $ P_{\fii}: = \kb{\fii}{\fii} $, $ \fii \in \hi $, $ \no{\fii} = 1 $. The extreme points of
the convex set $ \eh $ are all orthogonal projections of $ \hi $, these are sometimes called {\it sharp} effects
whereas the other effects are called {\it unsharp}.---We also recall that $ \sigtop{\bsh,\tsh} $ is
the weak-* Banach-space topology of $ \bsh $, i.e., the coarsest topology on $ \bsh $ in which
the elements of $ \tsh $, considered as linear functionals on $ \bsh $, are continuous. We call
this topology briefly the \emph{$ \sigma $-topology}; it is the analog of the ultraweak operator topology
$ \sigma(\mathcal{B}(\hi),\mathcal{T}(\hi)) $ which is defined for the corresponding complex vector spaces
of operators whose elements are not necessarily self-adjoint.

For a general measurable space $ \os $ where $ \Omega $ is a nonempty set and $ \Sigma $ an arbitrary
$ \sigma $-algebra of subsets of $ \Omega $, let $ \mos $ be the real vector space of the real-valued measures on
$ \os $ (i.e., of the $ \sigma $-additive real-valued set functions on $ \Sigma $). As a consequence of the
$ \sigma $-additivity, these set functions are bounded for which reason there are also called the bounded
signed measures on $ \os $. We denote the convex subset of the normalized positive measures by
$ \sos $; the elements of $ \sos $ are probability measures and describe classical states. The pair
$ (\mos,\sos) $ is a base-normed Banach space with closed positive cone, the base norm being
the total-variation norm. By $ \fos $ we denote the real vector space of the real bounded $ \Sigma $-measurable
functions on $ \Omega $ and by $ \chi_B $ the characteristic function of a set $ B \in \Sigma $. The pair
$ (\fos,\chi_{\Omega}) $ together with the order relation of $ \fos $ is an order-unit normed Banach space
with closed positive cone, the order-unit norm being the supremum norm. The elements of the order-unit interval
$ \eos := [0,\chi_{\Omega}] $ describe the classical effects. By the bilinear functional given by the integral
\[
(\nu,f) \mapsto \dual{\nu}{f} := \int f \, \mathrm{d}\nu,
\]
$ \nu \in \mos $, $ f \in \fos $, the spaces $ \mos $ and $ \fos $ are placed in duality to each other; in particular,
$ \fos $ can be considered as a norm-closed subspace of the dual space $ (\mos)' $. The space $ \fos $ is in general
smaller than $ (\mos)' $, but its elements separate the elements of $ \mos $. The restriction of
$ (\nu,f) \mapsto \dual{\nu}{f} $ to $ \sos \times \eos $ is the classical probability functional;
$ \int f \, \mathrm{d}\mu $ is the probability for the outcome `yes' of the effect $ f \in \eos $ in the state
$ \mu \in \sos $. Again, $ \dual{\mos}{\fos} $ is a dual pair of vector spaces (also a statistical duality),
$ \dual{\mos}{\fos} $ encompassing classical probability theory \cite{bug96,bug98,dav70,gud98,sin92,stu86}.

We remark that the Dirac measures $ \delo $, $ \omega \in \Omega $, are extreme points of the convex set $ \sos $,
but in general there are also other extreme points. The extreme points of the convex set $ \eos $ are
the characteristic functions $ \chi_B $, $ B \in \Sigma $, these are the {\it sharp} classical effects
(in the terminology of classical probability theory, the {\it events}), the other effects are {\it unsharp}
or {\it fuzzy}.---Finally, we recall that $ \sigtop{\fos,\mos} $ is the coarsest topology on $ \fos $ in which
the elements of $ \mos $, considered as linear functionals on $ \fos $, are continuous; this topology is
the restriction of the weak-* Banach-space topology of $ (\mos)' $ to $ \fos $.

Given a bounded linear map $ T \! : \tsh \to \mos $, its dual $ T' \! : \fos \to \bsh $ w.r.t.\ the dual pairs
$ \dual{\tsh}{\bsh} $ and $ \dual{\mos}{\fos} $ is defined according to
\begin{equation}
\langle TV,f \rangle = \langle V,T'f \rangle
\end{equation}
where $ V \in \tsh $ and $ f \in \fos $ are arbitrary; explicitly, Eq.\ (1) reads
\begin{equation}
\int f \, \mathrm{d}(TV) = \mathrm{tr} \, V(T'f).
\end{equation}
The uniquely determined map $ T' $ is linear and bounded and is just the restriction of the Banach-space adjoint map
$ T^* \! : (\mos)' \to \bsh $ to $ \fos $. Throughout this paper, we will understand $T'$ in this sense.

In the next section we review the concept of \emph{statistically (informationally) complete} observables as well as
its relation to the injective affine mappings from the quantum states into the probability measures on some
measurable space $ \os $ or, equivalently, to the injective state-preserving linear maps $ T \! : \tsh \to \mos $,
here called \emph{semiclassical representations of quantum mechanics}. Moreover, it is shown that such a description
of the quantum states by probability measures cannot be extended to a \emph{classical representation
$ (T,S) $ of quantum mechanics}, i.e., $T$ cannot be supplemented by a description $S$ of the quantum mechanical
effects by classical effects such that the quantum probabilities coincide with the corresponding classical ones.

In Section 3 we first observe that a semiclassical representation $T$ can be supplemented by a description of
the quantum effects by functions $ f \in \fos $ so that the quantum probabilities can approximately be written
in terms of classical expressions. This is a consequence of the fact that the range of the dual map $T'$ is
$ \sigma $-dense (i.e., $ \sigtop{\bsh,\tsh} $-dense) in $ \bsh $ or, correspondingly, that the linear hull
of the range of a statistically complete observable is $ \sigma $-dense in $ \bsh $. However, the convex hull
of the range of a statistically complete observable is never $ \sigma $-dense in the set of all quantum effects;
more generally, a set of pairwise coexistent effects cannot be $ \sigma $-dense in $ \eh $. This result and
the following implication belong to our main results. The implication is that the approximate representatives
$ f \in \fos $ of the quantum effects do in general not belong to the set $ \eos $ of all classical effects;
in other words, \emph{approximate classical representations of quantum mechanics} do also not exist.

Hence, a classical description $ (T,S) $ of quantum mechanics that preserves the probabilities cannot relate both
quantum states and effects to their classical counterparts, neither strictly nor approximately. A weakened concept
associates quantum states and effects with arbitrary bounded signed measures and arbitrary real-valued measurable
functions on some measurable space, but preserves the probabilities. This very general concept of the
\emph{quasi-probability representations of quantum mechanics} was recently introduced by Ferrie, Morris, and Emerson
\cite{fer10}. In Section 4 we present the general concept and investigate it in some detail. As an example,
we consider the Wigner phase-space quasi-probability distribution functions together with the inverse
Weyl correspondence which associates self-adjoint Hilbert-Schmidt operators with real functions on phase space;
it is shown that this example gives rise to a quasi-probability representation in a restricted sense. As
a second example, we consider quasi-probability representations of finite-dimensional quantum mechanics.

\section{Semiclassical Representations and\\
Statistically Complete Observables}

In operational quantum mechanics, an observable $F$ is understood to be a normalized effect-valued
(positive-operator-valued) measure (POVM) on some measurable space $ (\Omega,\Sigma) $, i.e., an observable on
$ (\Omega,\Sigma) $ is a map $ F \! : \Sigma \to \eh $ satisfying (i) $ F(\emptyset) = 0 $, $ F(\Omega) = I $
and (ii) $ F(\bigcup_{i=1}^{\infty} B_i) = \sum_{i=1}^{\infty} F(B_i) $ where the sets $ B_i \in \Sigma $ are
mutually disjoint and the sum converges $ \sigma $-weakly (equivalently, weakly or strongly). Given $ W \in \sh $,
$F$ defines a probability measure $ P^F_W $ on $ \os $ according to $ P^F_W(B) := \mathrm{tr} \, WF(B) $, $ P^F_W $
is the probability distribution of $F$ in the state $W$. An observable is called \emph{statistically (informationally)
complete} if the states $ W \in \sh $ are determined by their probability distributions w.r.t.\ $F$, i.e.,
if for any two states $ W_1,W_2 \in \sh $, $ \mathrm{tr} \, W_1F(B) = \mathrm{tr} \, W_2F(B) $ for all
$ B \in \Sigma $ implies $ W_1 = W_2 $. Although it is well known that statistically complete observables
do exist \cite{lud70} and concrete examples can be given \cite{ali77a;77b,bus89,pru84}, we present an abstract proof
of this remarkable fact \cite{sin92,stu97}. To that end and for later purposes, we need the following two lemmata.

\begin{lemma}
Let $F$ be an observable on $ (\Omega,\Sigma) $. Then the following statements are equivalent:
\begin{enumerate}
\item[(i)] $F$ is statistically complete
\item[(ii)] $F$ separates the self-adjoint trace-class operators, i.e., if, for any $ V_1,V_2 \in \tsh $,
\begin{equation}
\mathrm{tr} \, V_1F(B) = \mathrm{tr} \, V_2F(B)
\end{equation}
holds for all $ B \in \Sigma $, then $ V_1 = V_2 $
\item[(iii)] the linear hull of $ F(\Sigma) = \{ F(B) \, | \, B \in \Sigma \} $ is $ \sigma $-dense in $ \bsh $.
\end{enumerate}
\end{lemma}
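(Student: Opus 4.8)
The plan is to prove the equivalence by establishing the cycle $(i)\Leftrightarrow(ii)$ and $(ii)\Leftrightarrow(iii)$, since the statistical-completeness condition and the separation condition differ only in that the former is stated for states $W\in\sh$ and the latter for arbitrary $V\in\tsh$.

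First I would show $(i)\Leftrightarrow(ii)$. The direction $(ii)\Rightarrow(i)$ is immediate, since $\sh\subset\tsh$. For $(i)\Rightarrow(ii)$, suppose $\tr{V_1F(B)}=\tr{V_2F(B)}$ for all $B\in\Sigma$, i.e. $\tr{VF(B)}=0$ for all $B$ where $V:=V_1-V_2\in\tsh$. The standard trick is to pass to the Jordan decomposition $V=V_+-V_-$ with $V_\pm\geq 0$ and $V_+V_-=0$; then $\no{V}_{\mathrm{tr}}=\tr{V_+}+\tr{V_-}$. Using $B=\Omega$ gives $\tr{V_+}=\tr{V_-}=:\lambda$. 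If $\lambda=0$ then $V=0$ and we are done; if $\lambda>0$, then $W_1:=\lambda^{-1}V_+$ and $W_2:=\lambda^{-1}V_-$ are density operators with $\tr{W_1F(B)}=\tr{W_2F(B)}$ for all $B$, so $(i)$ forces $W_1=W_2$, whence $V=0$. (One should note the degenerate case $\hi$-trivial is excluded by hypothesis so density operators exist.)

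Next I would establish $(ii)\Leftrightarrow(iii)$. This is where the duality $\dual{\tsh}{\bsh}$ and the bipolar theorem enter. Let $M$ denote the linear hull of $F(\Sigma)$ in $\bsh$. Condition $(ii)$ says precisely that the only $V\in\tsh$ annihilating all of $F(\Sigma)$ — equivalently, annihilating $M$ — is $V=0$; that is, $M^{\perp}=\{0\}$ in $\tsh$, where the annihilator is taken in the dual pair $\dual{\tsh}{\bsh}$. Since $\bsh=(\tsh)'$ and the $\sigma$-topology is exactly $\sigma(\bsh,\tsh)=\sigma((\tsh)',\tsh)$, the bipolar/annihilator theorem for dual pairs gives that the $\sigma$-closure of the subspace $M$ equals $(M^{\perp})^{\perp}=\{0\}^{\perp}=\bsh$. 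Hence $M^{\perp}=\{0\}$ if and only if $M$ is $\sigma$-dense in $\bsh$, which is $(iii)$. This step is short but is the conceptual heart of the lemma.

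The main obstacle I anticipate is purely bookkeeping rather than deep: one must be careful that the annihilator relation is taken in the correct dual pair (the predual $\tsh$ paired with $\bsh$, not the full bidual), so that the weak-* closure — i.e. the $\sigma$-closure — is the relevant one and the annihilator theorem applies verbatim. A secondary point worth a sentence is that the linear hull of $F(\Sigma)$ and the set of all finite linear combinations $\sum_i c_i F(B_i)$ (equivalently, operators of the form $\int g\,\mathrm{d}F$ for simple real $g$) have the same $\sigma$-closure, so there is no ambiguity in $(iii)$. No genuinely hard estimate is needed; the Jordan decomposition step and the bipolar theorem do all the work.
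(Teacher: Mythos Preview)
Your proof is correct and follows essentially the same route as the paper. For $(i)\Rightarrow(ii)$ the paper first treats positive $V_1,V_2$ and then rearranges to $V_1^+ + V_2^- = V_2^+ + V_1^-$, whereas you apply the Jordan decomposition directly to $V_1-V_2$; for $(ii)\Leftrightarrow(iii)$ the paper spells out the Hahn--Banach separation argument explicitly in both directions rather than invoking the bipolar/annihilator theorem as a named result, but the underlying content is identical.
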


\begin{proof} Suppose $F$ is statistically complete. Let $ V_1 $ and $ V_2 $ arbitrary positive trace-class operators
and assume that Eq.\ (3) is satisfied for all $ B \in \Sigma $. Setting $ B = \Omega $, it follows that
$ \mathrm{tr} \, V_1 = \mathrm{tr} \, V_2 =: \gamma $. If $ \gamma = 0 $, we obtain $ V_1 = V_2 = 0 $. For
$ \gamma \neq 0 $, divide (3) by $ \gamma $ and notice that $ \frac{1}{\gamma} V_1 = \frac{1}{\gamma} V_2 $
are density operators. Then $ V_1 = V_2 $ is implied. Now let $ V_1,V_2 \in \tsh $ be arbitrary and assume again
the validity of (3) for all $ B \in \Sigma $. Decomposing $ V_1 $ and $ V_2 $ into positive operators, we obtain
\[
\mathrm{tr} \, (V_1^+ - V_1^-)F(B) = \mathrm{tr} \, (V_2^+ - V_2^-)F(B)
\]
or, equivalently,
\[
\mathrm{tr} \, (V_1^+ + V_2^-)F(B) = \mathrm{tr} \, (V_2^+ + V_1^-)F(B).
\]
In consequence, $ V_1^+ + V_2^- = V_2^+ + V_1^- $, that is, $ V_1 = V_2 $.---The implication
(ii) $ \Rightarrow $ (i) is trivial.

If the linear hull of $ F(\Sigma) $ is not $ \sigma $-dense in $ \bsh $, then, according to a well-known consequence
of the Hahn-Banach theorem, there exists a $ \sigma $-continuous linear functional $ \Lambda \neq 0 $ on $ \bsh $
such that $ \Lambda(A) = 0 $ for all $ A \in \overline{\mathrm{lin} \, F(\Sigma)}^{\sigma} $. Since the
$ \sigma $-continuous linear functionals on $ \bsh $ are just those ones that are represented by the elements of
$ \tsh $,
\[
\Lambda(A) = \mathrm{tr} \, VA = 0
\]
holds for some $ V \in \tsh $, $ V \neq 0 $, and all $ A \in \overline{\mathrm{lin} \, F(\Sigma)}^{\sigma} $. Hence,
$ F(\Sigma) $ does not separate $ \tsh $. Thus, statement (ii) implies (iii).

Finally, suppose $ \overline{\mathrm{lin} \, F(\Sigma)}^{\sigma} = \bsh $. Assume that Eq.\ (3) is satisfied for any
$ V_1,V_2 \in \tsh $ and all $ B \in \Sigma $. Considering $ V_1 $ and $ V_2 $ as $ \sigma $-continuous linear
functionals on $ \bsh $, it follows from (3) that
\[
\mathrm{tr} \, V_1A = \mathrm{tr} \, V_2A
\]
for all $ A \in \overline{\mathrm{lin} \, F(\Sigma)}^{\sigma} = \bsh $. Thus, $ V_1 = V_2 $, and $F$ separates
$ \tsh $ and in particular $ \sh $.
\end{proof}

Clearly, statement (ii) of the lemma entails that $F$ separates all (not necessarily self-adjoint) trace-class
operators.---The next lemma is a known very general statement; for reasons of completeness, we give a proof.

\begin{lemma}
Let $ \mathcal{V} $ be a real or complex separable normed space, $ \mathcal{V}' $ its dual, and
$B'$ the closed unit ball of $ \mathcal{V}' $. Then the topology on $B'$ that is induced by the
weak-$*$ topology $ \sigma(\mathcal{V}',\mathcal{V}) $ is second countable.
\end{lemma}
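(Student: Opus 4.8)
The plan is to produce an explicit metric $d$ on $B'$ that induces the topology inherited from $\sigma(\mathcal V',\mathcal V)$; once $(B',\sigma(\mathcal V',\mathcal V))$ is seen to be metrizable, second countability reduces to its separability, which is routine. Since $\mathcal V$ is separable, so is its closed unit ball $B$; fix a countable dense subset $\{x_n : n\in\mathbb N\}$ of $B$ and, for $f,g\in B'$, put
\[
d(f,g) := \sum_{n=1}^\infty 2^{-n}\,|f(x_n)-g(x_n)|.
\]
The series converges (each term is $\le 2^{-n}\cdot 2$ since $\no f,\no g\le 1$ and $\no{x_n}\le 1$), $d$ is evidently a pseudometric, and it separates points: $d(f,g)=0$ forces $f$ and $g$ to agree on the dense set $\{x_n\}$, hence on $B$, hence by homogeneity on all of $\mathcal V$, so $f=g$. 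Thus $d$ is a metric on $B'$.

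The substance of the argument is showing that the $d$-topology agrees on $B'$ with the relative weak-$*$ topology. For "$d$-open $\Rightarrow$ weak-$*$-open" it is enough to observe that, for each fixed $g\in B'$, the map $f\mapsto d(f,g)$ is $\sigma(\mathcal V',\mathcal V)$-continuous on $B'$: it is the uniform limit of the partial sums $f\mapsto\sum_{n=1}^N 2^{-n}|f(x_n)-g(x_n)|$, each of which is weak-$*$ continuous because the evaluations $f\mapsto f(x_n)$ are so by definition of the weak-$*$ topology, and the tail is bounded by $\sum_{n>N}2^{-n+1}\to 0$; hence every open $d$-ball is a weak-$*$ open subset of $B'$. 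The opposite inclusion is the only place needing an estimate. A subbasic weak-$*$ neighbourhood of $f_0\in B'$ has the form $U=\{f\in B' : |f(y)-f_0(y)|<\varepsilon\}$ with $y\in\mathcal V\setminus\{0\}$; writing $z=y/\no y\in B$ and choosing $x_n$ with $\no{z-x_n}<\varepsilon/(3\no y)$, one checks by the triangle inequality together with $\no f,\no{f_0}\le 1$ that $d(f,f_0)<2^{-n}\varepsilon/(3\no y)$ implies $|f(y)-f_0(y)|<\varepsilon$; hence $U$ contains a $d$-ball about $f_0$. As finite intersections of such sets $U$ form a neighbourhood base of $f_0$ for the weak-$*$ topology, and a finite intersection of $d$-neighbourhoods is a $d$-neighbourhood, the two topologies coincide. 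In particular $(B',\sigma(\mathcal V',\mathcal V))$ is metrizable.

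It remains to verify that $(B',d)$ is second countable, i.e.\ separable. One clean way is to show $(B',d)$ is totally bounded: given $\varepsilon>0$, pick $N$ with $\sum_{n>N}2^{-n+1}<\varepsilon/2$; the map $f\mapsto(f(x_1),\dots,f(x_N))$ sends $B'$ into the bounded---hence totally bounded---set $\{\zeta\in\mathbb K^N : |\zeta_n|\le 1\}$ (where $\mathbb K$ is the scalar field), equipped with the weighted norm $\zeta\mapsto\sum_{n=1}^N 2^{-n}|\zeta_n|$; a finite $\tfrac{\varepsilon}{4}$-net there pulls back to a finite $\varepsilon$-net in $(B',d)$. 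A totally bounded metric space is separable, hence second countable. (Alternatively one may invoke the Banach--Alaoglu theorem, valid for any normed space, to see that $B'$ is weak-$*$ compact, so that $(B',d)$ is a compact metric space and therefore separable.) I do not foresee a genuine obstacle: the only point calling for care is the $3\varepsilon$-estimate in the second inclusion, and there the decisive feature is that $\no f\le 1$ for $f\in B'$, which is exactly what makes density of $\{x_n\}$ merely in the unit ball of $\mathcal V$ sufficient.
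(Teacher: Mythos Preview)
Your proof is correct, but the route differs from the paper's. The paper does not metrize $B'$ at all: it writes down an explicit countable family of relatively open sets
\[
U_{klm}=\Bigl\{\ell\in B'\ :\ |\ell(v_k)-q_l|<\tfrac1m\Bigr\},
\]
built from a countable norm-dense sequence $\{v_k\}$ in $\mathcal V$ and a countable dense sequence $\{q_l\}$ in the scalar field, and then proves directly (via a $2\varepsilon$-type estimate that exploits $\no{\ell}\le 1$) that every subbasic weak-$*$ set $\{\ell\in B':\ell(v)\in O\}$ is a union of such $U_{klm}$; hence finite intersections of the $U_{klm}$ form a countable base. Your argument instead introduces the standard weighted metric $d(f,g)=\sum 2^{-n}|f(x_n)-g(x_n)|$, shows it induces the relative weak-$*$ topology, and obtains second countability from total boundedness. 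Both approaches hinge on the same estimate---density of $\{x_n\}$ (resp.\ $\{v_k\}$) combined with the uniform bound $\no{\ell}\le 1$---and both avoid Banach--Alaoglu and hence any appeal to Zorn's lemma, which the paper explicitly cares about. What you gain is an explicit metric (often useful downstream); what the paper gains is an explicit countable base without the detour through metrizability. One small point of presentation: when you ``pull back'' the $\tfrac{\varepsilon}{4}$-net, it should be taken in the image $\Phi(B')$ (itself totally bounded as a subset of a totally bounded set) so that preimages in $B'$ exist; this is implicit in your argument but worth saying.
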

\begin{proof} By definition, $ \sigma(\mathcal{V}',\mathcal{V}) $ is the coarsest topology such that
the linear functionals $ \ell \mapsto \ell(v) = \langle v,\ell \rangle =:(iv)(\ell) $, $ v \in \mathcal{V} $,
$ \ell \in \mathcal{V}' $, are continuous; $ i \! : \mathcal{V} \to \mathcal{V}'' $ is the canonical embedding of
$ \mathcal{V} $ into its bidual. Consequently, the sets
\begin{equation}
U := (iv)^{-1}(O) \cap B' = \{ \ell \in B' \, | \, \ell(v) \in O \}
\end{equation}
where $ v \in \mathcal{V} $ and $O$ is an open set of $ \mathbb{R} $ or $ \mathbb{C} $, respectively, are open w.r.t.
$ \sigma(\mathcal{V}',\mathcal{V}) \cap B' $, and the finite intersections of such sets $U$ constitute a base of
$ \sigma(\mathcal{V}',\mathcal{V}) \cap B' $. Now let $ \{ v_k \}_{k \in \mathbb{N}} $ be a sequence of vectors
being norm-dense in $ \mathcal{V} $, $ \{ q_l \}_{l \in \mathbb{N}} $ a sequence of numbers being dense in
$ \mathbb{R} $ or $ \mathbb{C} $, respectively, and $ m \in \mathbb{N} $. The countably many sets
\begin{eqnarray}
U_{klm} : & = & (iv_k)^{-1}(K_{\frac{1}{m}}(q_l)) \cap B'
            =   \{ \ell \in B' \, | \, \ell(v_k) \in K_{\frac{1}{m}}(q_l) \}                      \nonumber\\
          & = & \left\{ \ell \in B' \left| \, |\ell(v_k) - q_l| < \frac{1}{m} \right. \right\}    \nonumber\\
\end{eqnarray}
where $ K_{\frac{1}{m}}(q_l) $ is the set of all real or complex numbers
$ \xi $ satisfying $ |\xi - q_l| < \frac{1}{m} $, are particular instances of the sets
$U$ according to (4); we show that even the finite intersections of the $ U_{klm} $ constitute a base of
$ \sigma(\mathcal{V}',\mathcal{V}) \cap B' $.

To that end, we first prove that, for $U$ according to (4),
\begin{equation}
U = \bigcup_{U_{klm} \subseteq U} U_{klm}.
\end{equation}
Let $ \ell \in U $. Then $ \ell(v) \in O $, and there exists an $ \varepsilon > 0 $ such that
$ K_{\varepsilon}(\ell(v)) \subseteq O $ where
$ K_{\varepsilon}(\ell(v)) = \{ \xi \, | \, |\xi - \ell(v)| < \varepsilon \} $,
$ \xi \in \mathbb{R} $ or $ \xi \in \mathbb{C} $, respectively. Choose $ m_0 \in \mathbb{N} $ such that
$ \frac{1}{m_0} < \frac{\varepsilon}{2} $, and choose a member $ q_{l_0} $ of the sequence
$ \{ q_l \}_{l \in \mathbb{N}} $ and a member $ v_{k_0} $ of the sequence
$ \{ v_k \}_{k \in \mathbb{N}} $ such that $ |\ell(v) - q_{l_0}| < \frac{1}{2m_0} $ and
$ \|v_{k_0} - v\| < \frac{1}{2m_0} $. It follows that
\[
|\ell(v_{k_0}) - q_{l_0}| \leq |\ell(v_{k_0}) - \ell(v)| + |\ell(v) - q_{l_0}| < \frac{1}{2m_0} + \frac{1}{2m_0}
                                                                               = \frac{1}{m_0}
\]
where $ \|\ell\| \leq 1 $ has been taken into account. Hence, $ \ell \in U_{k_0l_0m_0} $. We further
have to show that $ U_{k_0l_0m_0} \subseteq U $. Therefore, let $ \tilde{\ell} \in U_{k_0l_0m_0} $. Then, from
\begin{eqnarray*}
|\tilde{\ell}(v) - \ell(v)| & \leq &   |\tilde{\ell}(v) - \tilde{\ell}(v_{k_0})|
                                     + |\tilde{\ell}(v_{k_0}) - q_{l_0}| + |q_{l_0} - \ell(v)|     \\
                            &   <  &    \frac{1}{2m_0} + \frac{1}{m_0} + \frac{1}{2m_0}            \\
                            &   =  &    \frac{2}{m_0}                                              \\
                            &   <  &    \varepsilon
\end{eqnarray*}
where $ \|\tilde{\ell}\| \leq 1 $ has been used as well as the definition (5) of the $ U_{klm} $,
it follows that $ \tilde{\ell}(v) \in K_{\varepsilon}(\ell(v)) \subseteq O $, i.e.,
$ \tilde{\ell} \in U $. Hence, $ U_{k_0l_0m_0} \subseteq U $.

Summarizing, we have shown that, for $ \ell \in U $, $ \ell \in U_{k_0l_0m_0} \subseteq U $. Hence,
$ U \subseteq \bigcup_{U_{klm} \subseteq U} U_{klm} \subseteq U $, and assertion (6) has been proved. As a consequence
of (6), the finite intersections of sets $U$ according to (4) are unions of intersections of finitely many sets
$ U_{klm} $. Since the former intersections constitute a base of $ \sigma(\mathcal{V}',\mathcal{V}) \cap B' $,
the finite intersections of the sets $ U_{klm} $ constitute a countable base of
$ \sigma(\mathcal{V}',\mathcal{V}) \cap B' $.
\end{proof}

Using diagonal sequences, one can prove that $B'$ is $ \sigma(\mathcal{V}',\mathcal{V}) $-sequentially compact. Then
one can conclude from Lemma 2 that $B'$ is $ \sigma(\mathcal{V}',\mathcal{V}) $-compact. Thus, for a separable
normed space $ \mathcal{V} $, the Banach-Alaoglu theorem can be proved without using an argument related to
Zorn's lemma.---The weak-* compactness of $B'$ implies its weak-* closeness, the latter can directly be concluded
from the definition of the weak-* topology. In what follows we shall essentially use the second countability
and the closeness of $B'$ in the weak-* topology, but not the compactness.

\begin{theorem}
There exists a statistically complete observable.
\end{theorem}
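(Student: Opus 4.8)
The goal is to exhibit a single observable $F\colon\Sigma\to\eh$ that separates $\sh$ (equivalently, by Lemma 1, separates $\tsh$, equivalently has $\sigma$-dense linear hull of its range). The plan is to build $F$ on a conveniently chosen measurable space using a countable family of effects that already separates $\tsh$, and then assemble these into one POVM by a weighted sum. First I would observe that $\tsh$, being the predual-side space of trace-class operators on a separable $\hi$, is itself a separable Banach space; hence its closed unit ball carries enough structure, and more importantly there is a countable norm-dense subset of $\tsh$, from which one extracts a countable family $\{A_n\}_{n\in\mathbb N}\subseteq\eh$ of effects that separates $\tsh$ (for instance, scaled real and imaginary parts of rank-one operators $\kb{\fii_j}{\fii_k}$ over a countable dense set $\{\fii_j\}$ of unit vectors, suitably shifted and normalized into $[0,I]$). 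The key point is that if $\tr{VA_n}=0$ for all $n$, then by density and continuity of $V\mapsto\tr{VA}$ one gets $\tr{VA}=0$ for a dense set of $A$'s spanning enough of $\bsh$ to force $V=0$.

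Next I would realize this countable separating family as the range of an observable. Take $\Omega:=\mathbb N$ (or $\{0\}\cup\mathbb N$) with $\Sigma:=2^{\mathbb N}$, pick a sequence of weights $\lambda_n>0$ with $\sum_n\lambda_n=1$, and attempt to define a discrete observable $F(\{n\}):=E_n$ so that $F(B)=\sum_{n\in B}E_n$. The constraint is $\sum_n E_n=I$ with each $E_n\in\eh$; one cannot simply put $E_n=\lambda_n A_n$ because the $A_n$ need not sum to $I$. The standard fix is to enlarge the outcome space so that each original effect $A_n$ contributes a two-outcome block $(\lambda_n A_n,\ \lambda_n(I-A_n))$; summing over all blocks gives $\sum_n\lambda_n(A_n+(I-A_n))=\sum_n\lambda_n I=I$, so $F$ is a genuine normalized POVM on the countable space $\Omega:=\mathbb N\times\{+,-\}$. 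Since the range of $F$ contains all the $\lambda_n A_n$ and $\lambda_n A_n$ spans the same line as $A_n$, the linear hull of $F(\Sigma)$ contains all finite linear combinations of the $A_n$; if $\tr{VF(B)}=0$ for all $B$ then $\tr{VA_n}=0$ for every $n$, hence $V=0$ by the separation property of $\{A_n\}$. By Lemma 1 (either via (ii)$\Rightarrow$(i) directly, or via the $\sigma$-density characterization (iii)), $F$ is statistically complete.

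The main obstacle is the honest construction of the countable separating family $\{A_n\}\subseteq\eh$: one must check that the rank-one operators (or their Hermitian/anti-Hermitian parts) associated with a countable dense subset of the unit sphere of $\hi$, after rescaling to land inside $[0,I]$, still separate all of $\tsh$ and not merely a dense subspace — i.e., that separation survives the passage to the norm-limit. This is where the separability of $\hi$ is used essentially, and where one invokes that $V\mapsto\tr{VA}$ is norm-continuous on $\tsh$ uniformly in $\|A\|\le 1$, so that $\tr{VA_n}=0$ for a norm-dense family of contractions $A_n$ forces $\tr{VA}=0$ for all $A\in\bsh$, hence $V=0$. Everything else — choosing summable weights, verifying the POVM axioms (i) and (ii) for the discrete $F$, and reading off statistical completeness from Lemma 1 — is routine. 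The only subtlety to flag is $\sigma$-additivity of $F$: for a countable disjoint union the partial sums $\sum_{n\le N}E_n$ increase to a bounded positive operator and hence converge strongly (monotone bounded sequences of self-adjoint operators converge strongly), which is exactly the convergence required in the definition of an observable.
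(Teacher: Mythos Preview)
Your overall strategy is correct and yields a valid proof, but it differs from the paper's in two places. First, the paper does not build the separating family by hand from rank-one operators over a dense set of unit vectors; instead it invokes Lemma~2 to conclude that $\eh$ is second countable (hence separable) in the $\sigma$-topology, and simply takes any sequence $\{\widetilde{A}_n\}$ that is $\sigma$-dense in $\eh$. Second, the paper normalizes differently: rather than your two-outcome block trick on $\mathbb{N}\times\{+,-\}$, it sets $A_1:=I-\sum_{i\ge1}2^{-i}\widetilde{A}_i$ and $A_n:=2^{-(n-1)}\widetilde{A}_{n-1}$ for $n\ge2$, obtaining a POVM on $\mathbb{N}$ itself whose range has $\sigma$-dense linear hull; statistical completeness then follows from Lemma~1 via (iii)$\Rightarrow$(i). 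Your route is more elementary in that it bypasses Lemma~2 and the weak-$*$ topological machinery entirely, working instead with norm density of vectors in $\hi$; the paper's route is more abstract but produces a $\sigma$-dense sequence in $\eh$ directly, which is the form needed for Lemma~1(iii). Both proofs are constructive and avoid Hahn--Banach.

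One caveat on your justification: your closing sentence about ``$\tr{VA_n}=0$ for a norm-dense family of contractions $A_n$'' is misphrased, since the unit ball of $\bsh$ is not norm-separable when $\dim\hi=\infty$, so no countable family can be norm-dense there. What actually makes your construction work is not density of the $A_n$ in $\bsh$ but density of the $\fii_j$ in the unit sphere of $\hi$: the map $\fii\mapsto\langle\fii|V\fii\rangle$ is norm-continuous, so vanishing on $\{\fii_j\}$ forces $\langle\fii|V\fii\rangle=0$ for all unit $\fii$, and for self-adjoint $V$ this gives $V=0$ by polarization. With that correction your argument goes through.
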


\begin{proof}
The separability of the Hilbert space $ \hi $ entails the norm separability of the Banach space $ \tsh $,
Lemma 2 then implies that the closed unit ball of $ \bsh $, $ \{ A \in \bsh \, | \, \|A\| \leq 1 \} = [-I,I] $,
is second countable in the topology $ \sigma \cap [-I,I] $. In consequence, the set $ \eh = [0,I] $ is
second countable in the topology $ \sigma \cap [0,I] $. From the second countability we obtain that
$ [-I,I] $ and $ [0,I] $ are separable in the respective topologies, i.e., there exist countable
dense subsets.

Now let $ \{ \widetilde{A}_n \}_{n \in \mathbb{N}} $, $ \widetilde{A}_n \in \eh $, be a sequence $ \sigma $-dense in
$ \eh $ and define a further sequence by
\begin{eqnarray*}
A_1 & := & I - \sum_{i=1}^{\infty} \frac{1}{2^i}\widetilde{A}_i,                    \\
A_n & := & \frac{1}{2^{n-1}}\widetilde{A}_{n-1} \quad \mathrm{for} \quad n\geq 2.
\end{eqnarray*}
Observe that (i) the infinite sum is even norm-convergent,
(ii) $ A_n \in \eh $ for all $ n \in \mathbb{N} $, (iii) $ \sum_{n=1}^{\infty} A_n = I $, and
(iv) $ \overline{\mathrm{lin} \, \{ A_n \, | \, n \in \mathbb{N} \}}^{\sigma} = \bsh $. Finally,
define an observable $F$ on the power set of $ \mathbb{N} $ by
\[
F(B) := \sum_{i \in B} A_i
\]
where $ B \subseteq \mathbb{N} $. Because of (iv) and Lemma 1, $F$ is statistically complete.
\end{proof}

Note that all of Lemma 1 what we have used in the proof of Theorem 1 is the implication (iii) $ \Rightarrow $ (i);
in particular, the implication (ii) $ \Rightarrow $ (iii) which is based on the Hahn-Banach theorem
has not been used. Thus, the proof of Theorem 1 does not involve any argument related to Zorn's lemma,
it is purely constructive. The implication (ii) $ \Rightarrow $ (iii) is, however, essential
for the proof of Theorem 4.

We mention some properties of statistically complete observables \cite{bus95,bus89}. Let $F$ be an observable on
$ \os $. If there is one effect $ F(B) \in F(\Sigma) $ such that $ F(B) \neq \alpha I $, $ 0 \leq \alpha \leq 1 $,
and $ F(B)F(C) = F(C)F(B) $ for all $ C \in \Sigma $, then $F$ cannot be statistically complete. If the range of
$F$ contains a nontrivial projection, $ 0,I \neq F(B) = (F(B))^2 $, then $ F(B) $ commutes with all
the other effects of $ F(\Sigma) $, and the observable cannot be statistically complete. Thus,
a statistically complete observable cannot contain a nontrivial projection and can in particular not be
a nontrivial projection-valued measure. Furthermore, no effect of a statistically complete observable
can have both $0$ and $1$ as eigenvalues.

Now we are ready to introduce the central concept of this paper. A \emph{semiclassical representation
of quantum mechanics} is an affine mapping that assigns to every quantum state
$ W \in \sh $ injectively a probability measure $ \mu \in \sos $ where $ \os $ is some fixed
measurable space. It is well known and not hard to prove that any affine mapping
$ \widetilde{T} $ from $ \sh $ to $ \sos $ can uniquely be extended to a positive linear map
$T$ from $ \tsh $ to $ \mos $. Moreover, if $ \widetilde{T} $ is injective, $T$ is injective.

\begin{defin}
A linear map $ T \! : \tsh \to \mos $ satisfying $ T\sh \subseteq \sos $ is called a \emph{statistical map}. We call
an injective statistical map $ T \! : \tsh \to \mos $ a \emph{semiclassical representation of quantum mechanics on
$ \os $}.

If $ (\Omega,\Sigma,\lambda) $ is a $ \sigma $-finite measure space, $ L_{\mathbb{R}}^{1}(\Omega,\Sigma,\lambda) $
the corresponding space of real-valued $ L^1 $-functions, and $ \mathcal{S}(\Omega,\Sigma,\lambda) $
the convex set of all normalized $ L^1 $-functions $ \varrho \geq 0 $, then an injective linear map
$ \widehat{T} \! : \tsh \to L_{\mathbb{R}}^{1}(\Omega,\Sigma,\lambda) $ satisfying
$ \widehat{T}\sh \subseteq \mathcal{S}(\Omega,\Sigma,\lambda) $ is called
a \emph{semiclassical representation on $ (\Omega,\Sigma,\lambda) $}.
\end{defin}

The functions $ \varrho \in \mathcal{S}(\Omega,\Sigma,\lambda) $ are probability densities;
$ \mathcal{S}(\Omega,\Sigma,\lambda) $ can be identified with that subset of $ \sos $ that consists of all
$ \lambda $-absolutely continuous probability measures. In particular situations (for instance, if
$ \Omega $ is the phase space equipped with the Lebesgue measure on its Borel sets),
$ \mathcal{S}(\Omega,\Sigma,\lambda) $ is a suitable model for classical statistical states. In this case
the classical effects are described by the convex set $ \mathcal{E}(\Omega,\Sigma,\lambda) $ of all
$ L^{\infty} $-functions $f$ satisfying \linebreak $ 0 \leq f \leq \chi_{\Omega} $ $ \lambda $-almost everywhere,
and the probability for the occurrence of the effect $ f \in \mathcal{S}(\Omega,\Sigma,\lambda) $ in the state
$ \varrho \in \mathcal{S}(\Omega,\Sigma,\lambda) $ is $ \int \varrho f \, \mathrm{d}\lambda $. An advantage
of the classical statistical model based on a $ \sigma $-finite measure space (and being used in particular
for usual classical statistical mechanics) is that $ L_{\mathbb{R}}^{\infty}(\Omega,\Sigma,\lambda) $ is
the dual space of $ L_{\mathbb{R}}^1(\Omega,\Sigma,\lambda) $.---In \cite{bus93,hel93,stu94;98,stu97}
a semiclassical representation of quantum mechanics was called a ``classical representation of quantum mechanics''.

Some simple properties of semiclassical representations are stated in the next lemma. In this context,
recall that for a bounded linear map $ T \! : \tsh \to \mos $ a unique adjoint
$ T' \! : \fos \to \bsh $ is defined by Eqs.\ (1) and (2). In the case of a bounded linear map
$ \widehat{T} \! : \tsh \to L_{\mathbb{R}}^{1}(\Omega,\Sigma,\lambda) $,
$ \widehat{T}' = \widehat{T}^* $ is the usual Banach-space adjoint map
$ \widehat{T}^* \! : L_{\mathbb{R}}^{\infty}(\Omega,\Sigma,\lambda) \to \bsh $.

\begin{lemma}
A statistical map $T$ is positive and bounded with $ \|T\| = 1 $. The property
$ T\sh \subseteq \sos $ of a bounded linear map $ T \! : \tsh \to \mos $ is equivalent to
$ T' \geq 0 $ and $ T'\chi_{\Omega} = I $. The latter two conditions imply $ T'\eos \subseteq \eh $ as well as
$ \|T'\| = 1 $. For linear maps $ \widehat{T} \! : \tsh \to L_{\mathbb{R}}^{1}(\Omega,\Sigma,\lambda) $,
the analogous statements hold.
\end{lemma}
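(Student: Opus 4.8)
The strategy is to verify each assertion in turn, exploiting that $\tsh$ is base-normed with base $\sh$ and $\fos$ is order-unit normed with unit $\chi_\Omega$, and that duality between $T$ and $T'$ (Eqs.\ (1)--(2)) converts order/norm properties on one side into the other. I would proceed in the order: (a) positivity and $\|T\|=1$; (b) the equivalence $T\sh\subseteq\sos \iff (T'\geq 0 \text{ and } T'\chi_\Omega=I)$; (c) the two consequences $T'\eos\subseteq\eh$ and $\|T'\|=1$; (d) remark that the $L^1$-case is identical with $\chi_\Omega$ replaced by the constant function $1$ and $T'$ read as the Banach-space adjoint.

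\emph{Step (a).} Positivity of $T$ is part of being a statistical map only in the sense that $T\sh\subseteq\sos$; I would first note that $\sh$ generates the positive cone $\tsh^+$ (every positive trace-class operator is a nonnegative multiple of a density operator, the zero operator being $0\cdot W$), and likewise $\sos$ generates $\mos^+$; since $T$ is linear and sends $\sh$ into $\sos$, it sends $\tsh^+$ into $\mos^+$, i.e.\ $T\geq 0$. For the norm: every $V\in\tsh$ has a Jordan-type decomposition $V=V^+-V^-$ with $V^\pm\geq 0$ and $\|V\|=\|V^+\|+\|V^-\|=\tr V^+ + \tr V^-$ (this is exactly the base-norm property). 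Then $TV^\pm\geq 0$, so $\|TV\|\leq\|TV^+\|+\|TV^-\|$, and for a positive measure $\|TV^\pm\|=(TV^\pm)(\Omega)=\tr V^\pm$ because $T$ preserves total mass on density operators and hence (by scaling) on all of $\tsh^+$; adding gives $\|TV\|\leq\|V\|$, so $\|T\|\leq 1$. Evaluating on any single $W\in\sh$ gives $\|TW\|=1=\|W\|$, so $\|T\|=1$. This step is routine; the only thing to be careful about is justifying $\|TV^+\|=(TV^+)(\Omega)$, which holds since $TV^+$ is a \emph{positive} measure and the total variation norm of a positive measure is its value on $\Omega$.

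\emph{Step (b).} For the equivalence I would argue both directions through the duality $\int f\,\mathrm d(TV)=\tr V(T'f)$. If $T\sh\subseteq\sos$: given $f\in\fos$ with $f\geq 0$ and any $W\in\sh$, the right side equals $\int f\,\mathrm d(TW)\geq 0$ because $TW$ is a positive measure; since $W$ ranges over $\sh$ which spans $\tsh$, positivity of $\tr W(T'f)$ for all $W\in\sh$ forces $T'f\geq 0$ (an operator $A\in\bsh$ is positive iff $\tr WA\geq 0$ for all $W\in\sh$). Taking $f=\chi_\Omega$ gives $\tr W(T'\chi_\Omega)=\int\chi_\Omega\,\mathrm d(TW)=(TW)(\Omega)=1=\tr WI$ for all $W\in\sh$, hence $T'\chi_\Omega=I$. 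Conversely, if $T'\geq 0$ and $T'\chi_\Omega=I$: for $W\in\sh$, $TW$ is positive since $\int f\,\mathrm d(TW)=\tr W(T'f)\geq 0$ for every $f\geq 0$ in $\fos$ (using $W\geq 0$ and $T'f\geq 0$), and one recovers positivity of the measure $TW$ on each $B\in\Sigma$ by taking $f=\chi_B$; and $(TW)(\Omega)=\tr W(T'\chi_\Omega)=\tr W=1$, so $TW\in\sos$. The one subtlety here is the claim ``$TW$ positive as a measure iff $(TW)(\chi_B)\geq 0$ for all $B$'' versus ``$(TW)(f)\geq 0$ for all $f\geq 0$'' — these coincide because simple functions are $\fos$-norm dense in the positive part of $\fos$ and $TW$ is a finite measure, so it is continuous on $\fos$.

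\emph{Step (c).} Given $T'\geq 0$ and $T'\chi_\Omega=I$: for $f\in\eos$, i.e.\ $0\leq f\leq\chi_\Omega$, positivity of $T'$ applied to $f$ and to $\chi_\Omega-f$ gives $0\leq T'f$ and $T'f\leq T'\chi_\Omega=I$, so $T'f\in[0,I]=\eh$; thus $T'\eos\subseteq\eh$. For $\|T'\|=1$: $T'$ is bounded as the dual of a bounded map with $\|T'\|=\|T^*\|=\|T\|=1$ — I would simply invoke this, since $\|T\|=1$ was just established in Step (a) and the restriction of $T^*$ to $\fos$ has norm at most $\|T^*\|$, while $\|T'\chi_\Omega\|=\|I\|=1=\|\chi_\Omega\|$ shows it is exactly $1$. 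Finally, Step (d) is a one-sentence remark: when the target is $L^1_{\mathbb R}(\Omega,\Sigma,\lambda)$, the pairing with $L^\infty_{\mathbb R}$ is the full Banach-space duality, $\chi_\Omega$ is replaced by the a.e.-constant function $1$, and all of the above arguments go through verbatim with ``positive measure'' replaced by ``nonnegative $L^1$-function'' and $\eos$ by $\mathcal E(\Omega,\Sigma,\lambda)$.

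The main obstacle, such as it is, is bookkeeping rather than depth: one must consistently translate between the three equivalent descriptions of positivity of a measure/operator (value on all of $\Sigma$, pairing with all characteristic functions, pairing with all nonnegative bounded measurable functions / all density operators) and make sure each translation is licensed by the finiteness of the measures involved and the density of simple functions. No Hahn--Banach or Zorn-type argument is needed; everything is elementary consequences of the base-norm/order-unit-norm duality already set up in the introduction.
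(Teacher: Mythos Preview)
Your proposal is correct and follows essentially the same approach as the paper's proof: the Jordan decomposition for $\|T\|=1$, the duality pairing for the equivalence, and positivity plus $T'\chi_\Omega=I$ for $T'\eos\subseteq\eh$. The only minor deviation is in Step~(c): the paper derives $\|T'\|\leq 1$ directly from the order structure (since $T'\geq 0$ and $T'\chi_\Omega=I$ force $T'[-\chi_\Omega,\chi_\Omega]\subseteq[-I,I]$, and these order intervals are the closed unit balls), whereas you route through $\|T'\|\leq\|T^*\|=\|T\|=1$ by invoking Step~(a) via the equivalence; both are valid, the paper's being slightly more self-contained.
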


\begin{proof}
Let $ T \! : \tsh \to \mos $ be a statistical map. Then $T$ is positive, and for $ W \in \sh $ we have that
$ \|TW\| = 1 = \|W\|_{\mathrm{tr}} $, the subscript $ \mathrm{tr} $ indicating the trace norm. In consequence,
$ \|TV\| = \|V\|_{\mathrm{tr}} $ for all $ V \geq 0 $. Writing $ V = V^+ - V^- $ where $ V^+ $ and $ V^- $
are the positive and the negative part of an arbitrary trace-class operator $V$, we obtain $ \|TV\| \leq
\|TV^+\| + \|TV^-\| = \|V^+\|_{\mathrm{tr}} + \|V^-\|_{\mathrm{tr}} = \mathrm{tr} \, |V| = \|V\|_{\mathrm{tr}} $
which implies $ \|T\| \leq 1 $. Since $ \|TW\| = \|W\|_{\mathrm{tr}} $ for $ W \in \sh $, we conclude that
$ \|T\| = 1 $.

The map $T'$ is also positive; from $ T\sh \subseteq \sos $ it follows for $ W \in \sh $ that
$ 1 = (TW)(\Omega)= \int \chi_{\Omega} \, \mathrm{d}(TW) = \mathrm{tr} \, W(T'\chi_{\Omega}) $, and
$ \mathrm{tr} \, W(T'\chi_{\Omega}) = 1 $ for all $ W \in \sh $ implies $ T'\chi_{\Omega} = I $. Conversely, from
$ T' \geq 0 $ and $ T'\chi_{\Omega} = I $ it follows that $ T \geq 0 $ and, for $ W \in \sh $,
$ 1 = \mathrm{tr} \, W(T'\chi_{\Omega}) = (TW)(\Omega) $; therefore, $ TW \geq 0 $ and $ (TW)(\Omega) = 1 $,
that is, $ TW \in \sos $. The implication $ T'\eos \subseteq \eh $ is clear because
$ \eos = [0,\chi_{\Omega}] $ and $ \eh = [0,I] $. From the further consequence that
$ T'[-\chi_{\Omega},\chi_{\Omega}] \subseteq [-I,I] $ and the fact that
$ [-\chi_{\Omega},\chi_{\Omega}] $ and $ [-I,I] $ are the closed unit balls of $ \fos $ and $ \bsh $,
respectively, we obtain $ \|T'\| \leq 1 $ and finally, since $ T'\chi_{\Omega} = I $, $ \|T'\| = 1 $.
\end{proof}

The statistical maps are in one-one correspondence with the observables, as the following theorem shows.

\begin{theorem}
Every observable $F$ on $ \os $ defines a statistical map $ T \! : \tsh \to \mos $ by
\begin{equation}
(TV)(B) := \mathrm{tr} \, VF(B)
\end{equation}
where $ V \in \tsh $ and $ B \in \Sigma $. Conversely, for every statistical map
$ T \! : \tsh \to \mos $ there exists a uniquely determined observable $ F \! : \Sigma \to \eh $ such that
$ TV = \mathrm{tr} \, VF(\, . \,) $. In particular, for $ W \in \sh $, $TW$ is just
the probability distribution of $F$, briefly, $ TW = P^F_W $. Moreover, $T'$ and $F$ are related by
\[
T'f = \int f \, \mathrm{d}F
\]
or, equivalently, by
\begin{equation}
T'\chi_{B} = F(B)
\end{equation}
where $ f \in \fos $, $ B \in \Sigma $, and the integral is understood in the $ \sigma $-weak sense.

In particular, Eq.\ (7) establishes a one-one correspondence between the semiclassical representations on
$ \os $ and the statistically complete observables on $ \os $.
\end{theorem}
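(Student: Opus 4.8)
The plan is to establish the two directions of the correspondence $F \leftrightarrow T$ separately and then observe that, under this correspondence, statistical completeness of $F$ matches injectivity of $T$. For the forward direction, I would start from an observable $F:\Sigma\to\eh$ and define $T$ by Eq.\ (7), i.e.\ $(TV)(B):=\tr{VF(B)}$. The countable additivity of $TV$ as a set function follows from the $\sigma$-weak (equivalently, $\sigma$-weak convergence suffices since $V\in\tsh$ and $V$ acts as a $\sigma$-continuous functional) additivity of $F$: for mutually disjoint $B_i$, $\tr{VF(\bigcup B_i)}=\tr{V\sum F(B_i)}=\sum\tr{VF(B_i)}$, where the interchange of $\tr{V\,\cdot\,}$ with the sum is exactly the $\sigma$-continuity of the functional $A\mapsto\tr{VA}$. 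Hence $TV\in\mos$, and linearity of $T$ in $V$ is immediate from linearity of the trace. For $W\in\sh$ we get $(TW)(B)=\tr{WF(B)}\ge 0$ and $(TW)(\Omega)=\tr{WF(\Omega)}=\tr{WI}=1$, so $TW\in\sos$; thus $T$ is a statistical map and $TW=P^F_W$ by the very definition of the probability distribution of $F$.

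For the converse, given a statistical map $T:\tsh\to\mos$, I would use Lemma 3: $T'\ge 0$ and $T'\chi_\Omega=I$, with $T'\eos\subseteq\eh$. Define $F(B):=T'\chi_B$ for $B\in\Sigma$. Then $F(B)\in\eh$ since $\chi_B\in\eos$, $F(\emptyset)=T'0=0$, and $F(\Omega)=T'\chi_\Omega=I$. For $\sigma$-additivity: if the $B_i$ are mutually disjoint with union $B$, then $\chi_B=\sum_i\chi_{B_i}$ with the series converging in $\sigma(\fos,\mos)$ (for each $\nu\in\mos$, $\int\chi_B\,d\nu=\nu(B)=\sum_i\nu(B_i)=\sum_i\int\chi_{B_i}\,d\nu$ by countable additivity of $\nu$); since $T'$ is the restriction of the Banach-space adjoint $T^*$ and such adjoints are $\sigma(\fos,\mos)$-to-$\sigma(\bsh,\tsh)$ continuous (for $V\in\tsh$, $\tr{V\,T'f}=\langle TV,f\rangle$ is $\sigma(\fos,\mos)$-continuous in $f$), we obtain $F(B)=T'\chi_B=\sum_i T'\chi_{B_i}=\sum_i F(B_i)$ with $\sigma$-weak convergence. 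So $F$ is an observable. To recover $TV=\tr{VF(\,\cdot\,)}$, compute $(TV)(B)=\int\chi_B\,d(TV)=\dual{TV}{\chi_B}=\dual{V}{T'\chi_B}=\tr{V\,F(B)}$, using Eq.\ (2). Uniqueness of $F$: any observable $G$ with $TV=\tr{V\,G(\,\cdot\,)}$ satisfies $\tr{V(F(B)-G(B))}=0$ for all $V\in\tsh$, hence $F(B)=G(B)$ since $\tsh$ separates $\bsh$. The formula $T'f=\int f\,dF$ then follows from $T'\chi_B=F(B)$ by linearity and $\sigma$-weak continuity, first for simple $f$, then for general $f\in\fos$ by uniform approximation by simple functions together with $\|T'\|=1$.

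For the final assertion, I would combine the above with Lemma 1. Under the correspondence $TV=\tr{VF(\,\cdot\,)}$, injectivity of $T$ means: $\tr{V_1F(B)}=\tr{V_2F(B)}$ for all $B\in\Sigma$ implies $V_1=V_2$, which is precisely statement (ii) of Lemma 1, equivalent to statistical completeness of $F$. Conversely a statistically complete $F$ yields an injective $T$. Since Definition 2 calls an injective statistical map a semiclassical representation, Eq.\ (7) restricts to a bijection between semiclassical representations on $\os$ and statistically complete observables on $\os$.

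The main obstacle is the careful handling of the convergence modes: one must verify that $T'$ really intertwines $\sigma(\fos,\mos)$-convergence of $\sum\chi_{B_i}$ with $\sigma$-weak convergence of $\sum F(B_i)$, which hinges on $T'$ being (the restriction of) a Banach-space adjoint and hence weak-$*$-to-weak-$*$ continuous — a point that is easy to state but where one should be explicit about which dual pairs are in play, since $\fos$ is only a norm-closed subspace of $(\mos)'$, not all of it. Everything else is routine bookkeeping with the trace duality and Lemmata 1 and 3.
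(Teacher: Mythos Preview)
Your proposal is correct and follows essentially the same approach as the paper: define $F(B):=T'\chi_B$ via Lemma~3, verify $\sigma$-additivity through the duality $\mathrm{tr}\,V(T'\chi_B)=(TV)(B)$, recover $TV=\mathrm{tr}\,VF(\,\cdot\,)$, and invoke Lemma~1 for the correspondence with statistical completeness. The only cosmetic differences are that the paper establishes $TV=\mathrm{tr}\,VF(\,\cdot\,)$ \emph{before} $\sigma$-additivity (then derives the latter from the former by a direct computation rather than phrasing it as weak-$*$ continuity of $T'$), and proves $T'f=\int f\,\mathrm{d}F$ in one line via the defining relation of the $\sigma$-weak integral instead of by simple-function approximation.
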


\begin{proof}
Given an observable $ F \! : \Sigma \to \eh $, it is obvious that a statistical map
$ T \! : \tsh \to \mos $ is defined by (7). If $F$ is statistically complete, then, according to Lemma 1, its range
$ F(\Sigma) $ separates the elements of $ \tsh $, i.e., $T$ is injective.

Now assume $T$ is a statistical map. Define $ F(B) := T'\chi_{B} $. Then Lemma 3 implies
$ F(B) \in \eh $ and $ F(\Omega) = T'\chi_{\Omega} = I $, and from
$ \mathrm{tr} \, VF(B) = \mathrm{tr} \, V(T'\chi_{B}) $ $ = \int \chi_{B} \, \mathrm{d}(TV) = (TV)(B) $
it follows that $ TV = \mathrm{tr} \, VF(\, . \,) $. Next we show that $F$ is $ \sigma $-additive. Taking
a sequence of disjoint sets $ B_i \in \Sigma $, we obtain
\begin{eqnarray*}
\mathrm{tr} \, VF \left( \bigcup_{i=1}^{\infty} B_i \right)
& = & (TV) \left( \bigcup_{i=1}^{\infty} B_i \right)
  =   \sum_{i=1}^{\infty} (TV)(B_i)                                           \\
& = & \lim_{n \to \infty} \sum_{i=1}^{n} (TV)(B_i)                            \\
& = & \lim_{n \to \infty} \mathrm{tr} \left( V\sum_{i=1}^{n} F(B_i) \right)
\end{eqnarray*}
for all $ V \in \tsh $. Consequently,
\[
F\left( \bigcup_{i=1}^{\infty} B_i \right)
= \sigma\mbox{-} \! \! \lim_{n \rightarrow \infty} \sum_{i=1}^{n} F(B_i)
= \sum_{i=1}^{\infty} F(B_i).
\]
Hence, $F$ is an observable satisfying $ TV = \mathrm{tr} \, VF(\, . \,) $. It is obvious that such an observable is
uniquely determined. Moreover, if $T$ is injective, $F$ is statistically complete.

Finally, for $ f \in \fos $ we have
\[
\mathrm{tr} \, V(T'f) = \int f \, \mathrm{d}(TV) = \int f \, \mathrm{d}({\rm tr} \, VF(\, . \,))
                              = \mathrm{tr} \left( V \int f \, \mathrm{d}F \right).
\]
Thus, $ T'f = \int f \, \mathrm{d}F $.
\end{proof}

If the measurable space $ \os $ satisfies the conditions that (i) $ \Omega $ contains the one-point sets
$ \{ \omega \} $, $ \omega \in \Omega $, and (ii) the extreme points of the convex set $ \sos $ are just
the Dirac mesures $ \delta_{\omega} $, then it is easy to see that, if $ \dim \hi \geq 2 $,
a semiclassical representation cannot map $ \sh $ onto $ \sos $ (the condition on $ \os $ is satisfied
for a second countable Hausdorff space $ \Omega $ with $ \Sigma $ being the $ \sigma $-algebra
of its Borel sets). Namely, let $ \fii_1 $ and $ \fii_2 $ be two orthogonal unit vectors of $ \hi $,
and let $ \chi_1 $ and $ \chi_2 $ be a second such pair which differs from the first by more than phase factors
but generates the same two-dimensional subspace of $ \hi $. Then
\begin{equation}
W := \frac{1}{2}(P_{\fii_1} + P_{\fii_2}) = \frac{1}{2} (P_{\chi_1} + P_{\chi_2}) \in \sh
\end{equation}
where $ P_{\fii_1} = \kb{\fii_1}{\fii_1} $ etc., and
\begin{equation}
\mu := TW = \widetilde{T}(W) = \frac{1}{2}\widetilde{T}(P_{\fii_1}) + \frac{1}{2}\widetilde{T}(P_{\fii_2})
                             = \frac{1}{2}\widetilde{T}(P_{\chi_1}) + \frac{1}{2}\widetilde{T}(P_{\chi_2})
\end{equation}
where $ \widetilde{T} \! : \sh \to \sos $ is the affine restriction of a semiclassical representation $T$. If
$ \widetilde{T} $ were bijective, then $ \widetilde{T} $ would be an affine isomorphism and would map
the extreme points $ P_{\fii} = \kb{\fii}{\fii} $ of $ \sh $ onto the extreme points
$ \delta_{\omega} $ of $ \sos $. Hence, Eq.\ (10) entails that
\[
\mu = \frac{1}{2}\delta_{\omega_1}  + \frac{1}{2}\delta_{\omega_2}
    = \frac{1}{2}\delta_{\omega'_1} + \frac{1}{2}\delta_{\omega'_2}
\]
where the four points $ \omega_1,\omega_2,\omega'_1,\omega'_2 \in \Omega $ are different. This is
a contradiction. Thus, $ \widetilde{T} $ cannot be bijective and $T$ cannot fulfil
$ T\sh = \sos $.---The existence of different convex decompositions of the state
$W$ according to (9) and the uniqueness of the convex linear combination
$ \mu = \frac{1}{2}\delta_{\omega_1}  + \frac{1}{2}\delta_{\omega_2} $ expresses
one of the crucial differences between quantum and classical probability.

We consider another argument that proves the same result under the only condition $ \dim \hi \geq 2 $. If
a semiclassical representation $T$ mapped $ \sh $ onto $ \sos $, it would map the positive cone of $ \tsh $
bijectively onto the positive cone of $ \mos $. In consequence, $T$ would be a bijective positive linear map
with positive inverse $ T^{-1} $; in particular, $T$ would be an order isomorphism between the ordered Banach spaces
$ \tsh $ and $ \mos $. This is a contradiction since $ \mos $ is a vector lattice whereas $ \tsh $ is not. Hence,
$ T\sh $ must be a proper subset of $ \sos $, and $T$ cannot be a bijective positive linear map with positive inverse
$ T^{-1} $. However, a semiclassical representation can be bijective (see Section 4); in this case $T$ is
a bijective positive linear map, but $ T^{-1} $ is not positive.

A third interesting argumentation proceeds as follows. If $ \dim \hi \geq 2 $, then
$ \dim \tsh \geq 4 $ as well as $ \dim \mos \geq 4 $, the latter because $T$ is injective. Consequently,
there exist, without any further assumption on $ (\Omega,\Sigma) $, two different Dirac measures
$ \delta_{\omega_1} $ and $ \delta_{\omega_2} $. Now, if $ T\sh = \sos $ were satisfied, there would exist
$ W_1,W_2 \in \sh $ such that $ TW_1 = \delta_{\omega_1} $ and $ TW_2 = \delta_{\omega_2} $ or, equivalently,
\begin{eqnarray*}
\mathrm{tr} \, W_1F(\, . \,) & = & \delta_{\omega_1}   \\
\mathrm{tr} \, W_2F(\, . \,) & = & \delta_{\omega_2}
\end{eqnarray*}
where $F$ is the statistically complete observable corresponding to $T$. For a set
$ B \in \Sigma $ containing $ \omega_1 $ but not $ \omega_2 $, it follows that
$ \mathrm{tr} \, W_1F(B) = 1 $ and $ \mathrm{tr} \, W_2F(B) = 0 $. Therefore, the effect
$ F(B) $ would have the eigenvalues $0$ and $1$, which is, as already mentioned, not possible
for a statistically complete observable.

In view of the fact that a statistical physical theory is based on states, effects, and their respective probabilities, we supplement the definition of a semiclassical representation by the representation of the quantum mechanical effects
by functions. A \emph{classical representation of quantum mechanics} \cite{fer10} is a pair
$ (\widetilde{T},\widetilde{S}) $ of affine mappings
$ \widetilde{T} \! : \sh \to \sos $ and $ \widetilde{S} \! : \eh \to \eos $ such that
$ \widetilde{S}0 = 0 $ and
\begin{equation}
\mathrm{tr} \, WA = \int f \, \mathrm{d}\mu
\end{equation}
where $ W \in \sh $, $ A \in \eh $, $ \mu = \widetilde{T}W $, and $ f = \widetilde{S}A $ (similar definitions
are given in \cite{fer11,fer08;09}). From Eq.\ (11) it follows that
$ \widetilde{T} $ and $ \widetilde{S} $ are injective (see Lemma~4 below). The condition
$ \widetilde{S}0 = 0 $ is necessary to extend $ \widetilde{S} $, like $ \widetilde{T} $,
to a linear map and does not follow from (11) since, for $ \dim \hi \geq 2 $, $ \widetilde{T} $ is not surjective;
the linear extension $S$ of $ \widetilde{S} $ maps $ \bsh $ into $ \fos $, $S$ is uniquely determined and positive.

\begin{defin}
We call a pair $ (T,S) $ of linear maps $ T \! : \tsh \to \mos $ and $ S \! : \bsh \to \fos $ a \emph{classical
representation of quantum mechanics on $ \os $} if
\begin{enumerate}
\item[(i)] $ T\sh \subseteq \sos $
\item[(ii)] $ S\eh \subseteq \eos $
\item[(iii)] for all $ W \in \sh $ and all $ A \in \eh $,
\begin{equation}
\mathrm{tr} \, WA = \int SA \, \mathrm{d}(TW).
\end{equation}
\end{enumerate}
\end{defin}

\begin{lemma}
For a classical representation $ (T,S) $, the maps $T$ and $S$ are injective. In particular,
$T$ is a semiclassical representation; furthermore, $S$ is positive and bounded with
$ \|S\| = \|SI\| = 1 $. The statement $ \mathrm{tr} \, WA = 1 $ for $ W \in \sh $ and $ A \in \eh $ is equivalent to
$ f = \chi_{\Omega} $ $ \mu $-almost everywhere where $ f = SA $ and $ \mu = TW $. Finally,
\begin{equation}
\mathrm{tr} \, VA = \int SA \, \mathrm{d}(TV)
\end{equation}
holds for all $ V \in \tsh $ and all $ A \in \bsh $.
\end{lemma}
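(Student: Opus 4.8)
The plan is to derive every assertion from the two inclusions of Definition~2 and the probability identity (12), using two soft facts: $\sh$ positively generates $\tsh$ and $\eh$ positively generates $\bsh$ (since $V=(\tr{V})\bigl(V/\tr{V}\bigr)$ with $V/\tr{V}\in\sh$ whenever $0\neq V\geq 0$, and $A=\no{A}\bigl(A/\no{A}\bigr)$ with $A/\no{A}\in\eh$ whenever $0\neq A\geq 0$), and that in the dual pair $\dual{\tsh}{\bsh}$ each space separates the other---already the projections $P_{\fii}$, $\no{\fii}=1$, which are simultaneously pure states and effects, do the job on both sides via the quadratic form. First I would record that $S$ is positive: for $0\neq A\geq 0$ we have $S\bigl(A/\no{A}\bigr)\in\eos$ by condition~(ii), hence $S\bigl(A/\no{A}\bigr)\geq 0$ in $\fos$, hence $SA=\no{A}\,S\bigl(A/\no{A}\bigr)\geq 0$. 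Since condition~(i) makes $T$ a statistical map, Lemma~3 moreover supplies $\no{T}=1$ and $T'\chi_{\Omega}=I$, of which we shall need only a little.

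Next, the norm of $S$. From $S\geq 0$ and $SI\in\eos=[0,\chi_{\Omega}]$ we get, for any $A$ with $-I\leq A\leq I$, that $-SI\leq SA\leq SI\leq\chi_{\Omega}$, so $|SA|\leq\chi_{\Omega}$ pointwise and $\no{SA}\leq 1$; hence $\no{S}\leq 1$. For the reverse inequality, put $A=I$ in (12): for every $W\in\sh$ (and one exists, since $\hi\neq\{0\}$) we get $\int SI\,\mathrm{d}(TW)=\tr{W}=1$, and because $0\leq SI\leq\chi_{\Omega}$ while $TW$ is a probability measure this forces $SI=\chi_{\Omega}$ $(TW)$-almost everywhere; the set on which $SI=1$ therefore has positive $(TW)$-measure, hence is nonempty, so $\no{SI}=\sup_{\Omega}SI=1$. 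With $\no{I}=1$ this gives $\no{S}=\no{SI}=1$.

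For injectivity I would argue first on the base sets and then lift. If $TW_{1}=TW_{2}$ with $W_{1},W_{2}\in\sh$, then (12) gives $\tr{W_{1}A}=\tr{W_{2}A}$ for all $A\in\eh$, in particular for $A=P_{\fii}$, so $W_{1}$ and $W_{2}$ have the same quadratic form and $W_{1}=W_{2}$; thus $T|_{\sh}$ is injective. Symmetrically, $SA_{1}=SA_{2}$ with $A_{1},A_{2}\in\eh$ yields, via (12), $\tr{WA_{1}}=\tr{WA_{2}}$ for all $W\in\sh$, hence $A_{1}=A_{2}$, so $S|_{\eh}$ is injective. To pass to $\tsh$, let $TV=0$ and write $V=V^{+}-V^{-}$; then $TV^{+}=TV^{-}$, and evaluating at $\Omega$ (where $(TX)(\Omega)=\tr{X(T'\chi_{\Omega})}=\tr{X}$) gives $\tr{V^{+}}=\tr{V^{-}}$. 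If this common value is $0$ then $V^{+}=V^{-}=0$; otherwise divide by it and use injectivity of $T|_{\sh}$ to obtain $V^{+}=V^{-}$. In either case $V=0$, so $T$ is injective, and being also a statistical map it is a semiclassical representation. Similarly, if $A\in\bsh$ with $SA=0$, write $A=A^{+}-A^{-}$ and put $\lambda=1/\no{A}$; since $\no{A}=\max\{\no{A^{+}},\no{A^{-}}\}$ we have $\lambda A^{+},\lambda A^{-}\in\eh$, so $S(\lambda A^{+})=S(\lambda A^{-})$ forces $\lambda A^{+}=\lambda A^{-}$ by injectivity of $S|_{\eh}$, hence $A=0$ and $S$ is injective.

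It remains to note the last two items. For $W\in\sh$, $A\in\eh$ set $f=SA\in\eos$ and $\mu=TW\in\sos$; then $0\leq f\leq\chi_{\Omega}$ and, by (12), $\tr{WA}=\int f\,\mathrm{d}\mu$, so $\tr{WA}=1$ holds iff $\int(\chi_{\Omega}-f)\,\mathrm{d}\mu=0$ with nonnegative integrand, i.e.\ iff $f=\chi_{\Omega}$ $\mu$-almost everywhere. Finally, (13) is the bilinear extension of (12): for fixed $A\in\eh$ the map $V\mapsto\tr{VA}-\int SA\,\mathrm{d}(TV)$ is linear on $\tsh$ and vanishes on $\sh$, hence on $\mathrm{lin}\,\sh=\tsh$; then for fixed $V\in\tsh$ the map $A\mapsto\tr{VA}-\int SA\,\mathrm{d}(TV)$ is linear on $\bsh$ and vanishes on $\eh$, hence on $\mathrm{lin}\,\eh=\bsh$. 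I expect no real obstacle; the only steps meriting care are the two liftings of injectivity from the base sets to the whole spaces (Jordan decomposition together with a normalization into $\sh$ resp.\ $\eh$), and the observation that $\no{SI}=1$ genuinely uses the probability identity---it does not follow from $S\eh\subseteq\eos$ alone---which is why one tests (12) at $A=I$ against an arbitrary state.
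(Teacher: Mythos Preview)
Your proof is correct, but the organization differs from the paper's in one instructive way. The paper establishes the bilinear identity (13) \emph{first}---exactly by the linear-extension argument you place at the end---and then injectivity of $T$ and $S$ on the full spaces $\tsh$ and $\bsh$ follows in a single line from (13) together with the fact that $\bsh$ separates $\tsh$ and vice versa; no Jordan decomposition or normalization into $\sh$ resp.\ $\eh$ is needed. Your route (injectivity on the base sets via (12), then lift by decomposing into positive parts) is valid but does more work. Similarly, for the norm the paper first proves the abstract equality $\no{S}=\no{SI}$ from positivity alone (via $|SA|\leq SI$ for $\no{A}\leq 1$, hence $\no{SA}\leq\no{SI}$), and only afterwards obtains $\no{SI}=1$ from the $\mu$-a.e.\ statement applied at $A=I$; you instead bound $\no{S}\leq 1$ directly using $SI\leq\chi_{\Omega}$ and then show $\no{SI}=1$. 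Both orderings reach the same conclusion. One tiny omission: when lifting injectivity of $S$ you set $\lambda=1/\no{A}$, so you should explicitly dispose of the case $A=0$ first.
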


\begin{proof}
Eq.\ (13) follows from (12) by linearity. Now, $ TV_1 = TV_2 $ for $ V_1,V_2 \in \tsh $ implies that
\[
\mathrm{tr} \, V_1A = \int SA \, \mathrm{d}(TV_1) = \int SA \, \mathrm{d}(TV_2) = \mathrm{tr} \, V_2A
\]
for all $ A \in \bsh $, consequently, $ V_1 = V_2 $. Hence, $T$ is injective and thus a semiclassical
representation. Analogously, $S$ is injective.

Clearly, $S$ is a positive linear map. Since the closed unit ball of $ \bsh $ is just the interval $ [-I,I] $,
it follows that, for $ A \in \bsh $ with $ \|A\| \leq 1 $,
\[
-SI \leq SA \leq SI
\]
holds or, equivalently, $ |SA| \leq SI $. Therefore,
$ \|SA\| = \sup_{\omega \in \Omega} |(SA)(\omega)| \leq \sup_{\omega \in \Omega} (SI)(\omega) = \|SI\| $,
and in consequence, $ \|S\| = \sup_{\|A\| \leq 1} \|SA\| \leq \|SI\| $. Since $ \|I\| = 1 $,
$ \|S\| \geq \|SI\| $ holds, and we obtain $ \|S\| = \|SI\| $.

Let $ W \in \sh $, $ A \in \eh $ and $ \mu = TW $, $ f = SA $. Then the statement $ \mathrm{tr} \, WA = 1 $
is equivalent to $ \int f \, \mathrm{d}\mu = 1 $, and the latter can equivalently be reformulated as
$ \int (\chi_{\Omega} - f) \, \mathrm{d}\mu = 0 $. But this means $ f = \chi_{\Omega} $ $ \mu $-almost everywhere
because $ \chi_{\Omega} - f \geq 0 $.---For $ A = I $ we obtain $ f = SI = \chi_{\Omega} $ $ \mu $-a.e.\ for
$ \mu = TW $ and all $ W \in \sh $. Thus, in particular, $ \|SI\| = 1 $, and we conclude that $ \|S\| = \|SI\| = 1 $.
\end{proof}

We remark that Definition 2 does not imply that $ SI = \chi_{\Omega} $. This can be seen by
embedding the measurable space $ \os $ into a larger one, $ (\Omega',\Sigma') $, i.e.,
$ \Omega \subset \Omega' \neq \Omega $, $ \Sigma \subset \Sigma' $, and
$ \Sigma' \cap \Omega = \{ B \in \Sigma' \, | \, B \subseteq \Omega \} = \Sigma $. Then
a classical representation $ (T,S) $ on $ \os $ can be understood as such a one on $ (\Omega',\Sigma') $
satisfying $ SI \neq \chi_{\Omega'} $.---In view of Theorem 3 below, this remark as well as Lemma~4
are meaningless; however, some of the statements on classical representations transfer to the quasi-probability
representations introduced in Section 4.

According to Ludwig \cite{lud70,lud83}, two effects $ F,G \in \eh $ are called \emph{coexistent} if there exist
effects $ F',G',H \in \eh $ such that $ F = F' + H $, $ G = G' + H $, and $ F' + G' + H \leq I $. The
underlying interpretation is that coexistent effects can be measured ``simultaneously,'' i.e.,
together by one measuring device. If $F$ and $G$ are orthogonal projections, their coexistence
is equivalent to $ FG = GF $; $F'$, $G'$, and $H$ are then uniquely determined, namely,
$ H = FG = GF $, $ F' = F(I - G) $, $ G' = G(I - F) $. It is easy to see that two effects
are coexistent if and only if they are contained in the range of some observable.---Two classical effects
$ f,g \in \eos $ are always coexistent, that is, there exist functions $ f',g',h \in \eos $ such that
$ f = f' + h $, $ g = g' + h $, and $ f' + g' + h \leq \chi_{\Omega} $. In fact, define
$ h := \inf \{ f,g \} $, $ f' := f - h $, $ g' := g - h $. If $f$ and $g$ are characteristic functions, then
$f'$, $g'$, and $h$ are uniquely determined characteristic functions.

\begin{theorem} \mbox{}
\begin{enumerate}
\item[(a)] A linear map $ L \! : \fos \to \bsh $ cannot have the property $ L\eos = \eh $ (provided that
$ \dim \hi \geq 2 $). In particular, for a statistical map $T$ or a semiclassical representation,
$ T'\eos $ must be a proper subset of $ \eh $.
\item[(b)] A classical representation of quantum mechanics ($ \dim \hi \geq 2 $) does not exist.
\end{enumerate}
\end{theorem}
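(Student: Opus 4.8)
The plan is to derive part (b) from part (a), and to prove part (a) by showing that the hypothesis $L\eos=\eh$ would force \emph{all} quantum effects to be pairwise coexistent, which is impossible once $\dim\hi\ge2$.

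For part (a), I would argue by contradiction: assume $L\colon\fos\to\bsh$ is linear and $L\eos=\eh$. The key remark is that such an $L$ transports coexistence from $\eos$ into $\eh$. Indeed, if $f,g\in\eos$ are coexistent, say $f=f'+h$, $g=g'+h$ with $f',g',h\in\eos$ and $f'+g'+h\le\chi_\Omega$, then by linearity $Lf=Lf'+Lh$, $Lg=Lg'+Lh$, and $Lf'+Lg'+Lh=L(f'+g'+h)$; since $f',g',h$ and also $f'+g'+h$ lie in $\eos$ and $L\eos=\eh=[0,I]$, all these images lie in $\eh$, and in particular $Lf'+Lg'+Lh=L(f'+g'+h)\le I$. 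Hence $Lf$ and $Lg$ are coexistent. But every pair $f,g\in\eos$ is coexistent (take $h=\inf\{f,g\}$, $f'=f-h$, $g'=g-h$, as recalled above), so every pair of elements of $L\eos=\eh$ would be coexistent. This fails for $\dim\hi\ge2$: pick a two-dimensional subspace with orthonormal basis $e_1,e_2$, put $P=\kb{e_1}{e_1}$ and $Q=\kb{v}{v}$ with $v=(e_1+e_2)/\sqrt2$; then $P,Q\in\eh$ are orthogonal projections with $PQ\ne QP$, hence not coexistent — a contradiction. For the supplementary assertion, $T'\colon\fos\to\bsh$ is linear and, by Lemma~3, $T'\eos\subseteq\eh$ for a statistical map $T$; if $T'\eos=\eh$ held we would contradict what was just shown, so $T'\eos$ is a proper subset of $\eh$.

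For part (b), suppose $(T,S)$ were a classical representation. By Lemma~4, $T$ is a semiclassical representation and hence a statistical map, and Eq.\ (13) gives $\tr VA=\int SA\,\mathrm{d}(TV)$ for all $V\in\tsh$, $A\in\bsh$. Unfolding the right-hand side through the duality $\dual{\mos}{\fos}$ and then, via Eq.\ (1), through $\dual{\tsh}{\bsh}$ yields $\int SA\,\mathrm{d}(TV)=\dual{TV}{SA}=\dual{V}{T'SA}=\tr V(T'SA)$, so $\tr\bigl(V(A-T'SA)\bigr)=0$ for every $V\in\tsh$; by non-degeneracy of the trace pairing $\dual{\tsh}{\bsh}$ this forces $T'SA=A$ for all $A\in\bsh$, i.e.\ $T'S=\mathrm{id}_{\bsh}$. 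Now for $A\in\eh$ we have $SA\in\eos$ by condition (ii) of Definition~2, whence $A=T'(SA)\in T'\eos$; thus $\eh\subseteq T'\eos$, and together with $T'\eos\subseteq\eh$ (Lemma~3) this gives $T'\eos=\eh$, contradicting part (a). Hence no classical representation of quantum mechanics exists.

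The argument is short and mostly mechanical; the one place deserving attention is the coexistence-transport step in (a), where one must notice that only \emph{finitely many} affine identities together with membership of $f'+g'+h$ in the order-unit interval are involved, so that a bare linear $L$ — not assumed positive or even bounded — already carries a classical coexistence witness to a quantum one as soon as $L\eos\subseteq\eh$. The genuine content is then twofold: the reduction of (b) to (a) via the identity $T'S=\mathrm{id}_{\bsh}$, and the fact that the quantum effects are not universally coexistent. Note that this establishes the \emph{exact} non-existence without invoking any denseness statement; the approximate counterpart discussed later will need the stronger non-denseness property of coexistent sets.
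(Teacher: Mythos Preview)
Your proof is correct and follows essentially the same route as the paper's own argument: for (a) you transport the universal coexistence of classical effects through $L$ to obtain the contradiction that all quantum effects would be pairwise coexistent, and for (b) you derive $T'SA=A$ from the probability-preservation identity and conclude $\eh\subseteq T'\eos$, contradicting (a). The only cosmetic addition is your explicit exhibition of two non-commuting rank-one projections as a non-coexistent pair, which the paper leaves implicit.
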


\begin{proof}
Assume a linear map $ L \! : \fos \to \bsh $ satisfied $ L\eos = \eh $. Then, for any two effects
$ F,G \in \eh $, there would exist classical effects $ f,g \in \eos $ such that $ F = Lf $ and $ G = Lg $. Since
$f$ and $g$ are coexistent, there exist $ f',g',h \in \eos $ such that $ f = f' + h $, $ g = g' + h $ and
$ f' + g' + h \in \eos $. It follows that
\begin{eqnarray*}
F & = & Lf' + Lh   \\
G & = & Lg' + Lh
\end{eqnarray*}
where $ Lf',Lg',Lh \in \eh $ and $ Lf' + Lg' + Lh = L(f' + g' + h) \in \eh $. Hence, $F$ and $G$ would be
coexistent effects, which is a contradiction ($ \dim \hi \geq 2 $) as they are arbitrary. Thus,
$ L\eos \neq \eh $.---By Lemma 3, $ T' \! : \fos \to \bsh $ satisfies $ T'\eos \subseteq \eh $;
in view of the preceding result, we conclude that $ T'\eos \subset \eh $.

Now assume $ (T,S) $ were a classical representation of quantum mechanics. From Eq.\ (13) we obtain that, for all
$ V \in \tsh $ and all $ A \in \bsh $,
\[
\mathrm{tr} \, VA = \int SA \, \mathrm{d}(TV) = \mathrm{tr} \, V(T'SA).
\]
This implies that, for all $ A \in \bsh $, $ A = T'SA $. In particular, for $ A \in \eh $, $ A = T'f $ holds with
$ f := SA \in \eos $; therefore, $ T'\eos \supseteq \eh $. But this contradicts part (a) of the theorem. Hence,
a classical representation does not exist.
\end{proof}

Some other simple arguments to prove the first statement of part (a) of Theorem 3 can be given
under the additional assumption that the linear map $L$ is injective. If $L$ is injective and
$ L\eos = \eh $ were satisfied, then $L$ would be a bijective positive linear map with positive inverse
$ L^{-1} $ and in particular an order isomorphism between the ordered Banach spaces $ \fos $ and $ \bsh $,
in contradiction to the fact that the former is a vector lattice whereas the latter is not
($ \dim \hi \geq 2 $). Moreover, $L$ would induce an order isomorphism between $ \eos $ and $ \eh $,
but again the former is a lattice whereas the latter is not. Finally, this order isomorphism would also be
an affine isomorphism between $ \eos $ and $ \eh $, in consequence, the extreme boundaries of
$ \eos $ and  $ \eh $ would be order isomorphic. Again, this is a contradiction since the characteristic functions
$ \chi_B $, $ B \in \Sigma $, constitute a Boolean lattice whereas the orthogonal projections of
$ \hi $ form a non-Boolean lattice.

We notice that, for a statistical map $T$ and in particular for a semiclassical representation, any two effects
of the convex set $ T'\eos $ are coexistent. From this the second statement of part (a) of the theorem follows directly.

\section{Nonexistence of Approximate\\
Classical Representations}

By means of a semiclassical representation, the quantum states can be identified with probability measures and
probability densities, respectively. That is, the quantum states can be represented by classical states. Although
the representation of the quantum mechanical effects by classical effects is, according to Theorem 3,
not possible in this context, the former ones can at least approximately be represented by functions
\cite{sin92,stu97}. The precise formulation of this statement is presented by the following theorem;
for the purpose of its proof, recall that $ \sigma(\bsh,\tsh) = \sigma(\bsh,\sh) $ and that consequently
a $ \sigma $-neighborhood base of $ A \in \bsh $ is given by the open sets
\begin{equation}
U(A;W_1,\ldots,W_n;\varepsilon) := \{ \widetilde{A} \in \bsh \, | \,
|\mathrm{tr} \, W_i\widetilde{A} - \mathrm{tr} \, W_iA| < \varepsilon \ \mathrm{for} \ i= 1,\ldots,n \}
\end{equation}
where $ \varepsilon > 0 $ and $ W_i \in \sh $ (cf.\ Eqs.\ (4) and (5)). Also note the interpretation of
$ \sigma(\bsh,\tsh) $ as the topology of the physical approximation of effects which results from (14). An effect
$ A \in \eh $ is physically approximated by $ \widetilde{A} \in \eh $ if in many (but finitely many) states
$ W_1,\ldots,W_n $ the probabilities $ \mathrm{tr} \, W_i\widetilde{A} $ differ from $ \mathrm{tr} \, W_iA $
by an amount less than a small $ \varepsilon > 0 $. This statement can be tested experimentally and can
be characterized mathematically by $ \widetilde{A} \in U(A;W_1,\ldots,W_n;\varepsilon) $.

\begin{theorem}
Let $ T \!\! : \tsh \to \mos $ be a semiclassical representation on the measurable space $ \os $. Then for every
$ A \in \eh $, every $ \varepsilon > 0 $, and any finitely many states $ W_1,\ldots,W_n \in \sh $ there exists
a function $ f \in \fos $ such that
\begin{equation}
\left| {\rm tr} \, W_i A - \int f \, \mathrm{d}\mu_i \right| < \varepsilon
\end{equation}
holds where $ \mu_i := TW_i $ ($ i=1,\ldots,n $).
\end{theorem}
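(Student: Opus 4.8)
The plan is to reduce the claim to the $\sigma$-density of the range of the dual map $T'$ in $\bsh$, which in turn follows from Lemma~1 via Theorem~2. Specifically, since $T$ is a semiclassical representation, Theorem~2 identifies $T$ with a statistically complete observable $F$ on $\os$, where $F(B) = T'\chi_B$. By Lemma~1, statistical completeness of $F$ is equivalent to the linear hull of $F(\Sigma) = \{T'\chi_B \mid B \in \Sigma\}$ being $\sigma$-dense in $\bsh$. Since $\mathrm{lin}\,F(\Sigma) = \mathrm{lin}\,\{T'\chi_B \mid B \in \Sigma\} \subseteq T'\fos$ (every simple function is a finite linear combination of characteristic functions, and $T'$ is linear), it follows a fortiori that $T'\fos$ is $\sigma$-dense in $\bsh$.

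Next I would unwind what $\sigma$-density of $T'\fos$ means at the point $A \in \eh$ using the neighbourhood base from Eq.~(14). Given $A \in \eh$, $\varepsilon > 0$, and $W_1,\dots,W_n \in \sh$, the set $U(A;W_1,\dots,W_n;\varepsilon)$ is a $\sigma$-open neighbourhood of $A$, so by density there exists $g \in \fos$ with $T'g \in U(A;W_1,\dots,W_n;\varepsilon)$, i.e.
\[
\left| \mathrm{tr}\,W_i(T'g) - \mathrm{tr}\,W_i A \right| < \varepsilon \quad \text{for } i = 1,\dots,n.
\]
Finally I would apply the defining relation (2) of the dual map, $\mathrm{tr}\,W_i(T'g) = \int g \, \mathrm{d}(TW_i) = \int g \, \mathrm{d}\mu_i$ with $\mu_i := TW_i$, so that setting $f := g$ gives exactly $\left|\mathrm{tr}\,W_i A - \int f \, \mathrm{d}\mu_i\right| < \varepsilon$, which is Eq.~(15).

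There is really no serious obstacle here; the whole content has already been packaged into the earlier lemmata. The one point worth stating carefully is why $\mathrm{lin}\,F(\Sigma) \subseteq T'\fos$: this uses that characteristic functions $\chi_B$ lie in $\fos$ (they are bounded measurable) and that $T'$ is linear, so $\sum_k c_k F(B_k) = T'\!\left(\sum_k c_k \chi_B\right) \in T'\fos$. Hence $\overline{T'\fos}^{\,\sigma} \supseteq \overline{\mathrm{lin}\,F(\Sigma)}^{\,\sigma} = \bsh$, and density at $A$ together with the identity (2) finishes the argument. It may also be remarked that $f$ need not lie in $\eos$ — indeed Theorem~3 shows it generically cannot — which is precisely the phenomenon the following section exploits.
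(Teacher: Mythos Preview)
Your proof is correct and follows essentially the same route as the paper's own argument: identify $T$ with a statistically complete observable $F$ via Theorem~2, invoke Lemma~1 to get $\sigma$-density of $\mathrm{lin}\,F(\Sigma)$ and hence of $T'\fos$, then read off the conclusion from the neighbourhood base (14) together with the duality relation (2). The paper's proof is nearly identical in structure and level of detail; your additional remark that $f$ need not lie in $\eos$ anticipates exactly the discussion following the theorem.
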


\begin{proof} Since $ T \!\! : \tsh \to \mos $ is an injective linear map and $ \fos $ separates $ \mos $,
the range of $ T' \! : \fos \to \bsh $ is a $ \sigma $-dense subspace of $ \bsh $. This is a consequence
of a general result in duality theory, but in our case it also follows from Lemma 1. Using
the statistically complete observable $F$ that corresponds to $T$ according to Theorem 2, we obtain
\[
R(T') := T'\fos \supseteq {\rm lin} \, \{ T'\chi_B \, | \, B \in \Sigma \} = {\rm lin} \, F(\Sigma)
\]
where we have taken account of Eq.\ (8). By Lemma 1, the linear hull of $ F(\Sigma) $ is
$ \sigma $-dense in $ \bsh $, therefore, $ R(T') $ also.

From the $ \sigma $-denseness of $ R(T') $ in $ \bsh $ and Eq.\ (14) it follows that, for every $ A \in \bsh $,
every $ \varepsilon > 0 $, and any $ W_1,\ldots,W_n \in \sh $, there exists a function $ f \in \fos $ satisfying
\[
| {\rm tr} \, W_i A - {\rm tr} \, W_i (T'f) | < \varepsilon.
\]
Now, the assertion is implied by $ {\rm tr} \, W_i (T'f) = \int f \, \mathrm{d}(TW_i) = \int f \, \mathrm{d}\mu_i $
and the particular choice $ A \in \eh $.
\end{proof}

If $ \widehat{T} \!\! : \tsh \to L^1_{\mathbb{R}}(\Omega,\Sigma,\lambda) $ is a semiclassical representation on the
$ \sigma $-finite measure space $ (\Omega,\Sigma,\lambda) $, then the analog of the theorem reads as follows. For
every $ A \in \eh $, every $ \varepsilon > 0 $, and any $ W_1,\ldots,W_n \in \sh $ there exists a function
$ f \in L^{\infty}_{\mathbb{R}}(\Omega,\Sigma,\lambda) $ such that
\begin{equation}
\left| {\rm tr} \, W_i A - \int \varrho_i f \, \mathrm{d}\lambda \right| < \varepsilon
\end{equation}
holds where $ \varrho_i := \widehat{T} W_i $ ($ i=1,\ldots,n$). This result can be concluded from Theorem 4 or
from duality theory (cf.\ \cite{stu92}).

Theorem 4 as well as statement (16) mean that the quantum mechanical probabilities can, at least in arbitrarily
good physical approximation, be represented by corresponding classical expressions. The approximation involved
in Theorem 4 and statement (16) is physical in the sense that probabilities cannot be measured exactly and in the
laboratory physicists are not able to prepare more than finitely many states. In particular, one can work with the
same small $ \varepsilon > 0 $ and the same many states $ W_1,\ldots,W_n $ for all effects.

If, for every effect $ A \in \eh $ and every desired accuracy, it should be possible to choose a representing function
$f$ as a classical effect, i.e., if every $ \sigma $-neighborhood $ U(A;W_1,\ldots,W_n;\varepsilon) $ of
$ A \in \eh $ should contain an element $ T'f $ with $ f \in \eos $ (and not only with $ f \in \fos $),
then the semiclassical representation $T$ together with approximate classical representatives
$ f \in \eos $ for the effects $ A \in \eh $ can, in the sense of Definition 2, be said to constitute
an \emph{approximate classical representation of quantum mechanics on $ (\Omega,\Sigma) $}. Such an approximate
classical representation, induced by a semiclassical representation $T$, exists if and only if the convex set
$ T'\eos $ should be $ \sigma $-dense in $ \eh $. But this is not possible, as we shall conclude
from the next theorem below.

A set $M$ of quantum mechanical effects is called \emph{coexistent} if $M$ is contained in the range
of some observable; in particular, any two effects of a coexistent set are coexistent in the sense
of the definition given earlier in the context of Theorem 3. However, a set of effects being
\emph{pairwise coexistent} need not be coexistent, as can be seen by the following example. Let
$ \varphi_1,\varphi_2 \in \hi $ be two orthogonal unit vectors, let $ \psi_1 := \varphi_1 $,
$ \psi_2 := \frac{\sqrt{3}}{2} \varphi_1 + \frac{1}{2} \varphi_2 $,
$ \psi_3 := \frac{1}{2} \varphi_1 + \frac{\sqrt{3}}{2} \varphi_2 $, and define the effects
$ G_i := \frac{1}{2}P_{\psi_i} = \frac{1}{2} |\psi_i\rangle \langle\psi_i| $, $ i = 1,2,3 $. These three effects
are pairwise coexistent. Since $ G_1 + G_2 + G_3 \not\leq I $ and any two different conditions of the form
$ A \leq G_i $, $ A \in \eh $, imply that $ A = 0 $, there is no observable $F$ on some measurable space
$ \os $ such that $ G_1,G_2,G_3 \in F(\Sigma) $. Thus, the set $ \{G_1,G_2,G_3\} $ is not coexistent.

\begin{lemma}
Let $ M \subset \eh $ be a set of pairwise coexistent effects. Then any two (finite) convex linear combinations of effects of $M$ are coexistent.
\end{lemma}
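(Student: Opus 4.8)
The plan is to work directly with Ludwig's definition of coexistence and to observe that the family of decompositions ``$F=F'+H$, $G=G'+H$, $F'+G'+H\le I$'' is stable under convex combinations, provided the auxiliary effects are indexed by \emph{pairs}. So let $A=\sum_{i=1}^{m}\lambda_i A_i$ and $B=\sum_{j=1}^{n}\mu_j B_j$ be convex linear combinations of effects $A_i,B_j\in M$ (thus $\lambda_i,\mu_j\ge 0$ and $\sum_i\lambda_i=\sum_j\mu_j=1$); since $\eh$ is convex we have $A,B\in\eh$. For each pair $(i,j)$ the effects $A_i$ and $B_j$ are coexistent --- this is the hypothesis on $M$ when $A_i\neq B_j$, and the trivial self-coexistence (take $H=A_i$, $F'=G'=0$) when $A_i=B_j$ --- so there exist $P_{ij},Q_{ij},R_{ij}\in\eh$ with
\[
A_i=P_{ij}+R_{ij},\qquad B_j=Q_{ij}+R_{ij},\qquad P_{ij}+Q_{ij}+R_{ij}\le I .
\]

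Next I would set $P:=\sum_{i,j}\lambda_i\mu_j P_{ij}$, $Q:=\sum_{i,j}\lambda_i\mu_j Q_{ij}$, $R:=\sum_{i,j}\lambda_i\mu_j R_{ij}$. Each of these is positive and, being a convex combination of elements of the order-unit interval $\eh$, satisfies $0\le P,Q,R\le I$, so $P,Q,R\in\eh$. The rest is a routine computation: using $\sum_j\mu_j=1$ one gets $P+R=\sum_{i,j}\lambda_i\mu_j(P_{ij}+R_{ij})=\sum_{i,j}\lambda_i\mu_j A_i=\sum_i\lambda_i A_i=A$; symmetrically, using $\sum_i\lambda_i=1$, $Q+R=\sum_j\mu_j B_j=B$; and $P+Q+R=\sum_{i,j}\lambda_i\mu_j(P_{ij}+Q_{ij}+R_{ij})\le\sum_{i,j}\lambda_i\mu_j\,I=I$. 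Hence $A=P+R$, $B=Q+R$, $P+Q+R\le I$ with $P,Q,R\in\eh$, which is precisely the assertion that $A$ and $B$ are coexistent.

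The argument is purely algebraic and finite --- no Hahn--Banach or topological input is needed --- so there is no serious obstacle; the only point one must get right is the bookkeeping. Specifically, the auxiliary triples must carry \emph{both} indices $i$ and $j$ and be weighted by the \emph{products} $\lambda_i\mu_j$ (a single-index convex combination would not reproduce both marginal identities simultaneously), and one has to invoke convexity of $\eh$ to see that $P,Q,R$ remain genuine effects. If one prefers the POVM formulation of coexistence, the same proof reads: from four-outcome POVMs $\{C^{ij}_{xy}\}_{x,y\in\{0,1\}}$ witnessing the coexistence of $A_i$ and $B_j$, the POVM $C_{xy}:=\sum_{i,j}\lambda_i\mu_j C^{ij}_{xy}$ witnesses the coexistence of $A$ and $B$.
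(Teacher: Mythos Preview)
Your proof is correct, and it is genuinely different from the paper's. The paper does not use the product weights $\lambda_i\mu_j$; instead it first builds a \emph{common refinement} of the two partitions of $[0,1]$ given by the partial sums $\alpha_1,\alpha_1+\alpha_2,\ldots$ and $\beta_1,\beta_1+\beta_2,\ldots$, obtaining a single sequence of nonnegative numbers $\delta_1,\ldots,\delta_{m+n-1}$ with $\sum_l\delta_l=1$, and then rewrites both convex combinations with the \emph{same} weights: $F=\sum_l\delta_l\widetilde F_l$ and $G=\sum_l\delta_l\widetilde G_l$, where each $\widetilde F_l$ is one of the $F_i$ and each $\widetilde G_l$ one of the $G_j$. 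Coexistence of each pair $(\widetilde F_l,\widetilde G_l)$ then yields the result by taking the $\delta_l$-weighted average of the witnessing triples, exactly as in your last step. Your double-index product-measure trick is more direct and avoids the refinement bookkeeping entirely; the paper's route produces fewer summands ($m+n-1$ rather than $mn$) but at the cost of a longer setup. Either way the argument is purely order-theoretic and finite, as you note.
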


\begin{proof} Let $ F_1,\ldots,F_m \in M $ and $ G_1,\ldots,G_n \in M $, and
\[
F := \sum_{i=1}^{m} \alpha_i F_i, \qquad G := \sum_{j=1}^{n} \beta_j G_j
\]
where $ \alpha_i \geq 0 $, $ \sum_{i=1}^{m} \alpha_i = 1 $, $ \beta_j \geq 0 $,
$ \sum_{j=1}^{n} \beta_j = 1 $. Consider the numbers
\[
0,\alpha_1,\alpha_1 + \alpha_2,\ldots,\alpha_1 + \ldots + \alpha_{m-1},1,
      \beta_1,\beta_1 + \beta_2,\ldots,\beta_1 + \ldots + \beta_{n-1},
\]
write them in their natural order and call them $ \gamma_l $, $ l = 0,1,\ldots,m + n -1 $:
\[
0 = \gamma_0 \leq \gamma_1 \leq \ldots \leq \gamma_{m+n-2} \leq \gamma_{m+n-1} = 1.
\]
Furthermore, let
\[
\delta_l := \gamma_l - \gamma_{l-1}, \qquad l \geq 1,
\]
and
\begin{eqnarray*}
I_1 & := & \{ l \, | \, 0 < \gamma_l \leq \alpha_1 \},                                                  \\
I_i & := & \{ l \, | \, \alpha_1 + \ldots + \alpha_{i-1} < \gamma_l \leq \alpha_1 + \ldots \alpha_i \},
                                                                    \qquad  i = 2,\ldots,m.
\end{eqnarray*}
Then
\begin{eqnarray*}
\alpha_1 & = & \sum_{l \in I_1} \delta_l,                                                                       \\
\alpha_i & = & (\alpha_1 + \ldots + \alpha_i) - (\alpha_1 + \ldots + \alpha_{i-1}) = \sum_{l \in I_i} \delta_l,
                                                                                    \qquad  i = 2,\ldots,m
\end{eqnarray*}
(if $ \alpha_i = 0 $ for some $ i = 1,\ldots,m $, then $ I_i = \emptyset $ and, by definition,
$ \sum_{l \in I_i} \delta_l = 0 $), consequently,
\begin{equation}
F = \sum_{i=1}^{m} \alpha_i F_i = \sum_{i=1}^{m} \left( \sum_{l \in I_i} \delta_l \right) F_i
  = \sum_{l=1}^{m+n-1} \delta_l \widetilde{F}_l
\end{equation}
where $ \widetilde{F}_l := F_i $ for $ l \in I_i $.

Analogously, defining
\begin{eqnarray*}
J_1 & := & \{ l \, | \, 0 < \gamma_l \leq \beta_1 \},                                               \\
J_j & := & \{ l \, | \, \beta_1 + \ldots + \beta_{i-1} < \gamma_l \leq \beta_1 + \ldots \beta_i \},
                                                                  \qquad  j = 2,\ldots,n,
\end{eqnarray*}
and $ \widetilde{G}_l := G_j $ for $ l \in J_j $, we obtain
\begin{equation}
G = \sum_{l=1}^{m+n-1} \delta_l \widetilde{G}_l.
\end{equation}
Since the effects $ \widetilde{F}_l $ and $ \widetilde{G}_l $ are coexistent, there are effects
$ A_{l1}, A_{l2}, A_{l0} \in \eh $ such that
\begin{eqnarray}
\widetilde{F}_l & = & A_{l1} + A_{l0}    \nonumber\\
\widetilde{G}_l & = & A_{l2} + A_{l0}    \nonumber\\
\end{eqnarray}
and $ A_{l1} + A_{l2} + A_{l0} \leq I $. From Eqs.\ (17)--(19) it follows that
\begin{eqnarray*}
F & = & \sum_{l=1}^{m+n-1} \delta_l A_{l1} + \sum_{l=1}^{m+n-1} \delta_l A_{l0} =: A_1 + A_0     \\
G & = & \sum_{l=1}^{m+n-1} \delta_l A_{l2} + \sum_{l=1}^{m+n-1} \delta_l A_{l0} =: A_2 + A_0
\end{eqnarray*}
where, since $ \delta_l \geq 0 $ and $ \sum_{l=1}^{m+n-1} \delta_l = 1 $, $ A_1,A_2,A_0 \in \eh $ and
\[
A_1 + A_2 + A_0 = \sum_{l=1}^{m+n-1} \delta_l (A_{l1} + A_{l2} + A_{l0}) \leq I.
\]
Hence, $F$ and $G$ are coexistent.
\end{proof}

The lemma states that the convex hull of a set of pairwise coexistent effects is also pairwise coexistent. Assuming
the full coexistence of the set $M$, i.e., $ M \subseteq F(\Sigma) $ where $F$ is some observable on $ \os $,
an alternative proof of the lemma can be given. Defining a statistical map $ T \! : \tsh \to \mos $ according to
Eq.\ (7), $ (TV)(B) := \mathrm{tr} \, VF(B) $, we have that, by Eq.\ (8), $ F(B) = T'\chi_{B} $ and consequently
$ F(\Sigma) \subseteq T'\eos $. From $ M \subseteq F(\Sigma) \subseteq T'\eos $ and the fact that $ T'\eos $ is
a convex set of pairwise coexistent effects, we obtain that the convex hull of $M$ is also pairwise coexistent.

Because $ \eh = [0,I] = \frac{1}{2}([-I,I] + I) $ is convex and $ \sigma $-closed (as a consequence of
the Banach-Alaoglu theorem, even $ \sigma $-compact), the $ \sigma $-weak closure of the convex hull of a pairwise
coexistent set of effects is a subset of $ \eh $; by the theorem now, this inclusion is proper. The proof consists in
showing that some non-coexistent effects can physically not be approximated by coexistent ones arbitrarily well.

\begin{theorem}
The convex hull of a set $M$ of pairwise coexistent effects is never $ \sigma $-dense in $ \eh $ (provided that
$ \dim \hi \geq 2 $).
\end{theorem}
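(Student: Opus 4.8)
The plan is to derive the theorem from Lemma~5 together with the observation that the \emph{coexistence relation is $\sigma$-closed}. Write
\[
\mathcal{C} := \{\, (F,G) \in \eh \times \eh \ | \ F \text{ and } G \text{ are coexistent} \,\}.
\]
Suppose, for contradiction, that the convex hull $\mathrm{co}\,M$ of a pairwise coexistent set $M \subset \eh$ is $\sigma$-dense in $\eh$. By Lemma~5, $\mathrm{co}\,M$ is again pairwise coexistent, so $\mathrm{co}\,M \times \mathrm{co}\,M \subseteq \mathcal{C}$; and since a product of $\sigma$-dense sets is $\sigma$-dense in the product, $\mathrm{co}\,M \times \mathrm{co}\,M$ is $\sigma$-dense in $\eh \times \eh$. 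Hence it suffices to prove (i) that $\mathcal{C}$ is $\sigma$-closed in $\eh \times \eh$ and (ii) that $\mathcal{C} \neq \eh \times \eh$ when $\dim\hi\geq 2$; for then $\eh \times \eh = \overline{\mathrm{co}\,M \times \mathrm{co}\,M}^{\sigma} \subseteq \overline{\mathcal{C}}^{\sigma} = \mathcal{C}$, a contradiction.

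For (i) I would replace the existential quantifier in the definition of coexistence by a projection of a compact set. Set
\[
K := \{\, (F,G,F',G',H) \in \eh^5 \ | \ F = F'+H,\ G = G'+H,\ F'+G'+H \leq I \,\}.
\]
By the Banach--Alaoglu theorem $\eh = \frac{1}{2}([-I,I]+I)$ is $\sigma$-compact, hence so is the finite product $\eh^5$. Each of the two equality constraints cuts out a $\sigma$-closed subset, the vector-space operations being $\sigma$-continuous and $\{0\}$ being $\sigma$-closed in $\bsh$; and the constraint $F'+G'+H \leq I$ cuts out a $\sigma$-closed subset because the positive cone of $\bsh$ is $\sigma$-closed, being the intersection of the $\sigma$-closed half-spaces $\{A \in \bsh \ | \ \mathrm{tr}\,WA \geq 0\}$, $W \in \sh$. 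Thus $K$ is a closed, hence compact, subset of $\eh^5$; the coordinate projection $\pi\colon \eh^5 \to \eh \times \eh$, $(F,G,F',G',H)\mapsto(F,G)$, is $\sigma$-continuous and has image $\mathcal{C}$, so $\mathcal{C}$ is compact, and since $\eh\times\eh$ is Hausdorff it is $\sigma$-closed.

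For (ii), which is where the hypothesis $\dim\hi\geq 2$ enters, choose unit vectors $\varphi,\psi \in \hi$, $\no{\varphi}=\no{\psi}=1$, with $P_\varphi P_\psi \neq P_\psi P_\varphi$; such a pair exists exactly when $\dim\hi\geq 2$. As recalled before Theorem~3, two orthogonal projections are coexistent if and only if they commute, so $(P_\varphi,P_\psi) \notin \mathcal{C}$, and (ii) holds. I expect the only substantive step to be (i) — specifically the realization that the auxiliary effects $F',G',H$ range over the $\sigma$-compact set $\eh$, so that $\mathcal{C}$ appears as the continuous image of a compact set; verifying that the relations defining $K$ are $\sigma$-closed, in particular that the operator cone is $\sigma$-closed, is then routine. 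A more elementary route would avoid compactness and show directly that a fixed pair of non-commuting one-dimensional projections cannot be simultaneously approximated by coexistent effects within the $\sigma$-neighbourhoods $U(\,\cdot\,;W_1,\ldots,W_n;\varepsilon)$ of~(14), but this requires an explicit quantitative estimate and is less transparent.
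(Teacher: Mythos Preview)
Your argument is correct and complete. The key step---realizing $\mathcal{C}$ as the projection of a $\sigma$-closed subset of the $\sigma$-compact set $\eh^5$---is sound: addition is jointly $\sigma$-continuous, $\{0\}$ is closed because $\sigma(\bsh,\tsh)$ is Hausdorff, and the positive cone is indeed the intersection of the half-spaces $\{A:\mathrm{tr}\,WA\geq 0\}$.

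Your route, however, is not the one the paper actually takes. The paper gives precisely the ``more elementary'' argument you sketch in your final sentence: it fixes orthogonal unit vectors $\varphi,\psi$, sets $\chi=\tfrac{1}{\sqrt{2}}(\varphi+\psi)$, and works with the explicit neighbourhoods $U(P_\varphi;P_\varphi,P_\psi,P_\chi;\tfrac{1}{64})$ and $U(P_\chi;P_\varphi,P_\psi,P_\chi;\tfrac{1}{64})$. Any $G_1,G_2\in\mathrm{conv}\,M$ lying in these neighbourhoods are coexistent by Lemma~5, so $G_1=A_1+A_0$, $G_2=A_2+A_0$ with $A_1+A_2+A_0\leq I$; a short chain of inequalities (using Cauchy--Schwarz for the positive form $\langle\cdot,A_0\cdot\rangle$) then forces $\mathrm{tr}\,P_\chi A_0$ to be simultaneously $>\tfrac{30}{64}$ and $<\tfrac{25}{64}$. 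Interestingly, the paper explicitly acknowledges your compactness approach in the paragraph following its proof, remarking that the $\sigma$-closure of a pairwise coexistent set can be shown to be pairwise coexistent ``on the basis of the second countability and compactness of $\eh$''---your projection argument in fact dispenses with second countability. What your approach buys is a stronger structural statement (the coexistence relation is $\sigma$-closed) and a much cleaner proof; what the paper's approach buys is an explicit quantitative obstruction that avoids invoking Banach--Alaoglu and makes the physical content (which states and which accuracy witness the failure) completely transparent.
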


\begin{proof} Let $ \fii,\psi \in \hi $, $ \|\fii\| = 1 $, $ \|\psi\| = 1 $, $ \langle \fii|\psi \rangle  = 0 $, and
$ \chi := \frac{1}{\sqrt{2}} (\fii + \psi) $; consider the effects $ P_{\fii} = |\fii\rangle \langle\fii| $ and
$ P_{\chi} = |\chi\rangle \langle\chi| $. Assume that $ \mathrm{conv} \, M $, the convex hull of $M$, were
$ \sigma $-dense in $ \eh $. Then every $ \sigma $-neighborhood of $ P_{\fii} $ and $ P_{\chi} $,
respectively, would contain an element of $ \mathrm{conv} \, M $. Choosing the neighborhoods
\begin{eqnarray}
U(P_{\fii};W_1,W_2,W_3;\varepsilon) & = & \{ A \in \bsh \, | \, |\mathrm{tr} \, W_iA - \mathrm{tr} \, W_iP_{\fii}|
                                      <   \varepsilon, \ i = 1,2,3 \}                              \nonumber        \\
U(P_{\chi};W_1,W_2,W_3;\varepsilon) & = & \{ A \in \bsh \, | \, |\mathrm{tr} \, W_iA - \mathrm{tr} \, W_iP_{\chi}|
                                      <   \varepsilon, \ i = 1,2,3 \}                              \nonumber        \\
\end{eqnarray}
(cf.\ (14)) where $ W_1 = P_{\fii} $, $ W_2 = P_{\psi} $, $ W_3 = P_{\chi} $, and $ \varepsilon = \frac{1}{64} $,
there would exist effects $ G_1,G_2 \in \mathrm{conv} \, M $ such that
$ G_1 \in U(P_{\fii};W_1,W_2,W_3;\varepsilon) $ and
$ G_2 \in $ \linebreak $ U(P_{\chi};W_1,W_2,W_3;\varepsilon) $. By Lemma 5,
$ G_1 $ and $ G_2 $ are coexistent, i.e.,
\begin{eqnarray*}
G_1 & = & A_1 + A_0   \\
G_2 & = & A_2 + A_0
\end{eqnarray*}
where $ A_1,A_2,A_0 \in \eh $ and $ A_1 + A_2 + A_0 \leq I $. Hence, the inequalities
\begin{eqnarray}
|\mathrm{tr} \, W_i(A_1 + A_0) - \mathrm{tr} \, W_iP_{\fii}| & < & \varepsilon     \nonumber\\
|\mathrm{tr} \, W_i(A_2 + A_0) - \mathrm{tr} \, W_iP_{\chi}| & < & \varepsilon     \nonumber\\
\end{eqnarray}
hold for $ i = 1,2,3 $.

The inequalities (21) are equivalent to
\begin{eqnarray}
\mathrm{tr} \, W_iP_{\fii} - \varepsilon   & < &   \mathrm{tr} \, W_i(A_1 + A_0)
                                         \ \ < \ \ \mathrm{tr} \, W_iP_{\fii} + \varepsilon     \nonumber\\
\mathrm{tr} \, W_iP_{\chi} - \varepsilon   & < &   \mathrm{tr} \, W_i(A_2 + A_0)
                                         \ \ < \ \ \mathrm{tr} \, W_iP_{\chi} + \varepsilon.    \nonumber\\
\end{eqnarray}
The addition of the inequalities (22) yields, in particular,
\[
\mathrm{tr} \, W_i(A_1 + A_2 + 2A_0) > \mathrm{tr} \, W_i(P_{\fii} + P_{\chi}) - 2\varepsilon
\]
which implies
\[
\mathrm{tr} \, W_iA_0 > \mathrm{tr} \, W_i(P_{\fii} + P_{\chi}) - \mathrm{tr} \, W_i(A_1 + A_2 + A_0) - 2\varepsilon.
\]
Since $ \mathrm{tr} \, W_i(A_1 + A_2 + A_0) \leq 1 $, it follows that
\begin{equation}
\mathrm{tr} \, W_iA_0 > \mathrm{tr} \, W_i(P_{\fii} + P_{\chi}) - 1 - 2\varepsilon.
\end{equation}
From (22) we further obtain
\begin{eqnarray}
\mathrm{tr} \, W_iA_0 & < & \mathrm{tr} \, W_iP_{\fii} + \varepsilon     \nonumber\\
\mathrm{tr} \, W_iA_0 & < & \mathrm{tr} \, W_iP_{\chi} + \varepsilon.    \nonumber\\
\end{eqnarray}
Inserting $ W_1 = P_{\fii} $, $ W_2 = P_{\psi} $, and $ W_3 = P_{\chi} $ into (23) and (24), we conclude that
\begin{eqnarray}
\frac{1}{2} - 2\varepsilon \ \ < \ \ \mathrm{tr} \, P_{\fii}A_0 & < & \frac{1}{2} + \varepsilon    \\
                                     \mathrm{tr} \, P_{\psi}A_0 & < & \varepsilon                  \\
\frac{1}{2} - 2\varepsilon \ \ < \ \ \mathrm{tr} \, P_{\chi}A_0 & < & \frac{1}{2} + \varepsilon.
\end{eqnarray}

Now, $ (\fii,\psi) \mapsto \langle \fii|A_0\psi \rangle $ is a positive sesquilinear functional. From the
Cauchy-Schwarz inequality and (25) as well as (26) it follows that
\begin{equation}
|\langle \fii|A_0\psi \rangle|^2 \leq \langle \fii|A_0\fii \rangle \langle \psi|A_0\psi \rangle
                                   <  \left( \frac{1}{2} + \varepsilon \right) \varepsilon
                                   =  \frac{1}{2}\varepsilon + \varepsilon^2
                                   <  \varepsilon
\end{equation}
where also $ \varepsilon = \frac{1}{64} < \frac{1}{2} $ has been taken into account. From (25) and (26) again and
from (28) we obtain
\begin{eqnarray*}
\mathrm{tr} \, P_{\chi}A_0 & = &             \langle \chi|A_0\chi \rangle
                             =   \frac{1}{2} \langle \fii + \psi|A_0(\fii + \psi) \rangle                           \\
                           & = & \frac{1}{2} \langle \fii|A_0\fii \rangle + \frac{1}{2} \langle \psi|A_0\psi \rangle
                                                                     + \mathrm{Re} \, \langle \fii|A_0\psi \rangle  \\
                           & < & \frac{1}{4} + \frac{\varepsilon}{2} + \frac{\varepsilon}{2} + \sqrt{\varepsilon}   \\
                           & = & \frac{1}{4} + \varepsilon + \sqrt{\varepsilon}.
\end{eqnarray*}
The last statement $ \mathrm{tr} \, P_{\chi}A_0 < \frac{1}{4} + \varepsilon + \sqrt{\varepsilon} = \frac{25}{64} $
sharpens the right-hand inequality of (27) but contradicts the left which says
$ \mathrm{tr} \, P_{\chi}A_0 > \frac{1}{2} - 2\varepsilon = \frac{30}{64} $. Hence,
$ \mathrm{conv} \, M $ is not $ \sigma $-dense in $ \eh $.
\end{proof}

As a set of effects being pairwise coexistent, $ \mathrm{conv} \, M $ cannot be all of $ \eh $. One could consider
it to be obvious that the $ \sigma $-closure of the set $ \mathrm{conv} \, M $ of pairwise coexistent effects
can also not be equal to the set $ \eh $ which contains pairs of non-coexistent effects as well. But at first glance,
mathematically, this is not obvious since $ \sigma(\bsh,\tsh) $ is a very weak topology. For instance,
if $ \dim \hi = \infty $, the unit sphere $ \{ A \in \bsh \, | \, \|A\| = 1 \} $ is $ \sigma $-dense in the unit ball
$ \{ A \in \bsh \, | \, \|A\| \leq 1 \} $; more generally, the unit sphere of an infinite-dimensional normed space
$ \mathcal{V} $ is $ \sigma(\mathcal{V},\mathcal{V}') $-dense in the unit ball of $ \mathcal{V} $, and the unit sphere
of the dual $ \mathcal{V}' $ is $ \sigma(\mathcal{V}',\mathcal{V}) $-dense in the unit ball of $ \mathcal{V}' $.

Recalling that the physical approximation of an effect by another effect can be described by the $ \sigma $-topology,
one can, from the physical point of view, expect that the $ \sigma $-closure of a pairwise coexistent set of effects
is also pairwise coexistent. In fact, this expectation can be proved on the basis of the second countability and compactness of $ \eh $ in the $ \sigma $-topology. Theorem 5 is then a consequence of Lemma 5 and that result.

\begin{cor}
Let $ F \! : \Sigma \to \eh $ be an observable on $ \os $. Then the convex hull of $ F(\Sigma) $ is never
$ \sigma $-dense in $ \eh $ ($ \dim \hi \geq 2 $).
\end{cor}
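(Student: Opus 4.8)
The plan is to obtain this corollary as an immediate consequence of Theorem~5; the one point that must be checked is that the range $ F(\Sigma) $ of an observable is a set of pairwise coexistent effects. First I would recall, from the discussion preceding Theorem~3, that two effects are coexistent precisely when they lie in the common range of a single observable. Hence, for any $ B,C \in \Sigma $, the effects $ F(B) $ and $ F(C) $ are coexistent; explicitly, setting $ H := F(B \cap C) $, $ F' := F(B \setminus C) $, and $ G' := F(C \setminus B) $, one has $ F(B) = F' + H $, $ F(C) = G' + H $, and $ F' + G' + H = F(B \cup C) \leq I $, with all four operators in $ \eh $. Thus $ M := F(\Sigma) $ satisfies the hypothesis of Theorem~5.

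Applying Theorem~5 to $ M = F(\Sigma) $ then yields directly that $ \mathrm{conv} \, F(\Sigma) $ is not $ \sigma $-dense in $ \eh $ when $ \dim \hi \geq 2 $, which is the assertion. Since the argument is this short, there is essentially no obstacle; the substance lies entirely in Lemma~5 and Theorem~5. The only care required is in the logical order---one invokes the pairwise coexistence of $ F(\Sigma) $ first and then applies Theorem~5---and in noticing that $ F(\Sigma) $ is in fact coexistent in the stronger sense of being contained in the range of an observable (namely $F$ itself), which a fortiori makes it pairwise coexistent.

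I would also add the remark that this corollary is exactly what rules out the approximate classical representations discussed before Theorem~5: if $T$ is the statistical map associated with $F$ via Eq.~(7), then $ F(B) = T'\chi_B $ by Eq.~(8), so $ F(\Sigma) \subseteq T'\eos $ and hence $ \mathrm{conv} \, F(\Sigma) \subseteq T'\eos $; the corollary (equivalently, Theorem~5 applied to the convex pairwise coexistent set $ T'\eos $ itself) shows that this convex set cannot be $ \sigma $-dense in $ \eh $, confirming that a semiclassical representation $T$ cannot be supplemented by classical effects $ f \in \eos $ approximating the quantum effects in the sense of Definition~2.
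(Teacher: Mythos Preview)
Your proposal is correct and follows exactly the intended route: the paper states this result as an immediate corollary of Theorem~5, relying on the observation (already made before Theorem~3) that effects in the range of a single observable are pairwise coexistent, and gives no further proof. Your explicit verification via $H=F(B\cap C)$, $F'=F(B\setminus C)$, $G'=F(C\setminus B)$ and your remarks connecting this to $T'\eos$ and approximate classical representations match the paper's own surrounding discussion.
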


If the observable $F$ is statistically complete, the linear hull of $ F(\Sigma) $ is, according to Lemma 1,
$ \sigma $-dense in $ \bsh $. This does not imply that the intersection of $ \mathrm{lin} \, F(\Sigma) $
with the unit ball $ [-I,I] $ of $ \bsh $ is $ \sigma $-dense in the unit ball (see \cite{dix48,sin92}). Since
$ [-I,I] = 2[0,I] - I = 2\eh - I $, $ \mathrm{lin} \, F(\Sigma) \cap \eh $ need not be
$ \sigma $-dense in $ \eh $; however, $ \mathrm{lin} \, F(\Sigma) \cap \eh $ can be
$ \sigma $-dense in $ \eh $, as the statistically complete observable constructed in the proof
of Theorem 1 shows. But the convex hull of $ F(\Sigma) $ cannot be $ \sigma $-dense in $ \eh $.

For a statistical map $T$ it is clear by Lemma 3 that $ T'\eos \subseteq \eh $; according to part (a) of Theorem 3,
$ T'\eos $ is a proper subset of $ \eh $. The following conclusion from Theorem 5 sharpens this result.

\begin{cor}
Let $ T \! : \tsh \to \mos $ be a statistical map. Then $ T'\eos $ \linebreak $ \subseteq \eh $, but
$ T'\eos $ cannot be $ \sigma $-dense in $ \eh $ ($ \dim \hi \geq 2 $).
\end{cor}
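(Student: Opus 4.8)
The plan is to reduce the assertion to Theorem~5 by recognizing $ T'\eos $ as a \emph{convex} set all of whose elements are \emph{pairwise coexistent} — this is exactly the remark made just before Corollary~2. The inclusion $ T'\eos \subseteq \eh $ is immediate from Lemma~3: there $ T' \geq 0 $ and $ T'\chi_{\Omega} = I $, hence $ T'[0,\chi_{\Omega}] \subseteq [0,I] $, i.e.\ $ T'\eos \subseteq \eh $. So the substance is the non-denseness, and for that I would proceed as follows.

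First I would note that $ T'\eos $ is convex, since $T'$ is linear and $ \eos = [0,\chi_{\Omega}] $ is convex, so $ \mathrm{conv}\,(T'\eos) = T'\eos $. Next I would check pairwise coexistence of $ T'\eos $. Let $ f,g \in \eos $. As recalled before Theorem~3, any two classical effects are coexistent, so there are $ f',g',h \in \eos $ with $ f = f' + h $, $ g = g' + h $, and $ f' + g' + h \leq \chi_{\Omega} $, i.e.\ $ f' + g' + h \in \eos $. Applying the positive linear map $ T' $ and using Lemma~3 (namely $ T'\eos \subseteq \eh $ and $ T'\chi_{\Omega} = I $), one gets
\[
T'f = T'f' + T'h, \qquad T'g = T'g' + T'h
\]
with $ T'f',\,T'g',\,T'h \in \eh $ and $ T'f' + T'g' + T'h = T'(f' + g' + h) \in \eh $, in particular $ \leq I $. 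Hence $ T'f $ and $ T'g $ are coexistent in Ludwig's sense. Thus $ M := T'\eos $ is a convex set of pairwise coexistent effects.

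Finally I would invoke Theorem~5: the convex hull of a set of pairwise coexistent effects is never $ \sigma $-dense in $ \eh $ when $ \dim \hi \geq 2 $. Since $ M = T'\eos $ is already convex, $ \mathrm{conv}\, M = T'\eos $, and Theorem~5 says precisely that $ T'\eos $ is not $ \sigma $-dense in $ \eh $. I do not expect any real obstacle here, as all the genuine work is carried by Theorem~5 (and, behind it, Lemma~5). The only point that deserves a careful line is the transfer of a coexistent decomposition from $ \eos $ to $ \eh $ under $ T' $, which rests entirely on the two properties $ T' \geq 0 $ and $ T'\chi_{\Omega} = I $ established in Lemma~3; everything else is formal.
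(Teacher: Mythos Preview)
Your proposal is correct and follows essentially the same line as the paper. The paper does not spell out a proof of this corollary, but it explicitly records (at the end of Section~2) that for a statistical map $T$ the set $T'\eos$ is convex and consists of pairwise coexistent effects, and then presents Corollary~2 as an immediate ``conclusion from Theorem~5''; your write-up just fills in these details, including the transfer of a coexistent decomposition from $\eos$ to $\eh$ via $T'\geq 0$ and $T'\chi_{\Omega}=I$ from Lemma~3.
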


In particular, this result holds true for a semiclassical representation on $ \os $. Accordingly,
the function $f$ in inequality (15) can, for an effect $ A \in \eh $, in general not be chosen as a classical effect
$ f \in \eos $. That is, using the terminology introduced after Theorem 4, approximate classical representations of quantum mechanics on $ (\Omega,\Sigma) $ do not exist. This statement is a sharpening of statement (b) of Theorem 3.

Theorem 5, its corollaries, and the implication on the nonexistence of approximate classical representations
are main results of this paper.

The weak operator topology on $ \bsh $ is just the topology $ \sigma(\bsh,\partial_e\sh) $ where
$ \partial_e\sh = \{ P_{\fii} \, | \, \fii \in \hi, \ \|\fii\| = 1 \} $ is the extreme boundary of
$ \sh $; this topology is even weaker than the $ \sigma $-topology, but on norm-bounded subsets of
$ \bsh $ the two topologies coincide. Thus, the convex hull of a pairwise coexistent set of effects can,
in the weak operator topology, also not be dense in $ \eh $. In fact, the neighborhoods (20) with
$ W_1 = P_{\fii} $ etc.\ which play a crucial role in the proof of Theorem 5, are open neighborhoods
w.r.t.\ the weak operator topology as well.

Finally, for a statistical map $T$, we investigate the relation between the set $ T'\eos $ and the range
of the corresponding observable. To that end, we need the following lemma which is, for the convex set
$ \eos $, an analog of the Krein-Milman theorem. Note that $ \eos $  is norm- as well as $ \sigma(\fos,\mos) $-closed, but in general neither norm- nor $ \sigma(\fos,\mos) $-compact; recall that the characteristic functions
$ \chi_B $, $ B \in \Sigma $, are the extreme points of $ \eos $. It is well known that the linear hull
of the characteristic functions is norm-dense in $ \fos $. Their convex hull is norm-dense in
$ \eos $, as is stated now.

\begin{lemma}
The convex hull of the characteristic functions is norm-dense in $ \eos $, i.e.,
\[
\eos = \overline{\mathrm{conv} \, \partial_e\eos}^{\n},
\]
$ \partial_{e}\eos $ denoting the extreme boundary of $ \eos $.
\end{lemma}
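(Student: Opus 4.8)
The plan is to show that an arbitrary classical effect $f\in\eos$ can be approximated in supremum norm by finite convex combinations of characteristic functions. The natural tool is the standard approximation of a bounded measurable function by simple functions, combined with the fact that $\eos=[0,\chi_\Omega]$ already forces the values of $f$ to lie in $[0,1]$, which is exactly what makes a \emph{convex} (rather than merely linear) combination of characteristic functions available.

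\medskip

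First I would fix $f\in\eos$ and $\varepsilon>0$ and, for a positive integer $m$ with $\frac1m<\varepsilon$, partition the range $[0,1]$ into the intervals $[\frac{k-1}{m},\frac{k}{m})$, $k=1,\dots,m$ (with the last one closed on the right). Setting $B_k:=f^{-1}\bigl([\tfrac{k-1}{m},\tfrac{k}{m})\bigr)\in\Sigma$, the sets $B_1,\dots,B_m$ form a measurable partition of $\Omega$, and the simple function $s:=\sum_{k=1}^m \frac{k-1}{m}\,\chi_{B_k}$ satisfies $0\le f-s<\frac1m<\varepsilon$ pointwise, hence $\|f-s\|<\varepsilon$. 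The next step is to rewrite $s$ as a convex combination of characteristic functions of elements of $\Sigma$. Writing $c_k:=\frac{k-1}{m}\ge 0$ and using $\sum_{k=1}^m\chi_{B_k}=\chi_\Omega$, one has $s=\sum_{k=1}^m c_k\chi_{B_k}$; this is not yet a convex combination because $\sum c_k$ need not equal $1$. The fix is the familiar ``layer-cake'' rewriting: set $C_j:=\bigcup_{k=j+1}^m B_k\in\Sigma$ for $j=0,1,\dots,m-1$, so that $\chi_{C_j}=\sum_{k>j}\chi_{B_k}$ and $s=\frac1m\sum_{j=0}^{m-1}\chi_{C_j}$, which exhibits $s$ as the average of the $m$ characteristic functions $\chi_{C_0}=\chi_\Omega,\chi_{C_1},\dots,\chi_{C_{m-1}}$, each of which is an extreme point of $\eos$. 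Thus $s\in\mathrm{conv}\,\partial_e\eos$ and $\|f-s\|<\varepsilon$, proving $\eos\subseteq\overline{\mathrm{conv}\,\partial_e\eos}^{\,\|\cdot\|}$. The reverse inclusion is immediate since $\eos$ is convex and norm-closed and contains all the $\chi_B$, giving equality.

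\medskip

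The only mild subtlety — and the step I would present carefully — is turning the nonnegative combination $\sum c_k\chi_{B_k}$ into a genuine convex combination; the layer-cake identity above handles it cleanly and uses nothing beyond $0\le c_k$ and $c_k\le 1$, both guaranteed by $f\in\eos$. Everything else (the simple-function approximation, measurability of the $B_k$ and $C_j$, the norm estimate) is routine, so I would not belabor it.
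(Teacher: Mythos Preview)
Your approach is essentially the paper's: approximate $f$ uniformly by a simple function in $\eos$ and then rewrite that simple function as a convex combination of characteristic functions via the layer-cake (Abel summation) identity. The paper does the same two steps, only in the opposite order of presentation (it first shows that \emph{any} simple function $g=\sum_i\alpha_i\chi_{B_i}\in\eos$ lies in $\mathrm{conv}\,\partial_e\eos$, then invokes uniform approximation), and its coefficients $\alpha_1,(\alpha_2-\alpha_1),\dots,(1-\alpha_n)$ are the general form of your equal weights $\tfrac1m$.

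One small slip to fix: with your definitions $c_k=\tfrac{k-1}{m}$ and $C_j=\bigcup_{k>j}B_k$ for $j=0,\dots,m-1$, one computes $\tfrac1m\sum_{j=0}^{m-1}\chi_{C_j}=\sum_{k=1}^m\tfrac{k}{m}\chi_{B_k}$, not $s$. Shifting the index range to $j=1,\dots,m$ (so that $C_m=\emptyset$ enters with weight $\tfrac1m$) gives the correct identity $s=\tfrac1m\sum_{j=1}^{m}\chi_{C_j}$; this is exactly the role played by the term $(1-\alpha_n)\chi_\emptyset$ in the paper's decomposition.
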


\begin{proof} The statement
$ \overline{\mathrm{conv} \, \{ \chi_B \, | \, B \in \Sigma \}}^{\n} \subseteq \eos $ is obvious. To prove
the converse, consider first a function $ g \in \eos $ with finitely many values. Such a function can be written as
$ g = \sum^{n}_{i=1} \alpha_{i} \chi_{B_i} $ where $ 0 \leq \alpha_1 < \alpha_2 < \ldots < \alpha_n \leq 1 $ and
the sets $ B_i \in \Sigma $ form a disjoint decomposition of $ \Omega $. The equality
\begin{eqnarray*}
g   =   \sum_{i=1}^{n} \alpha_i \chi_{B_i}
  & = & \alpha_1 \chi_{B_1 \cup \ldots \cup B_n} + (\alpha_2 - \alpha_1) \chi_{B_2 \cup \ldots \cup B_n}
                                                 + (\alpha_3 - \alpha_2) \chi_{B_3 \cup \ldots \cup B_n} \\
  &   &   + \ldots + (\alpha_n - \alpha_{n-1}) \chi_{B_n} + (1-\alpha_n) \chi_{\emptyset}
\end{eqnarray*}
shows that $ g \in \mathrm{conv} \, \{ \chi_B \, | \, B \in \Sigma \} $. Since an arbitrary function
$ f \in \eos $ can be approximated uniformly by simple functions $ g \in \eos $, we obtain
$ f \in \overline{\mathrm{conv} \, \{ \chi_B \, | \, B \in \Sigma \} }^{\n}$. Hence,
$ \eos = \overline{\mathrm{conv} \, \partial_e\eos}^{\n}$.
\end{proof}

Because $ \eos $ is $ \sigma(\fos,\mos) $-closed, Lemma 6 entails that also
$ \eos = \overline{\mathrm{conv} \, \partial_e\eos}^{\sigma(\fos,\mos)} $. A further consequence
of the lemma is worth mentioning. It is well known that the set $ \eh $ is $ \sigma $-compact and
that its extreme points are the orthogonal projections of $ \hi $; thus, by the Krein-Milman theorem,
$ \eh = \overline{\mathrm{conv} \, \partial_e\eh}^{\sigma} $. From Lemma 6 and the spectral theorem it follows
that the stronger statement $ \eh = \overline{\mathrm{conv} \, \partial_e \eh}^{\n} $ holds. Moreover,
on the basis of this result one can show that already the extreme boundary $ \partial_e\eh $ is,
in the weak operator topology, dense in $ \eh $ \cite{dav76}. Summarizing,
$ \eh = \overline{\mathrm{conv} \, \partial_e\eh}^{\sigma} = \overline{\mathrm{conv} \, \partial_e \eh}^{\n}
= \overline{\partial_e \eh}^{\sigma(\bsh,\partial_e\sh)} = \overline{\partial_e \eh}^{\sigma} $.

\begin{theorem}
Let $T$ be a statistical map and $F$ the observable that corresponds to $T$ according to Theorem 2. Then
\begin{equation}
          \mathrm{conv} \, F(\Sigma) \subseteq T'\eos \subseteq \overline{\mathrm{conv} \, F(\Sigma)}^{\n} \subseteq
\overline{\mathrm{conv} \, F(\Sigma)}^{\sigma} \subset \eh.
\end{equation}
\end{theorem}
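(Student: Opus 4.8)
The plan is to verify the four inclusions in the displayed chain one at a time, each being a short deduction from results already established. For the leftmost inclusion, note that by Eq.~(8) one has $ F(B) = T'\chi_B $ for every $ B \in \Sigma $; since $ \chi_B \in \eos $ this gives $ F(\Sigma) \subseteq T'\eos $, and as $ T' $ is linear and $ \eos $ convex, $ T'\eos $ is convex and therefore contains $ \mathrm{conv} \, F(\Sigma) $. For the second inclusion, fix $ f \in \eos $; by Lemma~6 there is a sequence $ (g_k) $ in $ \mathrm{conv} \, \{ \chi_B \, | \, B \in \Sigma \} $ with $ \no{g_k - f} \to 0 $. Writing $ g_k = \sum_j \lambda_j^{(k)} \chi_{B_j^{(k)}} $ as a finite convex combination and using Eq.~(8) again, $ T'g_k = \sum_j \lambda_j^{(k)} F(B_j^{(k)}) \in \mathrm{conv} \, F(\Sigma) $, while $ \no{T'g_k - T'f} \le \no{T'} \, \no{g_k - f} = \no{g_k - f} \to 0 $ since $ \no{T'} = 1 $ by Lemma~3; hence $ T'f \in \overline{\mathrm{conv} \, F(\Sigma)}^{\n} $.

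The third inclusion is purely topological: $ \sigtop{\bsh,\tsh} $ is coarser than the norm topology, so every norm-closed subset of $ \bsh $ lies inside its own $ \sigma $-closure. For the last inclusion, $ F(\Sigma) \subseteq \eh $ and $ \eh = \frac{1}{2}([-I,I]+I) $ is convex and $ \sigma $-closed (indeed $ \sigma $-compact by Banach-Alaoglu), so $ \mathrm{conv} \, F(\Sigma) \subseteq \eh $ and hence $ \overline{\mathrm{conv} \, F(\Sigma)}^{\sigma} \subseteq \eh $; that this inclusion is \emph{proper} (when $ \dim \hi \geq 2 $) is precisely Corollary~1 --- equivalently, Theorem~5 applied to $ M = F(\Sigma) $, which is coexistent and hence pairwise coexistent --- asserting that $ \mathrm{conv} \, F(\Sigma) $ is never $ \sigma $-dense in $ \eh $.

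Thus no genuinely new argument is required: the proof merely assembles Eq.~(8), the bound $ \no{T'} = 1 $ from Lemma~3, the norm-density of $ \mathrm{conv} \, \partial_e\eos $ in $ \eos $ from Lemma~6, the Banach-Alaoglu theorem, and Corollary~1/Theorem~5. The one substantial ingredient --- already carried out earlier --- is the properness of the rightmost inclusion; as the remark following the proof of Theorem~5 emphasizes, the weakness of the $ \sigma $-topology makes this far from obvious a priori, and it rests on the quantitative Cauchy-Schwarz estimate used there rather than on any soft topological argument.
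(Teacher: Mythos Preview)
Your proof is correct and follows essentially the same approach as the paper: Eq.~(8) together with convexity of $T'\eos$ gives the first inclusion, Lemma~6 combined with the norm continuity of $T'$ gives the second, and the remaining two inclusions are immediate (the paper simply calls them ``obvious,'' implicitly invoking Corollary~1 for the properness). The only cosmetic difference is that you phrase the second inclusion via an explicit approximating sequence, whereas the paper writes it as the set-theoretic chain $T'\eos = T'\big(\overline{\mathrm{conv}\,\partial_e\eos}^{\n}\big) \subseteq \overline{T'(\mathrm{conv}\,\partial_e\eos)}^{\n} = \overline{\mathrm{conv}\,F(\Sigma)}^{\n}$.
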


\begin{proof} By Eq.\ (8) and the convexity of the set $ T'\eos $, we have
\begin{equation}
\mathrm{conv} \, F(\Sigma) \subseteq T'\eos.
\end{equation}
The linearity of $T'$ and (8) imply that
\begin{equation}
T'(\mathrm{conv} \, \partial_e\eos) = \mathrm{conv} \, T'\partial_e\eos = \mathrm{conv} \, F(\Sigma).
\end{equation}
Using Lemma 6 and the norm continuity of $T'$, we obtain from (31) that
\begin{eqnarray}
T'\eos & = & T' \Big( \overline{\mathrm{conv} \, \partial_e\eos}^{\n} \Big)
       \subseteq      \overline{T'(\mathrm{conv} \, \partial_e\eos)}^{\n}      \nonumber\\
       & = &          \overline{\mathrm{conv} \, F(\Sigma)}^{\n}.              \nonumber\\
\end{eqnarray}
The inclusions (30) and (32) entail
\[
\mathrm{conv} \, F(\Sigma) \subseteq T'\eos \subseteq \overline{\mathrm{conv} \, F(\Sigma)}^{\n},
\]
which concludes the proof since the last two inclusions of (29) are obvious.
\end{proof}

We add a remark. In the case that $ \fos = (\mos)' $, $ \eos $ and $ T'\eos $ are compact
in the respective weak-* topologies; as a consequence, $ T'\eos $ is $ \sigma $-closed,
and the inclusion chain (29) reads
\[
          \mathrm{conv} \, F(\Sigma) \subseteq T'\eos = \overline{\mathrm{conv} \, F(\Sigma)}^{\n}
        = \overline{\mathrm{conv} \, F(\Sigma)}^{\sigma} \subset \eh.
\]

\section{Quasi-Probability Representations}

Since classical and approximate classical representations of quantum mechanics do not exist, the conditions
in Definition 2 must be weakened. Recently, Ferrie, Morris, and Emerson \cite{fer10} introduced the following generalized concept (also cf.\ \cite{fer11,fer08;09}).

\begin{defin}
A \emph{quasi-probability representation of quantum mechanics on \linebreak $ \os $}, the latter again being
a measurable space, is a pair $ (T,S) $ of linear maps $ T \! : \tsh \to \mos $ and $ S \! : \bsh \to \fos $ such that
\begin{enumerate}
\item[(i)] for all $ W \in \sh $, $ (TW)(\Omega) = 1 $
\item[(ii)] $T$ is bounded
\item[(iii)] for all $ W \in \sh $ and all $ A \in \eh $,
\[
\mathrm{tr} \, WA = \int SA \, \mathrm{d}(TW).
\]
\end{enumerate}
\end{defin}

Taking account of $ \nu(\Omega) = \int \chi_{\Omega} \, \mathrm{d}\nu = \langle \nu,\chi_{\Omega} \rangle $ for
$ \nu \in \mos $, we see that condition (i) of the definition means that the set $ \sh $ is mapped into the hyperplane
with the equation $ \langle \nu,\chi_{\Omega} \rangle = 1 $. Similar as in the case of a classical representation
according to Definition~2, $ SI = \chi_{\Omega} $ need not be satisfied for a quasi-probability representation
(cf.\ the remark after the proof of Lemma 4). However, if $ SI = \chi_{\Omega} $ is assumed, then condition (i)
is implied.---By Theorem 3, there is an immediate conclusion from Definition 3 \cite{fer11,fer10}.

\begin{cor}
For a quasi-probability representation $ (T,S) $ on $ \os $, at least one of the following two statements
must be true (provided that $ \dim \hi \geq 2 $):
\begin{enumerate}
\item[(i)] $T$ is not positive, i.e., there exist a state $ W \in \sh $ and a set $ B \in \Sigma $ such that
$ (TW)(B) < 0 $
\item[(ii)] there exist an effect $ A \in \eh $ and a point $ \omega \in \Omega $ such that
$ (SA)(\omega) \notin [0,1] $.
\end{enumerate}
\end{cor}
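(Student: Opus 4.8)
The plan is to prove the contrapositive: assume that $T$ is positive and that $(SA)(\omega)\in[0,1]$ for every $A\in\eh$ and every $\omega\in\Omega$, and derive a contradiction with Theorem~3(b). First I would observe that positivity of $T$ together with condition~(i) of Definition~3 immediately gives $T\sh\subseteq\sos$: for $W\in\sh$ we have $TW\geq 0$ and $(TW)(\Omega)=1$, so $TW$ is a probability measure. This is exactly condition~(i) of Definition~2.

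Next I would extract condition~(ii) of Definition~2 from the hypothesis on $S$. For $A\in\eh=[0,I]$ we know $SA\in\fos$, and the assumption $(SA)(\omega)\in[0,1]$ for all $\omega$ says precisely that $0\leq SA\leq\chi_\Omega$ pointwise, i.e.\ $SA\in\eos$; hence $S\eh\subseteq\eos$, which is condition~(ii) of Definition~2. Condition~(iii) of Definition~2 is literally condition~(iii) of Definition~3, so it holds unchanged. Therefore $(T,S)$ is a classical representation of quantum mechanics on $\os$ in the sense of Definition~2. Since $\dim\hi\geq 2$, Theorem~3(b) asserts that no such classical representation exists---a contradiction. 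Hence at least one of the two negated hypotheses fails, which is exactly the disjunction claimed in the corollary. The explicit pointwise reformulations stated in items~(i) and~(ii) of the corollary are just the negations ``$T$ not positive'' and ``$S\eh\not\subseteq\eos$'' spelled out.

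There is essentially no obstacle here; the only point requiring the slightest care is the equivalence between ``$(SA)(\omega)\in[0,1]$ for all $\omega$'' and ``$SA\in\eos$''. This is immediate from the definition $\eos=[0,\chi_\Omega]$, where the order on $\fos$ is the pointwise order, so $SA\in\eos$ means $0\leq(SA)(\omega)\leq 1$ for every $\omega\in\Omega$. One should also note that we do not need to invoke boundedness of $T$ beyond what Definition~3 already provides, nor do we need $SI=\chi_\Omega$; the argument only uses that $T$ is positive and normalization-preserving (giving $T\sh\subseteq\sos$) and that $S$ maps $\eh$ into $\eos$, which together are precisely the hypotheses of Definition~2.
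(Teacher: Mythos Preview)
Your proof is correct and follows exactly the approach the paper intends: the paper simply remarks that the corollary is ``an immediate conclusion from Definition~3'' by Theorem~3, and you have spelled out precisely this implication---negating (i) and (ii) yields all three conditions of Definition~2, contradicting Theorem~3(b).
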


The next lemma collects some properties of quasi-probability representations $ (T,S) $ and some statements
on them. Note that the dual (adjoint) map $S'$ is, like $T'$, understood w.r.t.\ the dualities
$ \langle \tsh,\bsh \rangle $ and $ \langle \mos,\fos \rangle $.

\begin{lemma} Let $ (T,S) $ be a quasi-probability representation on $ (\Omega,\Sigma) $.
\begin{enumerate}
\item[(a)] For all $ V \in \tsh $,
\begin{equation}
\mathrm{tr} \, V = (TV)(\Omega)
\end{equation}
holds; equivalently, $ T'\chi_{\Omega} = I $. Furthermore, for all $ V \in \tsh $  and all $ A \in \bsh $,
\begin{equation}
\mathrm{tr} \, VA = \int SA \, \mathrm{d}(TV).
\end{equation}
The maps $T$ and $S$ are injective; the map $ T' \! : \fos \to \bsh $ is surjective, and the restriction
$ T'|_{R(S)} $ of $T'$ to the range $ R(S) $ of $S$ equals $ S^{-1} \! : R(S) \to \bsh $.
\item[(b)] For $T$ being bijective, it is necessary that $ \mos $ is norm-separable and that
$ (\mos)' = \fos $. If $T$ is bijective, then $ SI = \chi_{\Omega} $, and $S$ is also
bijective and bounded. Moreover, $ T' = S^{-1} $, $ S' \! : \mos \to \tsh $ exists, and $ S' = T^{-1} $.
\item[(c)] If $T$ is bijective, at least one of the maps $T$ and $S$ is not positive
($ \dim \hi \geq 2) $. Equivalently, if $T$ is positive and bijective (i.e., $T$ is a bijective
semiclassical representation), then there exist an effect $ A \in \eh $ and a point $ \omega \in \Omega $
such that $ (SA)(\omega) < 0 $.
\end{enumerate}
\end{lemma}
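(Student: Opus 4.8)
The plan is to propagate the defining identity of a quasi-probability representation (condition (iii) of Definition 3) through the two dual pairs $\dual{\tsh}{\bsh}$ and $\dual{\mos}{\fos}$, and, for parts (b) and (c), to bring in the open mapping theorem together with the standard fact that $\bsh$ is not a vector lattice when $\dim\hi\geq 2$. For part (a) I would first read condition (i) as $1=(TW)(\Omega)=\dual{TW}{\chi_{\Omega}}=\tr{W(T'\chi_{\Omega})}$ for every $W\in\sh$; since each $V\in\tsh$ is a real linear combination of at most two density operators, this forces $T'\chi_{\Omega}=I$, hence $\tr{V}=(TV)(\Omega)$. Condition (iii) is linear in $W$ and, for fixed $W$, affine in $A\in\eh$; as $\sh$ spans $\tsh$ and $\eh$ spans $\bsh$ (write $A=2\no{A}\,B-\no{A}\,I$ with $B\in\eh$), it extends to $\tr{VA}=\int SA\,\mathrm d(TV)=\dual{TV}{SA}=\tr{V(T'SA)}$ for all $V\in\tsh$, $A\in\bsh$. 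Letting $V$ range over $\tsh$ gives $A=T'SA$, i.e.\ $T'S=\mathrm{id}_{\bsh}$; thus $T'$ is surjective, $S$ is injective, and $T'|_{R(S)}=S^{-1}$. Injectivity of $T$ follows from the extended identity exactly as in the proof of Lemma 4.

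For part (b), assume $T$ bijective. Being a bounded linear bijection of Banach spaces, $T$ is a topological isomorphism by the open mapping theorem, so $\mos=T(\tsh)$ is norm-separable and the Banach adjoint $T^{*}\colon(\mos)'\to(\tsh)'=\bsh$ is a topological isomorphism. Since $T'=T^{*}|_{\fos}$ is already surjective onto $\bsh$ by (a) and $T^{*}$ is injective, every $\ell\in(\mos)'$ equals the unique $f\in\fos$ with $T'f=T^{*}\ell$; hence $(\mos)'=\fos$ and $T'=T^{*}$ is a topological isomorphism. Then $T'S=\mathrm{id}_{\bsh}$ gives $S=(T')^{-1}$, so $S$ is bijective and bounded with $T'=S^{-1}$, and applying injectivity of $T'$ to $T'\chi_{\Omega}=I=T'(SI)$ yields $SI=\chi_{\Omega}$. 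For the adjoint $S'$, I would fix $\nu\in\mos$ and put $W:=T^{-1}\nu\in\tsh$; then $\int SA\,\mathrm d\nu=\dual{TW}{SA}=\dual{W}{T'SA}=\tr{WA}$ for all $A\in\bsh$, so the functional $A\mapsto\int SA\,\mathrm d\nu$ is represented by $W\in\tsh$ --- precisely the assertion that $S'\colon\mos\to\tsh$ exists with $S'=T^{-1}$.

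For part (c), suppose $T$ is positive and bijective; then $T$ is a statistical map (positivity together with (i) gives $T\sh\subseteq\sos$) and $T'$ is positive. If $S$ were also positive, then by (b) $S$ would be a bijective positive linear map whose inverse $T'=S^{-1}$ is positive, i.e.\ an order isomorphism between the ordered vector spaces $\bsh$ and $\fos$; this is impossible for $\dim\hi\geq 2$, since $\fos$ is a vector lattice while $\bsh$ is not. (Alternatively, such an $S$ would carry $\eh=[0,I]$ onto $[0,\chi_{\Omega}]=\eos$, giving $T'\eos=\eh$ in contradiction with Theorem 3(a).) Hence $S$ is not positive: there is $A_{0}\geq 0$ with $SA_{0}\not\geq 0$, and rescaling to $A:=A_{0}/\no{A_{0}}\in\eh$ produces $\omega\in\Omega$ with $(SA)(\omega)<0$. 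The step I expect to require the most care is the identification $(\mos)'=\fos$ and $T'=T^{*}$ in part (b): this is where bijectivity of $T$ is genuinely used, and one must argue carefully that surjectivity of the mere restriction $T'$ forces the full dual $(\mos)'$ to collapse onto the subspace $\fos$.
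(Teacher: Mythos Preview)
Your proof is correct, and for parts (a) and (b) it follows essentially the same route as the paper: extend condition (iii) by linearity to obtain $T'S=\mathrm{id}_{\bsh}$, then use the open mapping theorem and the surjectivity of $T'=T^{*}|_{\fos}$ together with the injectivity of $T^{*}$ to force $(\mos)'=\fos$. The ordering differs slightly (the paper first establishes $S'=T^{-1}$ and then deduces $ST'=\mathrm{id}$, whereas you go directly to $S=(T')^{-1}$ once $T'$ is known to be bijective), and your derivation of $SI=\chi_{\Omega}$ via injectivity of $T'$ applied to $T'\chi_{\Omega}=I=T'SI$ is a bit slicker than the paper's integral computation using surjectivity of $T$, but these are cosmetic differences.

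For part (c), however, you take a genuinely different path. The paper argues constructively: since a bijective semiclassical representation satisfies $T\sh\subsetneq\sos$ (as shown in Section 2), there is a positive measure $\nu$ whose preimage $V=T^{-1}\nu$ is not positive; picking $A\geq 0$ with $\tr{VA}<0$ then gives $\langle\nu,SA\rangle=\tr{VA}<0$ with $\nu\geq 0$, so $SA\not\geq 0$. Your argument instead observes that if both $T$ and $S$ were positive and $T$ bijective, then $S$ would be a bipositive linear bijection $\bsh\to\fos$ (its inverse $T'$ being positive), hence an order isomorphism---impossible since $\fos$ is a vector lattice while $\bsh$ is not for $\dim\hi\geq 2$. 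Your alternative via Theorem 3(a) (if $S$ were positive one would get $S\eh=\eos$ and hence $T'\eos=\eh$) is also valid. The paper's approach is more explicit in that it actually exhibits where the negativity of $SA$ comes from, while yours is shorter and ties the result to structural facts about $\bsh$ already established elsewhere in the paper; indeed, the lattice argument you use is exactly the one the paper mentions after Theorem 3 as an alternative proof of Theorem 3(a).
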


\begin{proof} Using the representation $ V = \alpha_1W_1 - \alpha_2W_2 $ for $ V \in \tsh $ where
$ \alpha_1,\alpha_2 \in \mathbb{R} $ and $ W_1,W_2 \in \sh $, we obtain
$ \mathrm{tr} \, V = \alpha_1 - \alpha_2 $ and from condition (i) of Definition 3 that
\[
(TV)(\Omega) = (\alpha_1TW_1 - \alpha_2TW_2)(\Omega) = \alpha_1(TW_1)(\Omega) - \alpha_2(TW_2)(\Omega)
             = \alpha_1 - \alpha_2;
\]
hence, $ \mathrm{tr} \, V = (TV)(\Omega) $. Writing
$ (TV)(\Omega) = \int \chi_{\Omega} \, \mathrm{d}(TV) = \mathrm{tr} \, V(T'\chi_{\Omega}) $,
we conclude from $ \mathrm{tr} \, V = \mathrm{tr} \, V(T'\chi_{\Omega}) $ for all $ V \in \tsh $ that
$ T'\chi_{\Omega} = I $. Conversely, $ T'\chi_{\Omega} = I $ implies $ \mathrm{tr} \, V = (TV)(\Omega) $. Eq.\ (34)
and the statement on the injectivity of $T$ and $S$ are proved the same way
as the analogous statements of Lemma 4. Finally, from
\[
\langle V,A \rangle = \langle TV,SA \rangle = \langle V,T'SA \rangle
\]
for all $ V \in \tsh $ and all $ A \in \bsh $ it follows that $ T'SA = A $ for all $ A \in \bsh $, that is,
$S$ is injective, $T'$ is surjective, and $ T'|_{R(S)} = S^{-1} $.

Next, assume $T$ is bijective. Since $T$ is a norm-continuous linear map between Banach spaces,
$ T^{-1} $ is also norm-continuous. Thus, $T$ is a homeomorphism w.r.t.\ the norm topologies of
$ \tsh $ and $ \mos $, and the norm-separability of $ \tsh $ (which is a consequence of the separability of
$ \hi $) entails the separability of $ \mos $. Further, the assumption implies that the Banach-space adjoint map
$ T^* \! : (\mos)' \to \bsh $ is bijective. Because its restriction $ T' \! : \fos \to \bsh $ is surjective
according to part (a) of the theorem, it follows that $ (\mos)' = \fos $.

From Eqs.\ (34) and (33) we obtain
\[
\int SI \, \mathrm{d}(TV) = \mathrm{tr} \, VI = \mathrm{tr} \, V = (TV)(\Omega)
                          = \int \chi_{\Omega} \, \mathrm{d}(TV).
\]
If $T$ is surjective, we conclude from $ \int SI \, \mathrm{d}(TV) = \int \chi_{\Omega} \, \mathrm{d}(TV) $ that
$ SI = \chi_{\Omega} $.

Continuing with the assumption that $T$ is bijective, let $ \nu \in \mos $ and $ A \in \bsh $ be arbitrary. Then
$ \nu = TV $ for some $ V \in \tsh $, and from
\[
\langle \nu,SA \rangle = \langle TV,SA \rangle = \langle V,A \rangle = \langle T^{-1}\nu,A \rangle
\]
it follows that $S'$ exists and that $ S' = T^{-1} $. Now, from the equality chain
\[
\langle \nu,ST'f \rangle = \langle S'\nu,T'f \rangle = \langle T^{-1}\nu,T'f \rangle = \langle TT^{-1}\nu,f \rangle
                         = \langle \nu,f \rangle
\]
where $ \nu \in \mos $ and $ f \in \fos $, we obtain $ ST'f = f $, i.e., $S$ is surjective. Since $S$ is also injective, it is bijective; the statement $ T'|_{R(S)} = S^{-1} $ then implies $ T' = S^{-1} $. From
$ T' = S^{-1} $ or $ S' = T^{-1} $ we conclude that $S$ is bounded.

Finally, suppose $T$ is positive and bijective, i.e., $T$ is a bijective semiclassical representation. As
was pointed out in Section 2, $ T\sh $ is a proper subset of $ \sos $; equivalently, the image of the positive cone of
$ \tsh $ under $T$ is a proper subset of the positive cone of $ \mos $. Hence, there exists a positive measure
$ \nu \in \mos $ such that $ \nu = TV $ and $ V \in \tsh $ is not positive. As a consequence, there exists
a positive operator $ A \in \bsh $ such that $ \langle V,A \rangle < 0 $. From
\[
\langle \nu,SA \rangle = \langle TV,SA \rangle  = \langle V,A \rangle < 0
\]
it now follows that $SA$ is not positive. Since $ A \geq 0 $, the map $S$ is not positive.
\end{proof}

In general, the space $ \mos $ is not separable and its dual is larger than $ \fos $. However,
the necessary conditions of part (b) of the lemma can be met. Namely, let $ \Omega $ be a countable set
and $ \Sigma $ the power set of $ \Omega $. If $ \Omega $ is infinite, e.g., $ \Omega = \mathbb{N} $, then
$ \mos $ and $ \fos $ are norm- and order-isomorphic to the real sequence spaces
$ l^1_{\mathbb{R}} $ and $ l^{\infty}_{\mathbb{R}} $; so $ \mos $ is separable and
$ (\mos)' = \fos $. If $ \Omega $ is even finite, e.g., $ \Omega = \{ 1,\ldots,N \} $, then
$ \mos $ and $ \fos $ are isomorphic to the spaces $ (\mathbb{R}^N,\|\, . \,\|_1) $ and
$ (\mathbb{R}^N,\|\, . \,\|_{\infty}) $, $ \|\, . \,\|_1 $ being the sum norm and
$ \|\, . \,\|_{\infty} $ the maximum norm; both spaces are separable and each is the dual of the other.---The statements of part (c) sharpen the statement of Corollary 3 and confirm the statements of \cite{fer11,fer08;09,fer10}
on the ``necessity of negativity in quantum theory,'' however, under the additional assumption that $T$ is bijective.

Two further concepts utilized by Ferrie, Morris, and Emerson in the context of quasi-probability representations
are those of a frame and a dual frame \cite{fer11,fer08;09,fer10}. Before presenting their ideas,
we have to specify the notions of an operator-valued measure and of the
$ \sigma $-weak integral of a function w.r.t.\ such a measure. An \emph{operator-valued measure on
$ (\Omega,\Sigma) $}, the latter being a measurable space, is a map $ F \! : \Sigma \to \bsh $ satisfying (i)
$ F(\emptyset) = 0 $ and (ii) $ F(\bigcup_{i=1}^{\infty} B_i) = \sum_{i=1}^{\infty} F(B_i) $ where the sets
$ B_i \in \Sigma $ are mutually disjoint and the sum converges $ \sigma $-weakly (note that in general neither
$F$ is positive nor does it satisfy $ F(\Omega) = I $). In the context of our definition of the
$ \sigma $-weak integral and for the proof of Theorem 7 below, we have to investigate the map
$ V \mapsto \nu_V := \mathrm{tr} \, VF( \, . \, ) $. Clearly, $ \nu_V $ is a $ \sigma $-additive measure on
$ (\Omega,\Sigma) $, as such it is bounded, and $ T \! : \tsh \to \mos $, $ TV := \nu_V $, is linear. To understand
that the boundedness of $T$, as claimed in the following lemma, is not obvious, recall that $ \|\nu_V\| $ is
the total-variation norm of $ \nu_V $ and observe that
\begin{eqnarray*}
\|\nu_V\| &   =  & |\nu_V|(\Omega) = \sup \sum_{i=1}^{n} |\nu_V(B_i)|
              =     \sup \sum_{i=1}^{n} |\mathrm{tr} \, VF(B_i)|  \\       \\
          & \leq &  \|V\|_{\mathrm{tr}} \, \sup \sum_{i=1}^{n} \|F(B_i\|
\end{eqnarray*}
where the suprema are taken over all finite disjoint decompositions of $ \Omega $ into sets
$ B_1,\ldots,B_n \in \Sigma $ and the last supremum can be infinity.---The lemma now generalizes
the main statement of Theorem 2.

\begin{lemma}
Let $F$ be an operator-valued measure on $ (\Omega,\Sigma) $. Then the linear map
$ T \! : \tsh \to \mos $, $ TV = \nu_V $, i.e.,
\[
(TV)(B) = \mathrm{tr} \, VF(B),
\]
$ V \in \tsh $, $ B \in \Sigma $, is bounded. Conversely, every bounded linear map
$ T \! : \tsh \to \mos $ is of the form $ TV = \mathrm{tr} \, VF(\, . \,) $
with a uniquely determined operator-valued measure $ F \! : \Sigma \to \bsh $. In addition,
$F$ is bounded, i.e., $ \|F(B)\| \leq c $ for all $ B \in \Sigma $ and some $ c \in \mathbb{R} $.
\end{lemma}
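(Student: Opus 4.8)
The plan is to prove the two halves separately. The converse half --- reconstructing an operator-valued measure $F$ from a bounded $T$, and checking that it is bounded and unique --- is a routine dualization. The forward half is dominated by a single genuinely nontrivial point, the boundedness of $T$: as the discussion preceding the lemma already signals, the naive estimate $\|\nu_V\| \leq \|V\|_{\mathrm{tr}}\sup\sum_i\|F(B_i)\|$ is useless since an operator-valued measure need not have bounded variation, and the remedy is the uniform boundedness principle.

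For the forward half I would first verify, for each $V \in \tsh$, that $\nu_V := \tr V F(\,.\,)$ is $\sigma$-additive: for pairwise disjoint $B_i \in \Sigma$ the partial sums $\sum_{i=1}^n F(B_i)$ converge $\sigma$-weakly to $F(\bigcup_i B_i)$ by definition, and since $A \mapsto \tr V A$ is one of the functionals defining the $\sigma$-topology it is $\sigma$-weakly continuous, whence $\tr V F(\bigcup_i B_i) = \sum_{i=1}^\infty \nu_V(B_i)$. Thus $\nu_V$ is a bounded signed measure, so $TV := \nu_V$ defines a linear map $T \colon \tsh \to \mos$.

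To see that $T$ is bounded I would note that, for each fixed $V$, $\nu_V$ being a finite signed measure gives $\sup_{B\in\Sigma}|\nu_V(B)| \leq |\nu_V|(\Omega) < \infty$, while $V \mapsto \nu_V(B) = \tr V F(B)$ is precisely the action of $F(B) \in \bsh = (\tsh)'$; hence the family $\{F(B)\}_{B\in\Sigma}$ is pointwise bounded on the Banach space $\tsh$, so norm-bounded by the uniform boundedness principle, say $\|F(B)\| \leq c$ for all $B$ --- which in passing shows that every operator-valued measure into $\bsh$ is automatically bounded. Choosing a Hahn decomposition $\Omega = P \cup N$ for $\nu_V$ then yields $\|TV\| = |\nu_V|(\Omega) = \nu_V(P) - \nu_V(N) = \tr V (F(P) - F(N)) \leq \|V\|_{\mathrm{tr}}\bigl(\|F(P)\| + \|F(N)\|\bigr) \leq 2c\,\|V\|_{\mathrm{tr}}$, so $\|T\| \leq 2c$. (Alternatively, once $\nu_V \in \mos$ is known, boundedness of $T$ also follows from the closed graph theorem, since $V_n \to V$ in norm forces $(TV_n)(B) \to (TV)(B)$ while $TV_n \to \mu$ in total variation forces $(TV_n)(B) \to \mu(B)$, whence $\mu = TV$; but the argument above has the advantage of delivering the bound on $F$ as a byproduct.)

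For the converse half, given a bounded linear $T \colon \tsh \to \mos$, I would fix $B \in \Sigma$ and consider the functional $V \mapsto (TV)(B)$ on $\tsh$; it is bounded with norm $\leq \|T\|$ because $|(TV)(B)| \leq \|TV\| \leq \|T\|\,\|V\|_{\mathrm{tr}}$, so by $\bsh = (\tsh)'$ it is represented by a unique $F(B) \in \bsh$ with $\|F(B)\| \leq \|T\| =: c$; this at once yields $TV = \tr V F(\,.\,)$ and the asserted bound on $F$. That $F$ is an operator-valued measure then follows by transporting the corresponding properties of the measures $TV$: $(TV)(\emptyset) = 0$ for all $V$ forces $F(\emptyset) = 0$, and for disjoint $B_i$ the identity $\tr V\bigl(F(\bigcup_i B_i) - \sum_{i=1}^n F(B_i)\bigr) = (TV)(\bigcup_i B_i) - \sum_{i=1}^n (TV)(B_i) \to 0$, valid for every $V \in \tsh$, is exactly $\sigma$-weak convergence of the partial sums; uniqueness of $F$ is immediate since $\tsh$ separates $\bsh$ (test against $\kb{\varphi}{\varphi}$). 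I expect the only real obstacle to be the boundedness of $T$ above; in the converse half the one point deserving attention is that the convergence read off from the real measures $TV$ matches the $\sigma$-weak convergence required of an operator-valued measure, which it does since $\sigma(\bsh,\tsh)$ is by definition the coarsest topology making every $A \mapsto \tr V A$ continuous.
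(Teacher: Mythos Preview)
Your proof is correct and follows essentially the same route as the paper's: both establish boundedness of $T$ via the uniform boundedness principle applied to the family $\{F(B)\}_{B\in\Sigma}$ (pointwise bounded because each $\nu_V$ has finite total variation), and both reconstruct $F$ in the converse direction via duality---the paper writes $F(B) := T'\chi_B$, which is exactly your representing operator for $V \mapsto (TV)(B)$. The only cosmetic differences are that you invoke the Hahn decomposition explicitly to estimate $\|\nu_V\|$ where the paper groups partition sets by sign, and that you spell out the $\sigma$-additivity of $\nu_V$ and mention the closed-graph alternative, neither of which the paper does.
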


\begin{proof}
Let $ l_B(V) := \mathrm{tr} \, VF(B) = \nu_V(B) $. Then, on the one hand,
$ |l_B(V)| \leq \|F(B)\| \, \|V\|_{\mathrm{tr}} $, and, on the other hand,
$ |l_B(V)| = |\nu_V(B)| \leq \|\nu_V\| $. That is, we have a family of bounded linear functionals $ l_B $ on
$ \tsh $ satisfying $ |l_B(V)| \leq \|\nu_V\| $ for all $ B \in \Sigma $ and each $ V \in \tsh $. Therefore,
by the principle of uniform boundedness, $ \|l_B\| = \|F(B)\| \leq c $ for all $ B \in \Sigma $. Hence,
$ |\nu_V(B)| = |l_B(V)| \leq \|l_B\| \, \|V\|_{\mathrm{tr}} \leq c \|V\|_{\mathrm{tr}} $; in consequence,
\begin{eqnarray*}
\|\nu_V\| &   =  & \sup \sum_{i=1}^{n} |\nu_V(B_i)|
              =    \sup \Bigg( \sum_{\nu_V(B_i) \geq 0} \nu_V(B_i)
                             - \sum_{\nu_V(B_i) < 0} \nu_V(B_i) \Bigg)                          \\
          &   =  & \sup \Bigg( \nu_V \bigg( \bigcup_{\nu_V(B_i) \geq 0} B_i \bigg)
                      + \bigg| \nu_V \bigg( \bigcup_{\nu_V(B_i) < 0} B_i \bigg) \bigg| \Bigg)   \\
          & \leq & 2c\|V\|_{\mathrm{tr}},
          \end{eqnarray*}
the suprema again being taken over all finite disjoint decompositions of $ \Omega $ into sets
$ B_1,\ldots,B_n \in \Sigma $.

Now assume that $ T \! : \tsh \to \mos $ is linear and bounded. Then the dual map $T'$ exists as a map from
$ \fos $ into $ \bsh $. Define $ F(B) := T'\chi_B $, note that $ F(\emptyset) = 0 $, and prove
$ TV = \mathrm{tr} \, VF(\, . \,) $ as well as the $ \sigma $-additivity of $F$ the same way
as it was done in the proof of Theorem 2 for the case of a semiclassical representation
and a statistically complete observable. The uniqueness of $F$ is obvious.
\end{proof}

From Lemma 8 we conclude that, for a function $ f \in \fos $,
$ \left| \int f \, \mathrm{d}\nu_V \right| \leq \|f\| \, \|\nu_V\| \leq C \|f\| \, \|V\|_{\mathrm{tr}} $. Thus,
$ V \mapsto \int f \, \mathrm{d}(\mathrm{tr} \, VF( \, . \, )) $ is a bounded linear functional on
$ \tsh $, and there exists a unique $ A \in \bsh $ such that
$ \int f \, \mathrm{d}(\mathrm{tr} \, VF( \, . \, )) $ $ = \mathrm{tr} \, VA $. We call the operator
$A$ the \emph{$ \sigma $-weak integral of $f$ w.r.t.\ the operator-valued measure $F$} and write
$ A =: \int f \, \mathrm{d}F $. Briefly, if $ f \in \fos $, then
\[
\int f \, \mathrm{d}(\mathrm{tr} \, VF( \, . \, )) = \mathrm{tr} \left( V \int f \, \mathrm{d}F \right)
\]
for all $ V \in \tsh $. We already used the $ \sigma $-weak integral in the context of Theorem 2
for the particular case that $F$ is an observable.

We can now define frames and their duals in the sense of \cite{fer10}.

\begin{defin} \mbox{}
\begin{enumerate}
\item[(a)] A \emph{frame for $ \tsh $ on $ (\Omega,\Sigma) $} is an operator-valued measure
$ F \! : \Sigma \to \bsh $ such that
\begin{enumerate}
\item[(i)] $F$ is normalized, i.e., $ F(\Omega) = I $
\item[(ii)] for any $ W_1,W_2 \in \sh $,
\[
\mathrm{tr} \, W_1F(B) = \mathrm{tr} \, W_2F(B)
\]
for all $ B \in \Sigma $ implies $ W_1 = W_2 $.
\end{enumerate}
\item[(b)] A \emph{dual frame of $F$} is a family of linear (not necessarily bounded) functionals
$ D_{\omega} \! : \bsh \to \mathbb{R} $, $ \omega \in \Omega $, such that
\begin{enumerate}
\item[(i)] for each $ A \in \bsh $, the function $ \omega \mapsto D_{\omega}(A) $ is
$ \Sigma $-measurable and bounded
\item[(ii)] every $ A \in \bsh $ can be represented as the $ \sigma $-weak integral of
$ \omega \mapsto D_{\omega}(A) $ w.r.t.\ the frame $F$, i.e.,
\begin{equation}
A = \int D_{\omega}(A) \, F(\mathrm{d}\omega).
\end{equation}
\end{enumerate}
\end{enumerate}
\end{defin}

A frame in the sense of this definition generalizes the concept of a statistically complete observable. One
easily verifies that a normalized operator-valued measure is a frame if and only if one of the conditions
(ii), (iii) of Lemma 1 is fulfilled. If $F$ is a frame on $ (\Omega,\Sigma) $, then the linear map
$ T \! : \tsh \to \mos $ defined by
\[
(TV)(B) := \mathrm{tr} \, VF(B),
\]
$ V \in \tsh $, $ B \in \Sigma $, satisfies $ (TW)(\Omega) = 1 $ for $ W \in \sh $, is bounded according to Lemma 8,
and is injective. Following \cite{fer10} again, we call such a map $T$ a \emph{frame representation of
$ \tsh $ on $ (\Omega,\Sigma) $}. Conversely, for every frame representation $T$ there exists, again by Lemma 8,
a unique frame such that $ TV = \mathrm{tr} \, VF(\, . \,) $ (cf.\ Theorem 2). In the context of frames,
a frame representation is the canonical generalization of a semiclassical representation. For every
quasi-probability representation $ (T,S) $, $T$ is a frame representation.

Finally, there is a one-one correspondence between the pairs consisting of a frame and one of its duals and
the quasi-probability representations. This result originates from \cite{fer08;09} and in particular
from \cite{fer10} (also cf.\ \cite{fer11}); on the basis of our preparation, we are able to give a rigorous proof.

\begin{theorem}
If $F$ is a frame on $ (\Omega,\Sigma) $ and the family $ D_{\omega} $, $ \omega \in \Omega $, a dual frame of
$F$, then a quasi-probability representation $ (T,S) $ on $ (\Omega,\Sigma) $ is defined by
\begin{equation}
(TV)(B) := \mathrm{tr} \, VF(B), \qquad (SA)(\omega) := D_{\omega}(A)
\end{equation}
where $ V \in \tsh $, $ B \in \Sigma $, $ A \in \bsh $, and $ \omega \in \Omega $. Conversely,
given any quasi-probability representation $ (T,S) $, there exists a unique frame $F$ and exactly one of its duals
$ D_{\omega} $, $ \omega \in \Omega $, such that $ TV = \mathrm{tr} \, VF( \, . \, ) $ and
$ (SA)(\omega) = D_{\omega}(A) $.
\end{theorem}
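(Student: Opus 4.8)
The plan is to prove the two implications separately, relying on Lemma~8 (the correspondence between bounded linear maps $T\!:\tsh\to\mos$ and bounded operator-valued measures $F$, with $F(B)=T'\chi_B$), on Lemma~7(a), and on the defining property of the $\sigma$-weak integral, namely $\int f\,\mathrm{d}(\mathrm{tr}\,VF(\,.\,))=\mathrm{tr}\left(V\int f\,\mathrm{d}F\right)$ for $f\in\fos$ and $V\in\tsh$.

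For the first claim, suppose $F$ is a frame on $\os$ and $D_\omega$, $\omega\in\Omega$, a dual frame of $F$, and define $(T,S)$ by Eq.~(36). Since a frame is in particular an operator-valued measure, Lemma~8 shows that $T$ is a well-defined bounded linear map into $\mos$; this settles condition~(ii) of Definition~3. Condition~(i) follows from normalization of $F$: for $W\in\sh$ one has $(TW)(\Omega)=\mathrm{tr}\,WF(\Omega)=\mathrm{tr}\,W=1$. For $S$, part~(i) of the dual-frame definition says precisely that $SA\!:\omega\mapsto D_\omega(A)$ is a bounded $\Sigma$-measurable function, hence $SA\in\fos$, and $S$ is linear because every $D_\omega$ is. To verify condition~(iii), take $W\in\sh$ and $A\in\eh$ and apply the $\sigma$-weak-integral identity with $f=SA$ and $V=W$: one obtains $\int SA\,\mathrm{d}(TW)=\mathrm{tr}\left(W\int D_\omega(A)\,F(\mathrm{d}\omega)\right)=\mathrm{tr}\,WA$, the last step being exactly Eq.~(35). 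Hence $(T,S)$ is a quasi-probability representation.

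For the converse, let $(T,S)$ be a quasi-probability representation on $\os$. By Lemma~8 there is a unique operator-valued measure $F$ with $TV=\mathrm{tr}\,VF(\,.\,)$, explicitly $F(B)=T'\chi_B$. I would then check that $F$ is a frame: $F(\Omega)=T'\chi_\Omega=I$ by Lemma~7(a), and condition~(ii) of Definition~4(a) is just the injectivity of $T$ (again Lemma~7(a)), since $\mathrm{tr}\,W_1F(B)=\mathrm{tr}\,W_2F(B)$ for all $B\in\Sigma$ means $TW_1=TW_2$. Next set $D_\omega(A):=(SA)(\omega)$; these are linear functionals on $\bsh$, and since $SA\in\fos$ is bounded and $\Sigma$-measurable, part~(i) of the dual-frame definition holds. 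For part~(ii), observe that $\int D_\omega(A)\,F(\mathrm{d}\omega)=\int SA\,\mathrm{d}F$ is, by the definition of the $\sigma$-weak integral, the unique $A'\in\bsh$ with $\mathrm{tr}\,VA'=\int SA\,\mathrm{d}(TV)$ for all $V\in\tsh$; but Eq.~(34) of Lemma~7(a) gives $\mathrm{tr}\,VA=\int SA\,\mathrm{d}(TV)$ for all $V\in\tsh$, so $A'=A$. Uniqueness of $F$ is part of Lemma~8, and the dual frame is forced, since the requirement $(SA)(\omega)=D_\omega(A)$ already determines $D_\omega(A)$ for every $A\in\bsh$ and every $\omega\in\Omega$.

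The argument is essentially bookkeeping once Lemmas~7 and~8 and the notion of the $\sigma$-weak integral are available; the points that need care are, first, making sure in each direction that the object called $S$, resp.\ $D_\omega$, lands in the correct space --- boundedness and measurability of $\omega\mapsto D_\omega(A)$ is exactly what makes $SA\in\fos$, and conversely --- and, second, that in the converse one must invoke Eq.~(34) in its full strength, i.e.\ for \emph{all} $V\in\tsh$ rather than only for states, which is what lets one conclude $A'=A$. Beyond these matchings no deeper obstacle is expected.
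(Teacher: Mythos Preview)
Your proof is correct and follows essentially the same route as the paper's own proof: both directions rely on Lemma~8 for the $T\leftrightarrow F$ correspondence, Lemma~7(a) for injectivity and $T'\chi_\Omega=I$, and the defining property of the $\sigma$-weak integral to pass between $\mathrm{tr}\,VA$ and $\int SA\,\mathrm{d}(TV)$. Your write-up is in fact slightly more explicit than the paper's in spelling out why $SA\in\fos$ and why the dual frame is uniquely determined.
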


\begin{proof}
Let $F$ and $ D_{\omega} $, $ \omega \in \Omega $, be given and consider the linear maps
$ T \! : \tsh \to \mos $ and $ S \! : \bsh \to \fos $ according to (36). From condition (i) of part (a)
of Definition 4 and Lemma 8 it follows that $T$ satisfies conditions (i) and (ii) of Definition 3. Using that
$ D_{\omega} $, $ \omega \in \Omega $, is dual to $F$, we further obtain
\[
\mathrm{tr} \, VA = \mathrm{tr} \left( V \int D_{\omega}(A) \, F(\mathrm{d}\omega) \right)
                  = \int D_{\omega}(A) \, \mathrm{tr} \, VF(\mathrm{d}\omega)
                  = \int SA \, \mathrm{d}(TV).
\]
Now suppose that a quasi-probability representation $ (T,S) $ is given. Then, by Lemma 8,
$T$ uniquely determines an operator-valued measure $F$ satisfying $ TV = \mathrm{tr} \, VF( \, . \, ) $. Because
$ (TW)(\Omega) = 1 $ for $ W \in \sh $, $F$ is normalized; since, by Lemma 7, $T$ is injective,
condition (ii) of part (a) of Definition 4 is implied. Hence, $F$ is a frame. Finally, from
\[
\mathrm{tr} \, VA = \int SA \, \mathrm{d}(TV) = \int SA \, \mathrm{d}(\mathrm{tr} \, VF( \, . \, ))
                  = \mathrm{tr} \left( V \int SA \, \mathrm{d}F \right)
\]
we conclude that $ A = \int SA \, \mathrm{d}F $; that is, $ A \mapsto (SA)(\omega) $, $ \omega \in \Omega $,
is a dual frame of $F$.
\end{proof}

Observe that in the first part of the proof, condition (ii) in the definition of a frame has not been used. That is,
if for a normalized operator-valued measure $F$ (without the assumption of condition (ii) for a frame) there exists
a dual frame $ D_{\omega} $, $ \omega \in \Omega $, in the sense of part (b) of Definition 4, then $F$ and
$ D_{\omega} $, $ \omega \in \Omega $, define a quasi-probability representation $ (T,S) $, and $F$,
conversely determined by $T$, satisfies condition (ii) of part (a) of Definition 4. Hence,
this condition is necessary for the existence of a dual frame.

We still have to discuss the question for examples of quasi-probability representations. In \cite{fer10}
it is argued that an example of a quasi-probability representation can be based on the representation
of the quantum states by the Wigner distribution functions \cite{wig32,wig71}. To understand this statement,
we have to modify the concept specified by Definition 3. Namely, like in Definition 1, let
$ (\Omega,\Sigma,\lambda) $ be a $ \sigma $-finite measure space, let
$ L^{1}_{\mathbb{R}}(\Omega,\Sigma,\lambda) $ and $ L^{\infty}_{\mathbb{R}}(\Omega,\Sigma,\lambda) $
be the corresponding spaces of real $ L^1 $- and $ L^{\infty} $-functions, respectively,
and define a \emph{quasi-probability representation on $ (\Omega,\Sigma,\lambda) $} to be a pair
$ (\widehat{T},\widehat{S}) $ of linear maps $ \widehat{T} \! : \tsh \to L^{1}_{\mathbb{R}}(\Omega,\Sigma,\lambda) $ and $ \widehat{S} \! : \bsh \to L^{\infty}_{\mathbb{R}}(\Omega,\Sigma,\lambda) $ such that
\begin{enumerate}
\item[(i)] for all $ W \in \sh $, $ \int \widehat{T}W \, \mathrm{d}\lambda = 1 $
\item[(ii)] $ \widehat{T} $ is bounded
\item[(iii)] for all $ W \in \sh $ and all $ A \in \eh $,
\[
\mathrm{tr} \, WA = \int (\widehat{T}W)(\widehat{S}A) \, \mathrm{d}\lambda.
\]
\end{enumerate}
The elements of $ \widehat{T}\sh $ are quasi-probability densities. An advantage of this slightly modified
concept $ (\widehat{T},\widehat{S}) $ is again that $ L^{\infty}_{\mathbb{R}}(\Omega,\Sigma,\lambda) $
is the dual space of $ L^{1}_{\mathbb{R}}(\Omega,\Sigma,\lambda) $.

Actually, as was shown by Pool \cite{poo66}, the Wigner functions are in general only $ L^2 $-functions
on the phase space $ \mathbb{R}^2 $, for instance ($ \mathbb{R}^2 $ being equipped with the Lebesgue measure
$ \lambda^2 $ on its Borel sets $ \Xi(\mathbb{R}^2) $). So strictly speaking they cannot be related to an example
of a quasi-probability representation on $ (\mathbb{R}^2,\Xi(\mathbb{R}^2),\lambda^2) $, they only give rise
to a similar construction. Let $ \widehat{T}(P_{\psi}) $ be the Wigner function of a pure quantum state
$ P_{\psi} = |\psi\rangle \langle\psi| $, $ \psi \in \hi $, $ \|\psi\| = 1 $; define it
according to \cite{poo66}, Definition IV.1 and Proposition IV.2, but for some conventional reason
add a factor $ \frac{1}{\sqrt{2\pi}} $ (and set $ \hbar = 1 $). The definition of the map
$ \widehat{T} $ can canonically be extended to $ \sh $ and moreover to $ \tsh $, and we obtain a bounded linear map
$ \widehat{T} \! : \tsh \to L^{2}_{\mathbb{R}}(\mathbb{R}^2) $, the \emph{Wigner transform}. Now introduce
the \emph{Weyl transform} $ H \! : L^{2}_{\mathbb{R}}(\mathbb{R}^2) \to \bsh $ (originating from \cite{wey27})
according to \cite{poo66}, Proposition V.1, but again involving the additional factor
$ \frac{1}{\sqrt{2\pi}} $. We then have
\[
\langle V,Hf \rangle = \mathrm{tr} \, V(Hf) = \int (\widehat{T}V)f \, \mathrm{d}\lambda^2
                     = \langle \widehat{T}V,f \rangle_2
\]
where $ V \in \tsh $, $ f \in L^{2}_{\mathbb{R}}(\mathbb{R}^2) $, and $ \langle \, . \, , \, . \, \rangle_2 $
denotes the scalar product in $ L^{2}_{\mathbb{R}}(\mathbb{R}^2) $ (cf.\ \cite{sch82}); that is, the Weyl transform is dual
to the Wigner transform, $ H =\widehat{T}' = \widehat{T}^* $. The map $H$ is injective with range
$ \mathcal{B}^{HS}_{s}(\mathcal{H}) $, the space of the self-adjoint Hilbert-Schmidt operators. Defining
$ \widehat{S} := H^{-1} $, it follows that, for all $ V \in \tsh $ and all
$ A \in \mathcal{B}^{HS}_{s}(\mathcal{H}) $,
\begin{equation}
\mathrm{tr} \, VA = \langle V,Hf \rangle = \langle V,\widehat{T}'f \rangle = \langle \widehat{T}V,f \rangle_2
                  = \langle \widehat{T}V,H^{-1}A \rangle_2 = \int (\widehat{T}V)(\widehat{S}A) \, \mathrm{d}\lambda^2
\end{equation}
where $ f \in L^{2}_{\mathbb{R}}(\mathbb{R}^2) $ and $ A = Hf $.

Hence, $ \mathrm{tr} \, WA = \int (\widehat{T}W)(\widehat{S}A) \, \mathrm{d}\lambda^2 $ holds true
for all $ W \in \sh $ and all $ A \in \eh \cap \mathcal{B}^{HS}_{s}(\mathcal{H}) $. Thus,
the pair $ (\widehat{T},\widehat{S}) $ consisting of the Wigner transform
$ \widehat{T} \! : \tsh \to L^{2}_{\mathbb{R}}(\mathbb{R}^2) $ and the inverse Weyl transform
$ \widehat{S} \! : \mathcal{B}^{HS}_{s}(\mathcal{H}) \to L^{2}_{\mathbb{R}}(\mathbb{R}^2) $
satisfies condition (ii) and, up to a certain point, also condition (iii) of the definition
of a quasi-probability representation on $ (\mathbb{R}^2,\Xi(\mathbb{R}^2),\lambda^2) $. With some restrictions,
condition~(i) is satisfied as well.---The space $ \mathcal{B}^{HS}_{s}(\mathcal{H}) $ is, as a subspace of $ \bsh $, not closed; its norm closure is the space of the compact self-adjoint operators, and its $ \sigma $-closure is
$ \bsh $. However, one can show that the set $ \eh \cap \mathcal{B}^{HS}_{s}(\mathcal{H})$ is not $ \sigma $-dense in
$ \eh $ (cf.\ the paragraph after Corollary 1); consequently, an arbitrary effect of $ \eh $ can physically
not be approximated by an effect $ A \in \eh \cap \mathcal{B}^{HS}_{s}(\mathcal{H})$, and the replacement of
$ \eh $ by $ \eh \cap \mathcal{B}^{HS}_{s}(\mathcal{H}) $ in condition (iii) is an essential restriction.

We now take into account that $ \mathcal{B}^{HS}_{s}(\mathcal{H}) $ is a Hilbert space w.r.t.\ the real
Hilbert-Schmidt scalar product $ (A,B) \mapsto \langle A,B \rangle_{HS} := \mathrm{tr} \, AB $. As proved
in \cite{poo66}, the Weyl transform, i.e., $ \sqrt{2\pi}H $, is a unitary map from
$ L^{2}_{\mathbb{R}}(\mathbb{R}^2) $ onto $ \mathcal{B}^{HS}_{s}(\mathcal{H}) $. Defining the linear maps
$ \widetilde{T} := \frac{1}{2\pi}H^{-1} $ and $ \widetilde{S} := H^{-1} $ \vspace{0.6mm} from
$ \mathcal{B}^{HS}_{s}(\mathcal{H}) $ onto $ L^{2}_{\mathbb{R}}(\mathbb{R}^2) $, we obtain that, for all
$ V \in \mathcal{B}^{HS}_{s}(\mathcal{H}) $ and all $ A \in \mathcal{B}^{HS}_{s}(\mathcal{H}) $,
\begin{eqnarray}
\mathrm{tr} \, VA & = & \langle V,A \rangle_{HS} = \langle H\varrho,Hf \rangle_{HS}
                    =   \frac{1}{2\pi}\langle \varrho,f \rangle_2
                    =   \frac{1}{2\pi} \int \varrho f \, \mathrm{d}\lambda^2            \nonumber\\
                  & = & \int (\widetilde{T}V)(\widetilde{S}A) \, \mathrm{d}\lambda^2    \nonumber\\
\end{eqnarray}
where $ \varrho,f \in L^{2}_{\mathbb{R}}(\mathbb{R}^2) $, $ V = H\varrho $, and $ A = Hf $. In particular, if
$ V \in \tsh $, it follows from (37), $ \widehat{S} = H^{-1} = \widetilde{S} $, and (38) that
\[
\mathrm{tr} \, VA = \langle \widehat{T}V,\widehat{S}A \rangle_2 = \langle \widehat{T}V,\widetilde{S}A \rangle_2
                  = \langle \widetilde{T}V,\widetilde{S}A \rangle_2.
\]
Since $ A \in \mathcal{B}^{HS}_{s}(\mathcal{H}) $ is arbitrary and $ \widetilde{S} $ is surjective,
this implies that $ \widehat{T}V = \widetilde{T}V $ for all $ V \in \tsh $. That is,
$ \widetilde{T} $ is an extension of $ \widehat{T} $ on $ \mathcal{B}^{HS}_{s}(\mathcal{H}) $. In fact,
$ \widehat{T} $ is continuous w.r.t.\ the trace norm $ \|\,.\,\|_{\mathrm{tr}} $ as well as the Hilbert-Schmidt norm
$ \|\,.\,\|_{HS} $, and $ \tsh $ is $ \|\,.\,\|_{HS} $-dense in $ \mathcal{B}^{HS}_{s}(\mathcal{H}) $,
so $ \widetilde{T} $ is the unique $ \|\,.\,\|_{HS} $-$ \|\,.\,\|_2 $-continuous extension of the Wigner transform
$ \widehat{T} $ on $ \mathcal{B}^{HS}_{s}(\mathcal{H}) $.

According to some discussion in Section 2, a semiclassical representation $ T \! : \tsh \to \mos $ cannot map
the set $ \sh $ onto the set $ \sos $; equivalently, a semiclassical representation cannot be bijective
with a positive inverse $ T^{-1} $. However, a semiclassical representation can be bijective, in this case
the map $ T^{-1} $ is not positive, i.e., the set $ T^{-1}\sos $ contains $ \sh $ as well as some $ V \in \tsh $
being not comparable with $ 0 \in \tsh $. Assume now that $T$ is a bijective semiclassical representation. As
a consequence, $ \mos $ must be norm-separable (cf.\ part (b) of Lemma 7); as an additional assumption, let
$ (\mos)' = \fos $. Then $ T' = T^* $ is also bijective, and by means of the definition
$ S := (T')^{-1} = (T^{-1})' $, $ S \! : \bsh \to \fos $, we obtain
\begin{equation}
\langle TV,SA \rangle = \langle V,T'(T')^{-1}A \rangle = \langle V,A \rangle
\end{equation}
for all $ V \in \tsh $ and all $ A \in \bsh $. Hence, Eq.\ (34) is fulfilled, and $ (T,S) $ is a quasi-probability
representation where $T$ is a positive bijective linear map and, by part (c) of Lemma 7, $S$ is not.

The same way a quasi-probability representation is obtained if, more generally, $T$ is assumed to be
a bijective frame representation (i.e., $T$ is a bijective bounded linear map $ T \! : \tsh \to \mos $
such that $ (TW)(\Omega) = 1 $ for all $ W \in \sh $). In view of part (b) of Lemma 7, we have achieved
the following result: A bijective frame representation $T$ and a linear map $ S \! : \bsh \to \fos $
constitute a quasi-probability representation $ (T,S) $ if and only if the conditions
$ (\mos)' = \fos $ and $ S = (T')^{-1} $ are satisfied.

Thus, the question of the existence of quasi-probability representations on $ (\Omega,\Sigma) $ can be reduced to
the question of the existence of bijective semiclassical (or frame) representations on $ (\Omega,\Sigma) $ where
$ (\mos)' = \fos $. The latter question has a positive answer, at least for a finite-dimensional Hilbert space,
as we are going to show. Let $ \dim \mathcal{H} = n $, then $ \tsh = \bsh $, $ \dim \bsh = n^2 =: N $,
and the bilinear functional
\begin{equation}
(V,A) \mapsto \langle V,A \rangle = \mathrm{tr} \, VA = \langle V,A \rangle_{HS}
\end{equation}
coincides with the Hilbert-Schmidt scalar product in $ \bsh $. Notice that,
in spite of their set-theoretical equality, the spaces $ \tsh $ and $ \bsh $ are equipped
with different norms which again are different from the Hilbert-Schmidt norm. However,
because of the finite dimension, all the norms are equivalent.

It can be shown that there exist algebraic bases $ F_1,\ldots,F_N $ of $ \bsh $ consisting of positive operators
satisfying $ \sum_{i=1}^{N} F_i = I $ \cite{bus93,hel93}. Such a basis defines a statistically complete observable on
$ (\Omega,\Sigma) $ where $ \Omega := \{ 1,\ldots,N \} $ and $ \Sigma $ is the power set of $ \Omega $. Since
in the case of this measurable space the probability measures can be identified with the stochastic vectors of
$ \mathbb{R}^N $ and the space $ \mos $ with $ \mathbb{R}^N $ equipped with the sum norm, the map
$ T \! : \tsh \to \mathbb{R}^N $,
\begin{equation}
TV := \left( \begin{array}{c} \mathrm{tr} \, VF_1 \\ \vdots \\ \mathrm{tr} \, VF_N \end{array} \right),
\end{equation}
is a bijective semiclassical representation. Furthermore, the space $ \fos $ can be identified with
$ \mathbb{R}^N $ with the maximum norm, and the bilinear functional
\[
(\nu,f) \mapsto \langle \nu,f \rangle = \int f \, \mathrm{d}\nu = \sum_{i=1}^{N} q_ia_i = \langle q,a \rangle_2
\]
coincides with the standard scalar product of $ \mathbb{R}^N $ where $ q \in \mathbb{R}^N $ characterizes the measure
$ \nu \in \mos $ and $ a \in \mathbb{R}^N $ characterizes the function $ f \in \fos $. From
\[
\langle TV,a \rangle_2 = \sum_{i=1}^{N} a_i \mathrm{tr} \, VF_i
                       = \mathrm{tr} \left( V\sum_{i=1}^{N} a_iF_i \right)
                       = \langle V,T'a \rangle_{HS},
\]
$ V \in \bsh $, $ a \in \mathbb{R}^N $, it follows for $ T' \! : \mathbb{R}^N \to \bsh $ that
\begin{equation}
T'a = \sum_{i=1}^{N} a_iF_i.
\end{equation}
Clearly, $T'$ is also bijective. Representing an arbitrary operator $ A \in \bsh $ w.r.t.\ the basis
$ F_1,\ldots,F_N $, $ A = \sum_{i=1}^{N} \alpha_iF_i $, (42) implies for $ (T')^{-1} \! : \bsh \to \mathbb{R}^N $
that
\begin{equation}
(T')^{-1}A = \left( \begin{array}{c} \alpha_1 \\ \vdots \\ \alpha_N \end{array} \right).
\end{equation}
Defining $ S := (T')^{-1} $, we obtain from (41) and (43) that
\begin{equation}
\langle TV,SA \rangle_2 = \sum_{i=1}^{N} \alpha_i\mathrm{tr} \, VF_i
                        = \mathrm{tr} \left( V\sum_{i=1}^{N} \alpha_i F_i \right)
                        = \mathrm{tr} \, VA,
\end{equation}
in accordance with the general result (39). Thus, we have constructed a quasi-probability representation
$ (T,S) $ based a on bijective semiclassical representation. In the equation
\[
\mathrm{tr} \, WA = \langle p,a \rangle_2
\]
for the probability for the occurrence of the effect $ A \in \eh $ in the state $ W \in \sh $ the stochastic vector
$ p = TW $ and $ a = SA \in \mathbb{R}^N $ are explicitly given by formulas (41) and (43).

There are also bases of the $N$-dimensional space $ \bsh $ consisting of not necessarily positive operators
$ F_1,\ldots,F_N $ satisfying $ \sum_{i=1}^{N} F_i = I $. Such a basis defines, in the sense of Definition 4,
a frame on the previous discrete measurable space $ (\Omega,\Sigma) $ and, by Eq.\ (41), a bijective
frame representation $T$. The maps $T'$ and $ S := (T')^{-1} $ are again given by Eqs.\ (42) and (43). Since
the equality chains (39) and (44) hold true as well, $ (T,S) $ is a quasi-probability representation
of more general type than that of the preceding paragraph (cf.\ \cite{fer11,fer08;09,fer10}).

Let $ F_1,\ldots,F_N $ be a basis defining a frame and
$ D_1,\ldots,D_N $ the dual basis w.r.t.\ the scalar product (40), i.e.,
$ \langle F_i,D_j \rangle_{HS} = \delta_{ij} = \mathrm{tr} \, F_iD_j $. The coefficients of
$ A \in \bsh $ in the linear combination $ A = \sum_{i=1}^{N} \alpha_iF_i $ can then be calculated according to
$ \alpha_i = \langle D_i,A \rangle_{HS} = \mathrm{tr} \, AD_i $. Comparing the representation
\[
A = \sum_{i=1}^{N} \langle D_i,A \rangle_{HS} F_i
\]
with Eq.\ (35) and identifying the operators $ D_i $ with the linear functionals
$ A \mapsto \langle D_i,A \rangle_{HS} $, $ i \in \Omega = \{ 1,\ldots,N \} $, we see that the dual basis
$ D_1,\ldots,D_N $ is a dual frame of $ F_1,\ldots,F_N $. In this particular situation the dual frame
is unique. For $S$ we obtain from (43) that
\begin{equation}
SA = \left( \begin{array}{c} \mathrm{tr} \, AD_1 \\ \vdots \\ \mathrm{tr} \, AD_N \end{array} \right).
\end{equation}
This representation of $S$ is a particular case of the second equation of (36)
and is completely analogous to the representation (41) of $T$. Moreover, from
$ \sum_{i=1}^{n} F_i = I $ and $ \mathrm{tr} \, F_iD_j = \delta_{ij} $ it follows that
\[
\mathrm{tr} \, D_j = \mathrm{tr} \left[ \left( \sum_{i=1}^{n} F_i \right) D_j \right]
                   = \sum_{i=1}^{n }\mathrm{tr} \, F_iD_j = \sum_{i=1}^{n } \delta_{ij} = 1;
\]
thus, $ \mathrm{tr} \, D_1 = \ldots = \mathrm{tr} \, D_j = 1 $. As a consequence,
$ SI = \left( \begin{array}{c} 1 \\ \vdots \\ 1 \end{array} \right) $ which corresponds to the general result
stated in part (b) of Lemma 7 that $ SI = \chi_{\Omega} $ if $T$ is bijective.---Using (41) and (45),
Eq.\ (34) can directly be verified by the simple calculation
\begin{equation}
\mathrm{tr} \, VA = \mathrm{tr} \left( V\sum_{i=1}^{N} (\mathrm{tr} \, AD_i)F_i \right)
                  = \sum_{i=1}^{N} (\mathrm{tr} \, VF_i)(\mathrm{tr} \, AD_i)
                  = \langle TV,SA \rangle_2
                  = \int f \, \mathrm{d}\nu
\end{equation}
where $f$ is the function on $ \Omega $ determined by $ SA \in \mathbb{R}^N $ and $ \nu $ the measure on
$ \Sigma $ corresponding to $ TV \in \mathbb{R}^N $. The equality chain (46) is a rewriting of (44). Finally,
according to part (c) of Lemma 7, both $T$ and $S$ cannot be positive linear maps; by Eqs.\ (41) and (45)
it can be seen that, equivalently, both bases $ F_1,\dots,F_N $ and $ D_1,\dots,D_N $ cannot exclusively consist
of positive operators. For a proof of this fact which is adapted to the particular situation, see \cite{hel93}.

Summarizing, in the case of a finite-dimensional Hilbert space it is easy to construct quasi-probability representations $ (T,S) $ of quantum mechanics where $T$ and $S$ are even bijective. Moreover, $T$ can
be chosen to be positive, i.e., $T$ is a semiclassical representation. In the case of
an infinite-dimensional Hilbert space the construction of quasi-probability representations is
not obvious. A bijective semiclassical (or frame) representation induces, under the above duality condition,
a quasi-probability representation, but the existence or construction of an example of the former is,
in the infinite-dimensional case, not obvious as well.


\begin{thebibliography}{99}

\bibitem{ali77a;77b}
Ali, S. T., and E. Prugove\v{c}ki, ``Systems of imprimitivity and representations of quantum mechanics
on fuzzy phase spaces,'' {\it J. Math.\ Phys.}\ {\bf 18}, 219--228 (1977);
``Classical and Quantum Statistical Mechanics in a Common Liouville Space'',
{\it Physica} {\bf 89A}, 501--521 (1977).

\bibitem{bel95}
Beltrametti, E. G., and S. Bugajski, ``A classical extension of quantum mechanics,''
{\it J. Phys.\ A:\ Math.\ Gen.}\ {\bf 28}, 3329--3343 (1995).

\bibitem{bug91;93}
Bugajski, S., ``Nonlinear Quantum Mechanics is a Classical Theory,''
{\it Int.\ J. Theor.\ Phys.}\ {\bf 30}, 961--971 (1991);
``Delinearization of Quantum Logic,''
{\it Int.\ J. Theor.\ Phys.}\ {\bf 32}, 389--398 (1993).

\bibitem{bug93}
Bugajski, S., ``Classical Frames for a Quantum Theory---A Bird's-Eye View,''
{\it Int.\ J. Theor.\ Phys.}\ {\bf 32}, 969--977 (1993).

\bibitem{bug96}
Bugajski, S., ``Fundamentals of Fuzzy Probability Theory,''
{\it Int.\ J. Theor.\ Phys.}\ {\bf 35}, 2229--2244 (1996).

\bibitem{bug98}
Bugajski, S., K.-E. Hellwig, and W. Stulpe, ``On  Fuzzy Random Variables and Statistical Maps,''
{\it Rep.\ Math.\ Phys.}\ {\bf 41}, 1--11 (1998).

\bibitem{bus04}
Busch, P., ``Less (Precision) Is More (Information): Quantum Information in Terms of Quantum Statistical Models,''
in {\it Quantum Theory: Reconsideration of Foundations -- 2, Proceedings of the International Conference Vaxjo 2003},
A. Khrennikov (ed.), 113--128, Vaxjo University Press (2004);
arXiv:quant-ph/0401027v2.

\bibitem{bus95}
Busch, P., M. Grabowski, and P. J. Lahti, {\it Operational Quantum Physics},
Lecture Notes in Physics {\bf m31}, Springer-Verlag, Berlin (1995).

\bibitem{bus93}
Busch, P., K.-E. Hellwig, and W. Stulpe, ``On Classical Representations of Finite-Dimensional Quantum Mechanics,''
{\it Int.\ J. Theor.\ Phys.}\ {\bf 32}, 399--405 (1993).

\bibitem{bus89}
Busch, P., and P. J. Lahti, ``The Determination of the Past and the Future
of a Physical System in Quantum Mechanics,''
{\it Found.\ Phys.}\ {\bf 19}, 633--678 (1989).

\bibitem{dav76}
Davies, E. B., {\it Quantum Theory of Open Systems},
Academic Press, London (1976).

\bibitem{dav70}
Davies, E. B., and J. Lewis, ``An Operational Approach to Quantum Probability,''
{\it Commun.\ Math.\ Phys.}\ {\bf 17}, 239--260 (1970).

\bibitem{dix48}
Dixmier, J., ``Sur un th\'eor\`eme de Banach,''
{\it Duke Math.\ J.} {\bf 15}, 1057--1071 (1948).

\bibitem{fer11}
Ferrie, C., ``Quasi-probability representations of quantum theory with applications to quantum information science,''
{\it Rep.\ Prog.\ Phys.}\ {\bf 74}, 116001, 1--24 (2011);
arXiv:1010.2701v3 [quant-ph].

\bibitem{fer08;09}
Ferrie, C., and J. Emerson, ``Frame representations of quantum mechanics
and the necessity of negativity in quasi-probability representations,''
{\it J. Phys.\ A:\ Math.\ Theor.}\ {\bf 41}, 352001, 1--11 (2008);
arXiv:0711.2658v3 [quant-ph];
``Framed Hilbert space: hanging the quasi-probability pictures of quantum theory,''
{\it New J. Phys.}\ {\bf 11}, 063040, 1--33 (2009);
arXiv:0903.4843 [quant-ph].

\bibitem{fer10}
Ferrie, C., R. Morris, and J. Emerson, ``Necessity of negativity in quantum theory,''
{\it Phys.\ Rev.\ A} {\bf 82}, 044103, 1--4 (2010);
arXiv:0910.3198v2 [quant-ph].

\bibitem{fil10}
Filippov, S. N., and V. I. Man'ko, ``Symmetric Informationally Complete Positive Operator Valued Measure
and Probability Representation of Quantum Mechanics,''
{\it J. Russ.\ Laser Res.}\ {\bf 31}, 211--231 (2010);
arXiv:1005.4091v2 [quant-ph].

\bibitem{fuc02}
Fuchs, C. A., ``Quantum Mechanics as Quantum Information (and only a little more),''
arXiv:quant-ph/0205039v1, 1--59 (2002).

\bibitem{gud98}
Gudder, S., ``Fuzzy Probability Theory,''
{\it Demonstr.\ Math.}\ {\bf 31}, 235--254 (1998).

\bibitem{hel93}
Hellwig, K.-E., and W. Stulpe, ``A Classical Reformulation of Finite-Dimensional Quantum Mechanics,''
in {\it Symposium on the Foundations of Modern Physics 1993},
P. Busch, P. J. Lahti, and P. Mittelstaedt (eds.), 209--214,
World Scientific, Singapore (1993).

\bibitem{hol01}
Holevo, A. S., {\it Statistical Structure of Quantum Theory},
Lecture Notes in Physics {\bf m67}, Springer-Verlag, Berlin (2001).

\bibitem{lud70}
{\it Deutung des Begriffs ``physikalische Theorie'' und axiomatische Grundlegung
der Hilbertraumstruktur der Quantenmechanik durch Haupts\"{a}tze des Messens},
Lecture Notes in Physics {\bf 4}, Springer-Verlag, Berlin (1970).

\bibitem{lud83}
Ludwig, G., {\it Foundations of Quantum Mechanics I},
Springer-Verlag, New York (1983).

\bibitem{poo66}
Pool, J. C. T., ``Mathematical Aspects of the Weyl Correspondence,''
{\it J. Math.\ Phys.}\ {\bf 7}, 66--77 (1966).

\bibitem{pru84}
Prugove\v{c}ki, E., {\it Stochastic Quantum Mechanics and Quantum Spacetime},
Reidel, Dordrecht (1984).

\bibitem{sch82}
Schroeck, F. E., Jr., ``The Transitions among Classical Mechanics, Quantum Mechanics,
and Stochastic Quantum Mechanics,''
{\it Found.\ Phys.}\ {\bf 12}, 825--841 (1982).

\bibitem{sin92}
Singer, M., and W. Stulpe, ``Phase-space representations of general statistical physical theories,''
{\it J. Math.\ Phys.}\ {\bf 33}, 131--142 (1992).

\bibitem{stu86}
Stulpe, W., {\it Bedingte Erwartungen und stochastische Prozesse in der generalisierten Wahrscheinlichkeitstheorie --
Beschreibung sukzessiver Messungen mit zuf\"alligem Ausgang},
Thesis, Berlin (1986).

\bibitem{stu92}
Stulpe, W., ``On the Representation of Quantum Mechanics on Phase Space,''
{\it Int.\ J. Theor.\ Phys.}\ {\bf 31}, 1785--1795 (1992).

\bibitem{stu94;98}
Stulpe, W., ``Some Remarks on Classical Representations of Quantum Mechanics,''
{\it Found.\ Phys.}\ {\bf 24}, 1089--1094 (1994);
``On the Representation of Quantum Mechanics on a Classical Sample Space,''
{\it Int.\ J. Theor.\ Phys.}\ {\bf 37}, 349--356 (1998).

\bibitem{stu97}
Stulpe, W., {\it Classical Representations of Quantum Mechanics Related to Statistically Complete Observables},
Wissenschaft und Technik Verlag Berlin, Berlin (1997);
arXiv:quant-ph/0610122v1.

\bibitem{stu08}
Stulpe, W., and P. Busch, ``The structure of classical extensions of quantum probability theory,''
{\it J. Math.\ Phys.}\ {\bf 49}, 032104, 1--22 (2008);
arXiv:0708.1539v1 [quant-ph].

\bibitem{stu01}
Stulpe, W., and M. Swat, ``Quantum States as Probability Measures,''
{\it Found.\ Phys.\ Lett.}\ {\bf 14}, 285--293 (2001).

\bibitem{wer83}
Werner, R., ``Physical Uniformities on the State Space of Nonrelativistic Quantum Mechanics,''
{\it Found.\ Phys.}\ {\bf 13}, 859--881 (1983).

\bibitem{wey27}
Weyl, H., ``Quantenmechanik und Gruppentheorie,''
{\it Z. Phys.}\ {\bf 46}, 1--46 (1927);
{\it The Theory of Groups and Quantum Mechanics},
Dover Publishing Company, New York (1950).

\bibitem{wig32}
Wigner, E. P., ``On the Quantum Correction for Thermodynamic \linebreak Equilibrium,''
{\it Phys.\ Rev.}\ {\bf 40}, 749--759 (1932).

\bibitem{wig71}
Wigner, E. P., ``Quantum Mechanical Distribution Functions Revisited,''
in {\it Perspectives in Quantum Theory}, W.~Yourgrau and A. van der Merwe (eds.), 25--36,
MIT Press, Cambridge, Mass.\ (1971).

\end{thebibliography}
\end{document}